\newtheorem{theorem}{Theorem}[section]
\newtheorem{lemma}[theorem]{Lemma}
\newtheorem{proposition}[theorem]{Proposition}
\newtheorem{Remark}[theorem]{Remark}
\newtheorem{Example}[theorem]{Example}
\newtheorem{Assumption}[theorem]{Assumption}
\newtheorem{definition}[theorem]{Definition}
\newenvironment{remark}{\begin{Remark}\rm}{\end{Remark}}
\numberwithin{equation}{section}
\DeclareMathOperator{\Tr}{Tr}
\newcommand{\eps}{\varepsilon}
\newcommand{\R}{\mathbb{R}}
\newcommand{\C}{\mathbb{C}}
\newcommand{\var}{\mathop{\mathrm{Var}}}
\renewcommand{\Im}{\mathop{\mathrm{Im}}}
\newcommand{\EE}{\mathbb{E}}
\newcommand{\HH}{{\mathbb{H}_+}}
\newcommand{\N}{\mathbb{N}}
\newcommand{\Z}{\mathbb{Z}}
\newcommand{\OO}{\mathcal O}
\renewcommand{\Re}{\mathop \mathrm{Re}}
\renewcommand{\Im}{\mathop \mathrm{Im}}
\title{The Gaussian free field in an interlacing particle system with two jump rates}
\author{ Maurice Duits\footnote{Department of Mathematics, California Institute of Technology, 1200 E. California Blvd, Pasadena CA 91125. E-mail: mduits@caltech.edu }}
\date{}
\begin{document}

\maketitle
\begin{abstract}
We study the fluctuations of a random surface in a stochastic growth model on a system of interlacing
particles placed on a two dimensional  lattice.  There are two different types of particles, one with a low jump rate and the other with a  high jump rate. In the large time  limit, the random surface has a deterministic shape.   Due to the different jump rates, the limit shape and the domain on which it is defined are not smooth. The main result  is that the fluctuations of the random surface  are governed by the Gaussian free field. 
\end{abstract}

%\tableofcontents

\section{Introduction}

The Gaussian free field is commonly assumed to be a universal field describing the fluctuations of random surfaces appearing in a wide class of models in statistical physics. However, rigorous proofs are only known for some particular integrable models. For example, interesting progress has been made on  dimer models on bipartite planar graphs (see \cite{KenyonLecture} for a survey and a list of references). 

In the present paper we  study  a random surface appearing in a  stochastic growth model  on a system of interlacing particles and show that its fluctuations are governed by the Gaussian free field. This work is partially inspired by  \cite{BoFe}, where the authors introduced a general model in 2+1 dimensions that connects the random surfaces of the type that occur in the dimer models, with the random growth of particle systems on a one dimensional lattice (e.g.  exclusion processes).   Here we will discuss   a  particular specialization of that model.

In the initial configuration the particles are placed on the grid $\Z\times \N$ and are densely packed in a triangular region as shown in the left picture in  Figure~\ref{fig:init}.  Each particle has an exponential clock and when the clock rings, the particle attempts to jump to the right by $1$. To ensure that the interlacing of the particles on subsequent horizontal levels is preserved, each attempted jump is subject to certain rules (specified later on, but in short: each particle is blocked by the particles below, but pushes the particles above). The central feature  of our model is the fact that we have two different types of particles: slow and fast. More precisely, we draw a horizontal separating  line and  equip the particles above that line with a higher jump rate than the particles below. As a consequence the particles above the separating line tend to move faster. See \cite{BoFe} for a thorough analysis for the situation in which all particles have the same jump rate.
\begin{figure}[t]\label{fig:init}
	\centering{
	\begin{tikzpicture}
		\node (0,0){
	\includegraphics[scale=0.35]{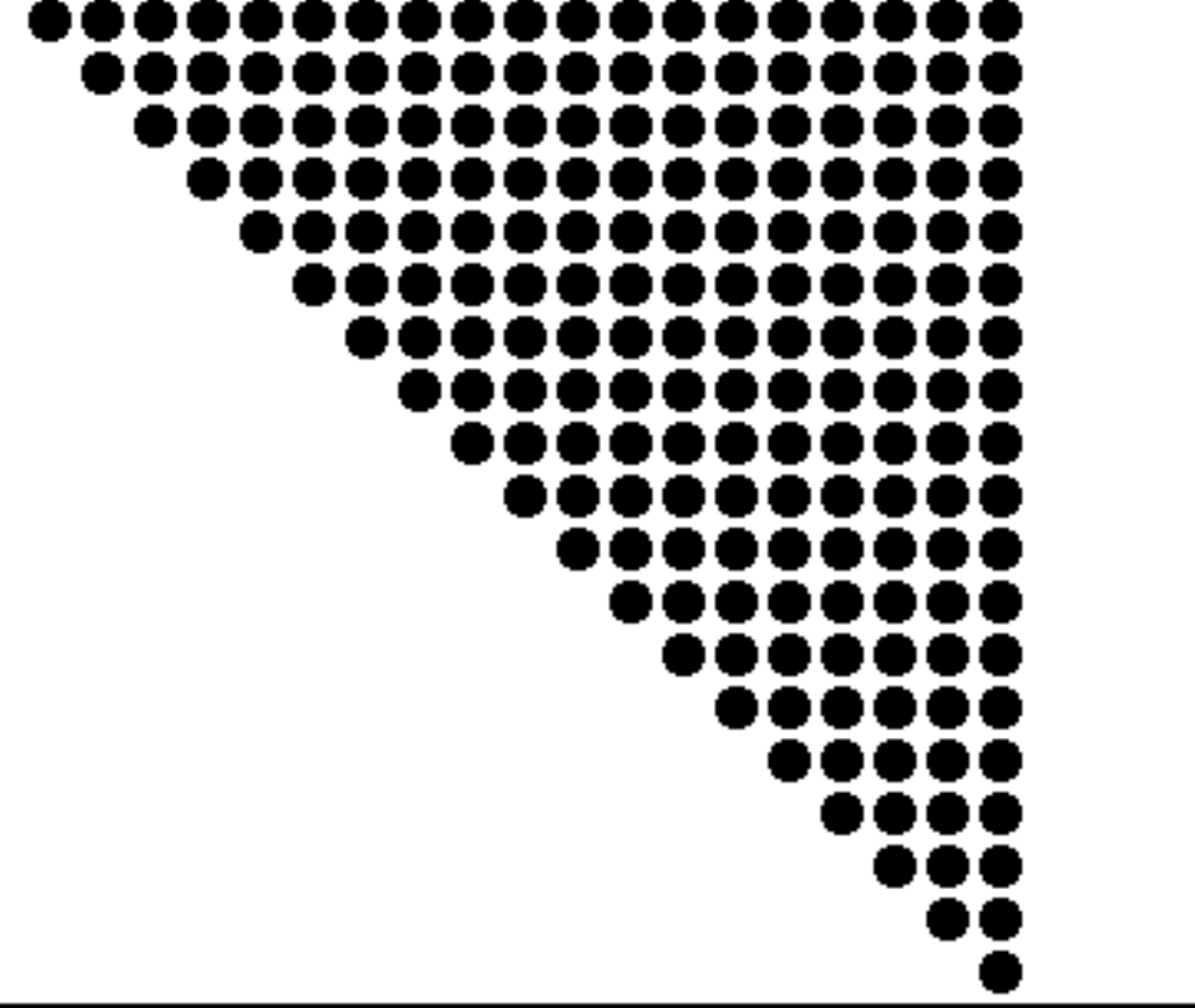}};
	\draw[-,dashed] (-2.3,0.31) --(2.3,0.31);
	\end{tikzpicture}\hspace{1cm}
	\begin{tikzpicture}
	\node (0,0) {\includegraphics[scale=0.35]{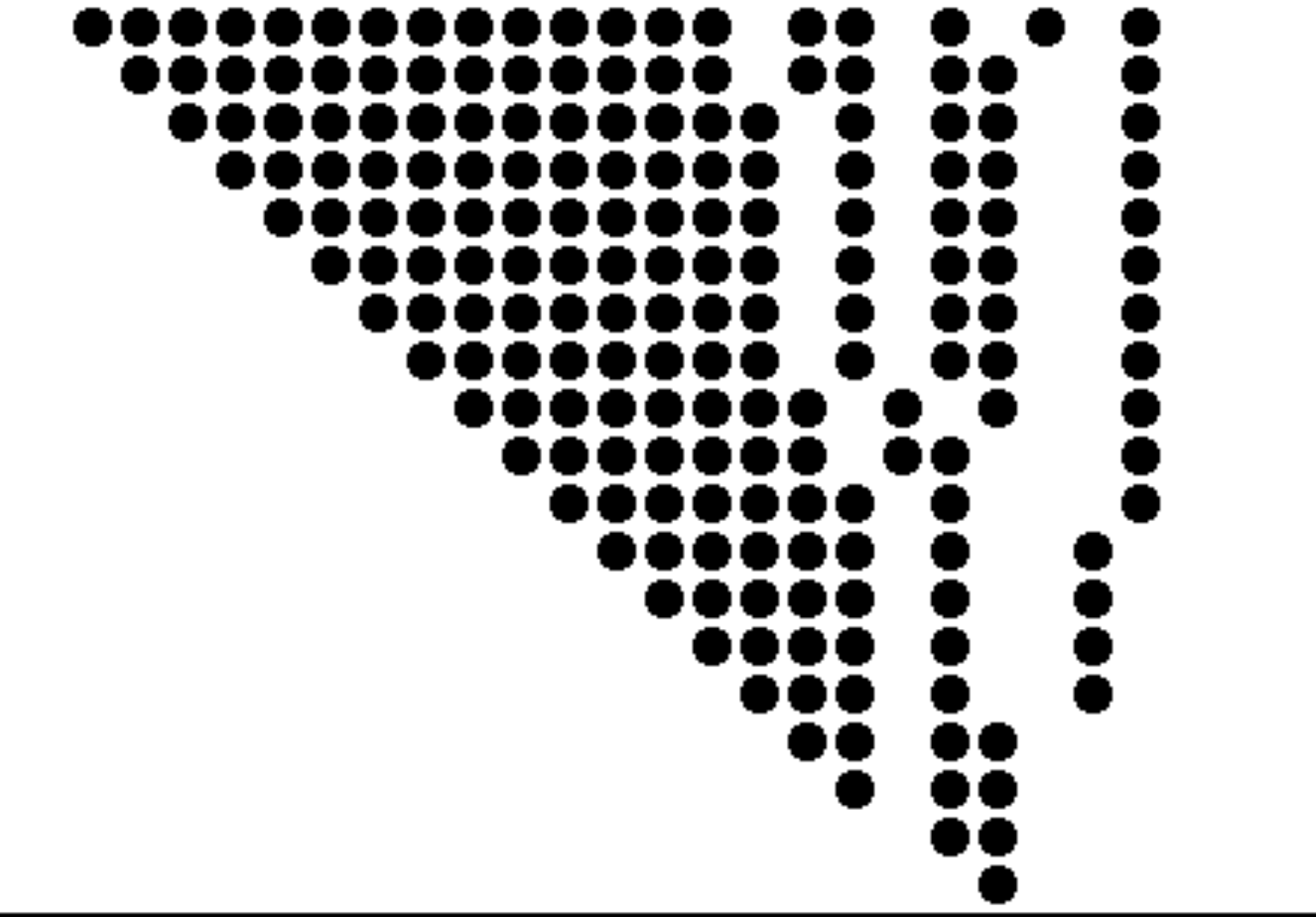}};
	\draw[-,dashed] (-2.8,0.31) --(2.8,0.31);
	\end{tikzpicture}}
	\caption{The left figure shows the initial configuration of the particles. The right figure shows a possible configuration after a short time. The dashed line separates the particles with a higher jump rate (above) from the particles with a lower jump rate (below). }
\end{figure}

 For large time, the particles will be distributed on a large domain that  is shown  in Figure~\ref{fig:domains}. See also Figures~\ref{fig:confbefore} and \ref{fig:confafter} for sample configurations. There exists a critical time, after which the limiting domain develops a cusp. This means that after the critical time, there is a group of fast particles that is drifting away from the slower particles below the separating line, while other fast particles are held back by the slow particles below due to the jump rules. The latter situation is illustrated in Figure~\ref{fig:confafter} and the right picture in Figure~\ref{fig:domains}. 
 
 Our main results on the long time behavior are expressed in terms of a height function that integrates the particle configuration. The graph of the height function defines the random surface that has our interest.   We show that this height function has a deterministic shape in the large time limit.  The difference in speed for the two types of particles induces a   jump discontinuity in the normal to the limit shape at the line separating the slow and fast particles.  

In order to describe the limit shape, we introduce a bijection that  maps the limiting domain to the upper half of the complex plane.  This map constitutes a natural complex structure on the system. Away from the separating line it satisfies the complex Burgers equation (for more details on the connection between this equation and limit shapes see \cite{KO}). Due to the different jump rates, the bijection in our case is only homeomorphism  but not a diffeomorphism (in contrast to the situations in for example \cite{BoFe,Kenyon}).  The main result of the paper is that the fluctuation of the pushforward of the random surface under this map are  governed by the Gaussian free field on the upper half plane.

The limit shape is subject to a   PDE. Away from the separating line, this PDE  brings our model \emph{locally}  in the 2D anisotropic  KPZ class \cite{BS} (see also \cite{BoFe} and the references therein).  This constitutes an important class of  stochastic growth models. In \cite{wolf}, the author used non-rigorous arguments to show that the fluctuations in these models are similar to the fluctuations of random surfaces in the Edwards-Wilkinson class, which predicts the logarithmic behavior of the height fluctuations.  However, it does not predict the full result on the Gaussian Free field, especially in relation with the complex structure. The only model of this type (that the author is aware of) for which the Gaussian free field in the fluctuations is rigorously proved,  is the one jump rate situation analyzed in \cite{BoFe}. In our situation, the PDE  has a jump discontinuity on the separating line. It is not a priori clear if and how this effects the fluctuations. For example,  it is not obvious at first (and perhaps slightly surprising), that the correlations of  the  fluctuations of the height function at a point in the slower part and at a point in the faster part that starts drifting away, are essentially of the same type as for  points that are both in the slow part. The essence of our main result is that in all cases these  correlations are given  by the pullback of the Green's function for the Laplace operator on the upper half plane with Dirichlet boundary conditions by the complex structure on the system.
   
 Finally, we remark that in for example \cite{BoFe,KenyonD,Kenyon,KenyonLecture},  the presence of the Gaussian free field is established by computing the limiting behavior for the moments of  the height fluctuations at multiple points.  In this paper, we follow an alternative approach that exploits the determinantal structure of the process.
  \paragraph{Acknowledgements} I thank  Alexei Borodin for drawing my attention to the subject of this paper   and for   many fruitful discussions.

\section{Statement of results}
 In this section we will state our main results. The proofs can be found in Sections \ref{sec:newco}-\ref{sec:asymptoticsR}.

\begin{figure}[t]
\centering{
\includegraphics[scale=0.3]{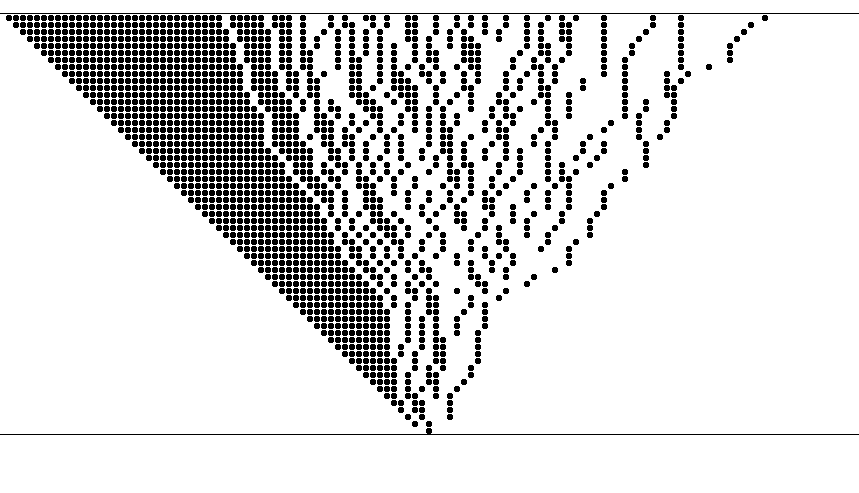}
}
\caption{A random particle configuration before the critical time. The difference in jump rates is exaggerated for illustration purposes.}
\label{fig:confbefore}
\end{figure}

\subsection{The model}

Let us start with describing  the model.  We consider dynamics on a system of interlacing particles placed on the grid $\Z\times \N$. At each point in time, there  are $m$ particles on the horizontal section $(\cdot,m)$ for $m\in \N$. We denote  the horizontal coordinate of the $k$-th particle (counting from left to right) on the $m$-th level  by $x_k^m$ for $k=1,\ldots,m$. In the initial configuration at time $t=0$ the positions are
\begin{align}
	x_k^m(0) =k-m-1, \quad \text{ for } k=1,\ldots m, 
\end{align}
as shown in the left picture of Figure~\ref{fig:init}. Each particle has an exponential clock and when its clock rings, it attempts to jump to the right by one. However, we enforce the following  interlacing condition to hold at all times
\begin{align}\label{eq:interl}
	x_{k}^{m+1} < x_{k}^m \leq x_{k+1}^{m+1}.
\end{align}  
To ensure that this interlacing condition holds,  we impose the following rules. If the exponential clock of the particles positioned at $x_k^m$ rings then
\begin{enumerate}
	\item it remains put if $x_k^{m-1}=x_k^m+1$. 
	\item it jumps to the right by one and so do all particles with horizontal coordinate $x_{k+l}^{m+l}=x_k^m$ for $l=1,2,\ldots$.
\end{enumerate} 
Hence a particle is blocked by particles that are below, but it pushes particles that lie above. 

It remains to set the rate for the exponential clocks. The first $m_0$ horizontal levels have an exponential clock with rate $1$, whereas the particles at the higher levels have rate $2$. Hence the particles at $x_k^m$ with $m\geq m_0$   will attempt to move faster in time. 

We are interested in the long time behavior of the system.

\begin{remark}
The above model can also be established as a tiling of the half plane by lozenges \cite{BoFe}. \end{remark}

 \subsection{Long time behavior}
 
 We proceed with an informal discussion on the long time behavior. Precise statements are formulated in the next paragraph.
 
 In Figures~\ref{fig:confbefore} and \ref{fig:confafter} we give two particular sample configurations. As these figures suggest, for large time  the particles fill a  domain  as given  in Figure~\ref{fig:domains}. The slanted dashed line in Figure~\ref{fig:domains} represents the line $x+m=0$. Note that  at all times, the particles are positioned at the right of that line.  The horizontal dashed line is the line $m=m_0$ that separates the slow and fast particles.   As a consequence  to the different speeds of the particles, there is a transition in the shape of the domain and the density as one crosses the line $m=m_0$. The particles above that line tend to move faster than the particles below. Due to the interlacing condition, the particles more to the left in the upper part of the system are held back  by the particles in the lower part. The particles to the right  move independently  from the lower part. There is a critical  time, after which  particles  more to the right in  the upper part start forming a  group that  starts drifting away and  a cusp appears in the domain. This is illustrated in the  domain at the right in Figure~\ref{fig:domains}. 
 
 \begin{figure}[t]
\centering{
\includegraphics[scale=0.3]{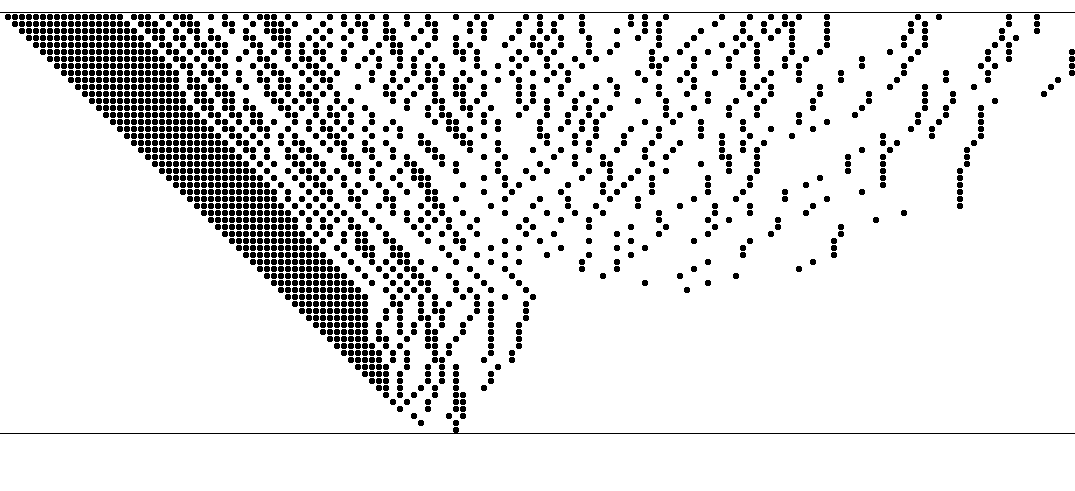}}
\caption{A random particle configuration after the critical time. The difference in jump rates is exaggerated for illustration purposes.}
\label{fig:confafter}
\end{figure}

Although the  focus to this paper is on the global scale and its fluctuations, it is also possible  retrieve interesting  universality    classes at the local scale. As the proofs are standard and the results are not relevant to this paper, we content ourselves with a brief discussion. For points in the bulk, the local correlation are described by extensions of the discrete sine kernel that fall into the class described in \cite{B2}. Near the edge, we obtain extensions of the Airy process (see \cite{PS} and \cite{Fer} for a review). The local correlations near the cusp are determined by the Pearcey process  \cite{ABK,BH,BH1,OR,TW}. At the points where the domain touches the line $x+m=0$, the horizontal axis and the line $m=m_0$ (after the critical time), the local process is governed by the GUE minor process \cite{BoDu,JN,OR0}.

\subsection{The limit shape}

To each particle configuration we assign  a  surface, which is the graph of the height function given in the following definition. \begin{definition} We define the height function $h:\Z\times \N \to \N_0$ by
\begin{align}\label{eq:defh}
	h(x,m)=\# \{x_k^m  \mid x_k^m\geq x\}.
\end{align}
In other words,  $h(x,m)$ counts  the number of particles that are at the right of $(x,m)$ (including the possible particle at $(x,m)$).
\end{definition}

For each configuration, the height function defines a stepped surface. Our first result is an explicit description of the limit shape of the mean  $\EE h$.   To this end, we need the following definitions.  

\begin{definition}
Denote the upper half plane by $\HH= \{z\in \C  \mid \Im z>0\}.$
Define the function $F:\HH \to \C$ by
\begin{multline}\label{eq:defF}
	F(z \mid \xi,\mu)=\\\left\{
	\begin{array}{lc}
		\tau z +\mu \log (z-1)-(\xi+\mu) \log z, &  \mu\leq \mu_0,\\
		\tau z+\mu_0\log(z-1)+(\mu-\mu_0)\log(z-2) -(\xi+\mu) \log z, & \mu>\mu_0,
	\end{array}
	\right.
\end{multline} 
for $z\in \HH$. To every $(\xi,\mu)$ there exists at most one $\Omega\in \HH$ such that $$F'(\Omega\mid  \mu, \xi)=0.$$ If it exists we denote it by $\Omega(\xi,\mu)$ and we define 
\begin{align}
	\mathcal D= \{(\xi,\mu)   \mid \Omega(\xi,\mu) \in \HH \text{ exists} \}.
\end{align}  
\end{definition}
The fact that to every $(\xi,\mu)$ there exists at most one $\Omega\in \HH$ such that $F'(\Omega \mid \mu, \xi)=0$, follows immediately after observing that $F'(\Omega \mid \xi,\mu)=0$ is equivalent to a quadratic (in case $\mu\leq \mu_0$) or cubic (in case $\mu>\mu_0$) equation in $\Omega$ with real coefficients. Moreover, we have the following.
\begin{figure}[t]	\centering{
	\begin{tikzpicture} 
	\node (0,0) {\includegraphics[scale=0.22]{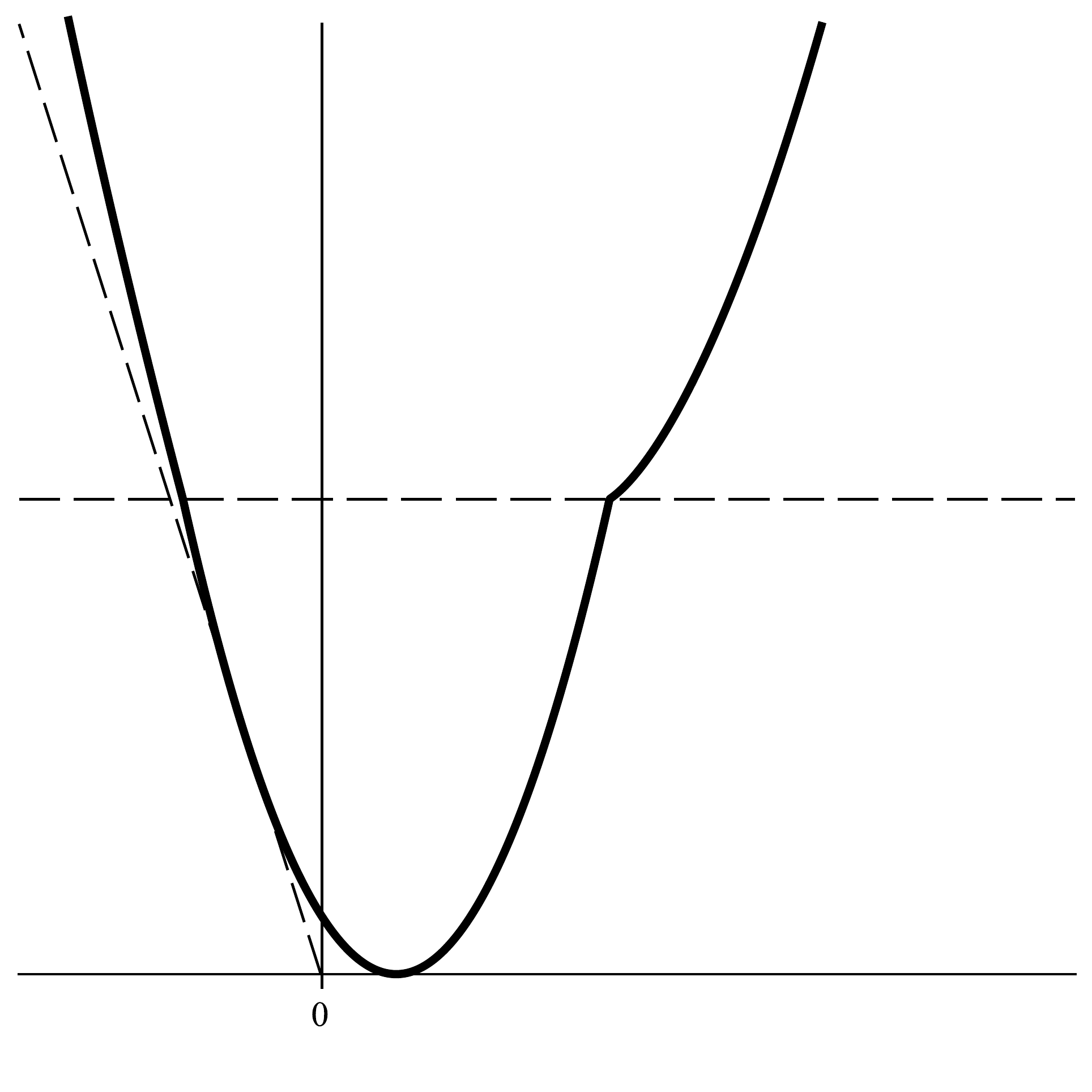}};
	\draw (-3,1.4) node {\tiny{$x+m=0$}};
	\draw (2,0.4) node {\tiny{$m=m_0$}};
		\draw (-0.5,1.4) node {$\mathcal D$};
	\end{tikzpicture}\hspace{0.4cm}
	\begin{tikzpicture} 
	\node (0,0) {\includegraphics[grid,scale=0.22]{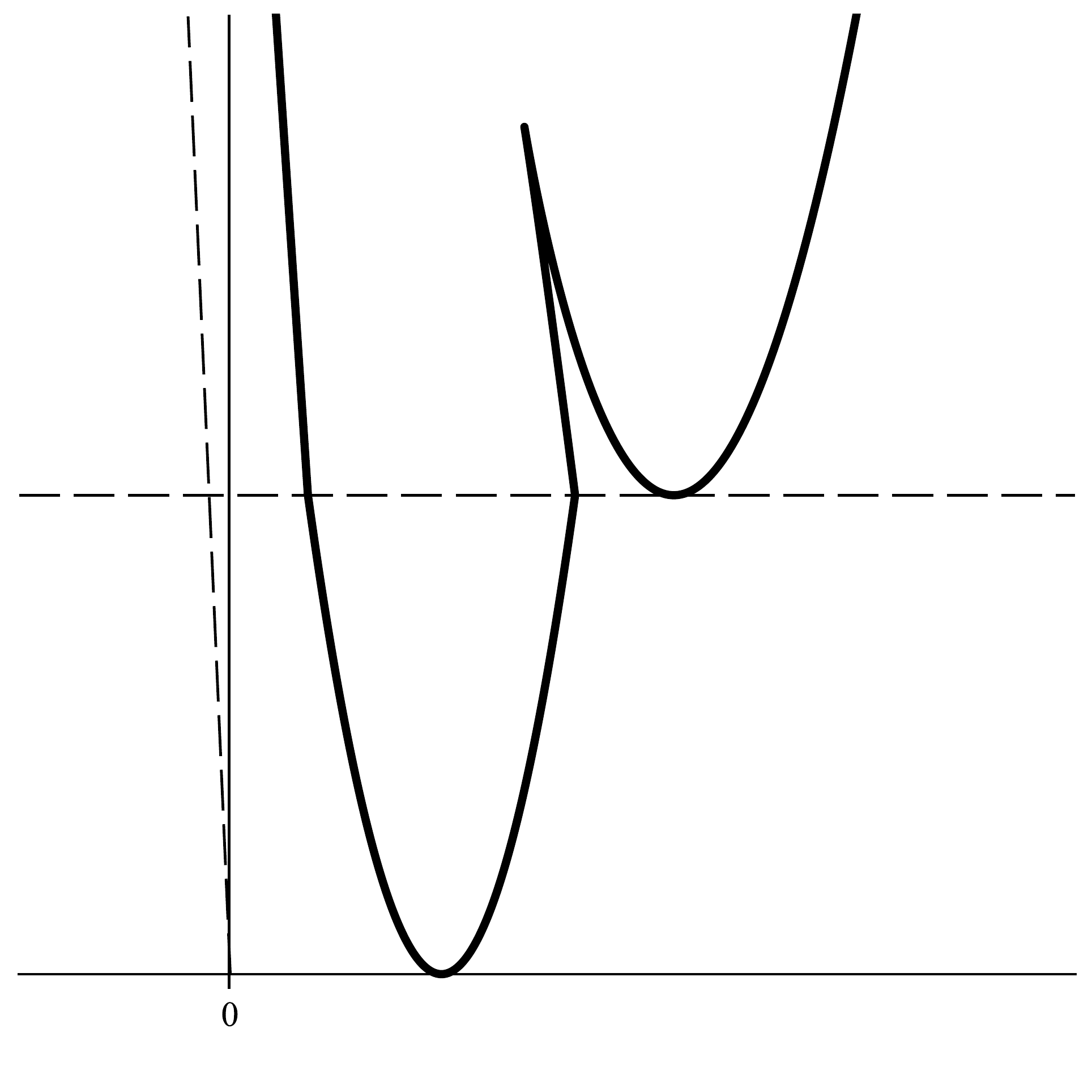}};
	\draw (2,0.4) node {\tiny{$m=m_0$}};
	\draw (-0.6,1.5) node {$\mathcal D$};

	\draw (-2.5,2) node {\tiny{$x+m=0$}};

	\end{tikzpicture}
		\caption{Two possible limiting domains $\mathcal D$, before (l) and after (r) the critical time. In the right figure, the particles in the upper part that are not blocked by the particles in  the lower part are drifting away from the lower part.  }\label{fig:domains}
}
\end{figure}

\begin{proposition}\label{prop0}
The map  $(\xi,\mu)\mapsto \Omega(\xi,\mu)$ is a homeomorphism from $\mathcal D$ to $\HH$. 
\end{proposition}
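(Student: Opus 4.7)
The plan is to build an explicit inverse $\Psi : \HH \to \mathcal D$ by solving the critical point equation $F'(\Omega\mid\xi,\mu)=0$ for $(\xi,\mu)$ in terms of a prescribed root $\Omega=u+iv\in\HH$. In each of the two regimes $\mu\leq\mu_0$ and $\mu>\mu_0$, decomposing $F'(\Omega)=0$ into its real and imaginary parts gives two real equations that are linear in the unknowns $(\xi,\mu)$ with coefficients depending rationally on $(u,v)$, and these can be solved in closed form.

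Carrying out the calculation, in the regime $\mu\leq\mu_0$ the imaginary part of the equation $\tau + \mu/(\Omega-1) - (\xi+\mu)/\Omega = 0$ gives $\xi+\mu=\mu|\Omega|^2/|\Omega-1|^2$; feeding this back into the real part forces $\mu=\tau|\Omega-1|^2$ and $\xi=\tau(2u-1)$. The constraint $\mu\leq\mu_0$ translates to $|\Omega-1|^2\leq\mu_0/\tau$, so this branch parameterizes the intersection of $\HH$ with a disc around $z=1$. An analogous calculation in the regime $\mu>\mu_0$ produces $\mu-\mu_0=\tfrac12|\Omega-2|^2\bigl(\tau-\mu_0/|\Omega-1|^2\bigr)$ together with an explicit formula for $\xi$, and the constraint $\mu>\mu_0$ now becomes $|\Omega-1|^2>\mu_0/\tau$, which is exactly complementary. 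On the common boundary circle $|\Omega-1|^2=\mu_0/\tau$ the two formulas both return $\mu=\mu_0$ and agree in $\xi$, so the piecewise map $\Psi$ is continuous on all of $\HH$.

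By construction $\Psi(\Omega)$ lies in $\mathcal D$ and $\Omega(\Psi(\Omega))=\Omega$, and the at-most-one property recorded just above the proposition forces $\Psi\circ\Omega$ to be the identity on $\mathcal D$. Continuity of the forward map $\Omega$ then follows from the implicit function theorem applied to $F'(z\mid\xi,\mu)=0$, using that $F''(\Omega)\neq 0$ at any root $\Omega\in\HH$: the numerator of $F'$ is a polynomial of degree $2$ or $3$ (according to the regime) with real coefficients, so a non-real double root would be accompanied by its complex conjugate double root, violating the degree bound. The main technical point, and the reason the bijection is only a homeomorphism, is the matching across $\mu=\mu_0$: the two branches of $\Psi$ agree in value but their first partial derivatives do not, so the inverse function theorem cannot be used globally, and continuity through the separating line must be verified directly from the explicit formulas rather than inferred from smoothness on each side.
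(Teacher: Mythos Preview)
Your proof is correct and follows essentially the same approach as the paper: solve $F'(\Omega\mid\xi,\mu)=0$ for $(\xi,\mu)$ as a linear system in each regime, observe that the resulting regions partition $\HH$ along the circle $|\Omega-1|^2=\mu_0/\tau$, and verify the two formulas match there. The paper presents the linear solve in matrix form and dismisses continuity as ``immediate'', whereas you spell out the $F''(\Omega)\neq 0$ degree-counting argument and the cross-regime matching more carefully; the substance is the same.
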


%\begin{figure}
%\centering
%\input{imageD}
%\caption{The map $(\xi,\mu)\mapsto \HH$ maps $\mathcal D$ onto the upper half plane. Moreover, it map $\mathcal D_j$ onto $\Omega(\mathcal D_j)$ for $j=1,2$. The domain $\Omega(\mathcal D_2)$ is the part  in the upper half plane of the circle with center $1$ and radius $\sqrt{\mu_0/\tau}$.}
%\end{figure}

\begin{remark}\label{remark1}
As in \cite[eq (1.13)]{BoFe} we note that that map $\Omega:(\xi,\mu)\mapsto \Omega(\xi,\mu)$ solves the complex Burgers equation \cite{KO}
\begin{align}
\frac{\sigma}{(\sigma-\Omega)} \frac{\partial \Omega}{\partial \mu} = \frac{\partial \Omega}{\partial \xi},
\end{align}
where $\sigma=1$ if $\mu <\mu_0$ and $\sigma=2$ is $\mu\geq \mu_0$. The jump discontinuity in $\sigma$ is due to the different jump rates. 
\end{remark}

\begin{remark}
The boundary $\partial \mathcal  D$ is the set of all $(\xi,\mu)$ such that $\Omega(\xi,\mu)$ is a double critical point of $F$. For real values of $\xi$ and $\mu$, the only possible double critical points are real. This gives a way of explicitly computing the boundary. For $\Omega\in \R$, we solve the system $F'(\Omega)=F''(\Omega)=0$ for $(\xi,\mu)$. This method was used to draw the pictures in Figure~\ref{fig:domains}.  Moreover we have the following identification of points. The points $\Omega=0$ and $1$ correspond to the points $\partial \mathcal D$ touching the line $x+m=0$ and the horizontal axis. If we are beyond the critical time, we also have that $\Omega=2$ corresponds to the boundary point touching the line $m=m_0$ in the cloud starts drifting away. The cusp  corresponds to a point $\Omega\in (1,2)$. At the critical time (the birth of the cusp),  the cusp  and the lowest point in the cloud drifting away meet and correspond to $\Omega=2$.
\end{remark}
 In the following theorem we present our main result on the limit shape.
\begin{theorem}\label{th:macro}
Let $h$ be as defined in \eqref{eq:defh} and set 
\begin{align}\label{eq:thmacroscale}
	\left\{\begin{array}{l}
	t=[L \tau]\\
	m=[L\mu]\\
	m_0=[L \mu_0]\\ 
	x=[L \xi]\\
\end{array}\right.
\end{align}
Then  we have that 
\begin{align}
\lim_{L\to \infty} \frac{1}{L} \EE h(x,m)= \frac{1}{\pi} \Im F(\Omega(\xi,\mu))=:\overline{h}(\xi,\mu),
\end{align}
for $(\xi,\mu)\in \mathcal D$. \end{theorem}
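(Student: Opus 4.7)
The plan is to exploit the determinantal structure of the particle system and reduce the asymptotic analysis to a saddle point problem on a double contour integral. Writing
\begin{align*}
\EE h(x,m) = \EE\,\#\{k : x_k^m(t)\geq x\} = \sum_{y\geq x} K_t\bigl((y,m),(y,m)\bigr),
\end{align*}
where $K_t$ denotes the one-time correlation kernel of the determinantal point process, the first step is to derive an explicit double contour integral representation for $K_t$ adapted to the two-rate setting. The clock rates contribute a factor $(z-1)^{-1}$ per level $m\leq m_0$ and $(z-2)^{-1}$ per level $m>m_0$ (with reciprocal factors in the $w$ variable), while the time evolution contributes $e^{tz}$ and the densely packed initial condition contributes $z^{-(x+m)}$. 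Under the rescaling \eqref{eq:thmacroscale}, the total exponent of the integrand is precisely $L\bigl(F(w\mid\xi,\mu)-F(z\mid\xi,\mu)\bigr)$ with $F$ as in \eqref{eq:defF}.

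The second step is to perform the sum over $y\geq x$ inside the double integral. The $y$-dependence appears only through $(w/z)^y$, so provided the contours are arranged with $|w|<|z|$, the geometric series sums to $(w/z)^x/(1-w/z)$. This yields a double contour integral representation for $\EE h(x,m)$ whose integrand is $e^{L(F(w)-F(z))}$ times a rational prefactor of order $O(1)$ as $L\to\infty$. By Proposition \ref{prop0}, the saddle equation $F'(z\mid\xi,\mu)=0$ has a unique root $\Omega=\Omega(\xi,\mu)\in\HH$ together with its conjugate $\bar\Omega$. I would then deform the $w$-contour to descend through $\Omega$ and $\bar\Omega$ along the steepest descent path of $\Re F$, and the $z$-contour to ascend through the same two points. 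The deformed contours cross each other along an arc from $\bar\Omega$ to $\Omega$, and the residue picked up at the pole $z=w$ produces an arc integral. Using that $\partial_\xi F(z)=-\log z$ together with $F'(\Omega)=0$ gives $\tfrac{d}{d\xi}\Im F(\Omega(\xi,\mu)) = -\arg\Omega$; integrating from $\xi$ to $+\infty$, and using that $\Im F(\Omega)\to 0$ as $\xi\to\infty$ (where $\Omega$ leaves $\HH$), identifies the residue contribution as $L\cdot\tfrac{1}{\pi}\Im F(\Omega(\xi,\mu))+o(L)$. The remaining integrals along the steepest descent contours are $O(1)$ and thus negligible after dividing by $L$.

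The main obstacle is the geometry of the steepest descent contours. They must simultaneously pass through $\Omega$ and $\bar\Omega$ in the required steepest descent/ascent directions and also encircle the appropriate set of poles of the integrand (at $0$, $1$, and possibly $2$) with the correct relative ordering. The piecewise nature of $F$ across $\mu=\mu_0$ means the saddle equation changes from a quadratic to a cubic as one crosses the separating line, and the post-critical regime introduces both a cusp in $\partial\mathcal D$ and an additional pole at $z=2$, each of which constrains the admissible contour topology. A uniform treatment of the contour geometry across the full domain $\mathcal D$ and both regimes, together with careful tail estimates on the deformed contours, constitutes the technical heart of the argument.
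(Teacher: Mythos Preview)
Your strategy is correct and reaches the result, but it inverts the order of operations relative to the paper's proof. The paper first performs the steepest descent analysis on the diagonal kernel $K(x,m,x,m)$ itself: after deforming $\Gamma_0$ and $\Gamma_{1,2}$ to pass through the saddles $\Omega,\bar\Omega$, the residue from the simple pole $1/(w-z)$ produces the single integral $\tfrac{1}{2\pi i}\int_{\bar\Omega}^{\Omega}\tfrac{dw}{w}=\tfrac{1}{\pi}\arg\Omega$, which is the limiting particle density. Only then is $\tfrac{1}{L}\sum_{y\geq x}K$ treated as a Riemann sum for $\int_\xi^\infty\tfrac{1}{\pi}\arg\Omega(\xi',\mu)\,d\xi'$, and the identity $\partial_\xi\Im F(\Omega)=-\arg\Omega$ converts this into $\tfrac{1}{\pi}\Im F(\Omega)$. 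Your route sums the geometric series first, which replaces the simple pole by a double pole $1/(w-z)^2$; the residue on the arc then carries the derivative $-LF'(w)$, and integrating $F'$ from $\bar\Omega$ to $\Omega$ yields $L\cdot\tfrac{1}{\pi}\Im F(\Omega)$ in one stroke. Your version bypasses the Riemann sum and the accompanying need for uniform density estimates up to $\partial\mathcal D$ (which the paper has to invoke separately), at the cost of not isolating the intermediate statement that the limiting density equals $\tfrac{1}{\pi}\arg\Omega$. One small correction: with the kernel \eqref{eq:defKdoubleint} the diagonal $y$-dependence is $(z/w)^y$, so convergence of $\sum_{y\geq x}$ requires $|z|<|w|$, i.e.\ $\Gamma_0$ must first be deformed to encircle $\Gamma_{1,2}$ (as in Lemma~\ref{lem:KsquareK} for $m_1\geq m_2$), not the other way around.
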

The limiting height function is  a continuous function of the scaled variables. However, the normal to the surface constructed out of its graphs is not, as can be seen from the following result.
\begin{proposition}\label{prop:normal}
Let $n=(n_1,n_2,n_3)$ be the normal to the limiting surface defined by the graph of $\overline h$ defined in Theorem \ref{th:macro}. Normalize  $n$  such that $n_3=1$. Then we have
\begin{align}
\begin{pmatrix}
n_1\\
n_2\\
n_3
\end{pmatrix}
=\begin{pmatrix}
\theta_1/\pi\\
-\theta_3/\pi\\
1
\end{pmatrix},
\end{align} 
where $\theta_j$ are the  angles of the triangle formed by 
\begin{enumerate}
\item $0,1$ and $\Omega(\xi,\mu)$ if $\mu\leq \mu_0$,
\item $0,2$ and $\Omega(\xi,\mu)$ if $\mu>\mu_0$.
\end{enumerate}
See also Figure~\ref{fig:triangle}.
\end{proposition}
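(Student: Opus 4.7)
The plan is to differentiate the explicit formula $\overline h(\xi,\mu) = \pi^{-1} \Im F(\Omega(\xi,\mu)\mid \xi,\mu)$ using an envelope-type argument and then identify the resulting expressions with the interior angles of the triangle described in the statement.

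Since the surface of interest is the graph of $\overline h$ over $\mathcal D$, the upward normal normalized so that $n_3=1$ is $(-\partial_\xi \overline h,\,-\partial_\mu \overline h,\,1)$. By the chain rule,
\begin{equation*}
\partial_\xi \overline h \;=\; \frac{1}{\pi}\Im\Bigl( F'(\Omega\mid \xi,\mu)\,\partial_\xi\Omega + (\partial_\xi F)(\Omega\mid \xi,\mu)\Bigr),
\end{equation*}
and the first term vanishes by the defining identity $F'(\Omega\mid \xi,\mu)=0$ from the definition of $\Omega$. A direct computation on either branch of \eqref{eq:defF} gives $\partial_\xi F=-\log z$, so $\partial_\xi \overline h=-\pi^{-1}\arg\Omega$. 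The same argument yields $\partial_\mu \overline h = \pi^{-1}\Im\,(\partial_\mu F)(\Omega\mid\xi,\mu)$, where $\partial_\mu F=\log(z-1)-\log z$ for $\mu<\mu_0$ and $\partial_\mu F=\log(z-2)-\log z$ for $\mu>\mu_0$. Thus $\partial_\mu \overline h = \pi^{-1}(\arg(\Omega-a)-\arg\Omega)$ with $a=1$ or $a=2$ depending on the case.

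It remains to recognize these logarithmic differences geometrically. For $\Omega\in \HH$ and $a\in\{1,2\}$, the triangle with vertices $0,\,a,\,\Omega$ has interior angle $\arg\Omega$ at $0$ (the angle made by the ray $[0,\Omega]$ with the positive real axis), $\pi-\arg(\Omega-a)$ at $a$, and hence $\arg(\Omega-a)-\arg\Omega$ at $\Omega$, the three summing to $\pi$. Calling these $\theta_1,\theta_2,\theta_3$ in this order (vertices $0$, $a$, $\Omega$) immediately matches $n_1=\theta_1/\pi$ and $n_2=-\theta_3/\pi$, which is the claim.

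There is no real obstacle here: the only step that requires any thought is the envelope observation $F'(\Omega)=0$ eliminating the chain-rule contribution from $\partial\Omega$, after which everything reduces to reading off the partial derivatives of $F$ in $\xi$ and $\mu$ and a purely elementary plane-geometry identification. The jump in $\partial_\mu F$ across $\mu=\mu_0$ is exactly what produces the announced discontinuity of the normal along the separating line.
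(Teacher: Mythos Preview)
Your proof is correct and follows essentially the same approach as the paper: compute the normal as $(-\partial_\xi\overline h,-\partial_\mu\overline h,1)$, use the envelope identity $F'(\Omega)=0$ to kill the chain-rule term, read off $\partial_\xi\overline h=-\pi^{-1}\arg\Omega$ and $\partial_\mu\overline h=\pi^{-1}(\arg(\Omega-\sigma)-\arg\Omega)$, and then identify these with the triangle angles. Your write-up is slightly more explicit about the geometry of the angle identification, but the argument is the same.
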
 

\begin{figure}
\centering
\ifx\JPicScale\undefined\def\JPicScale{0.7}\fi
\unitlength \JPicScale mm
\begin{picture}(150,42)(0,30)
\linethickness{0.3mm}
\multiput(10,40)(0.12,0.24){83}{\line(0,1){0.24}}

\multiput(40,40)(6,0){5}{\line(1,0){3}}

\put(10,35){\makebox(0,0)[cc]{$0$}}
\put(40,35){\makebox(0,0)[cc]{$1$}}
\put(70,35){\makebox(0,0)[cc]{$2$}}

\put(14,42.5){\makebox(0,0)[cc]{$\theta_1$}}
\put(34,42.5){\makebox(0,0)[cc]{$\theta_2$}}
\put(21,56){\makebox(0,0)[cc]{$\theta_3$}}

\put(94,42.5){\makebox(0,0)[cc]{$\theta_1$}}
\put(137,42.5){\makebox(0,0)[cc]{$\theta_2$}}
\put(102,56){\makebox(0,0)[cc]{$\theta_3$}}

\put(20,65){\makebox(0,0)[cc]{$\Omega$}}
\put(90,35){\makebox(0,0)[cc]{$0$}}
\put(120,35){\makebox(0,0)[cc]{$1$}}
\put(150,35){\makebox(0,0)[cc]{$2$}}
\put(100,65){\makebox(0,0)[cc]{$\Omega$}}

\put(10,40){\circle*{1.5}}
\put(40,40){\circle*{1.5}}
\put(70,40){\circle*{1.5}}
\put(20,60){\circle*{1.5}}

\put(90,40){\circle*{1.5}}
\put(120,40){\circle*{1.5}}
\put(150,40){\circle*{1.5}}
\put(100,60){\circle*{1.5}}

\linethickness{0.3mm}
\multiput(20,60)(0.12,-0.12){167}{\line(1,0){0.12}}

\linethickness{0.3mm}
\put(10,40){\line(1,0){30}}

\linethickness{0.3mm}
\multiput(90,40)(0.12,0.24){83}{\line(0,1){0.24}}

\linethickness{0.3mm}
\multiput(100,60)(0.3,-0.12){167}{\line(1,0){0.3}}

\linethickness{0.3mm}
\put(90,40){\line(1,0){60}}

\end{picture}
\caption{The angles $\theta_j$ in Proposition \ref{prop:normal}} \label{fig:triangle}
\end{figure}
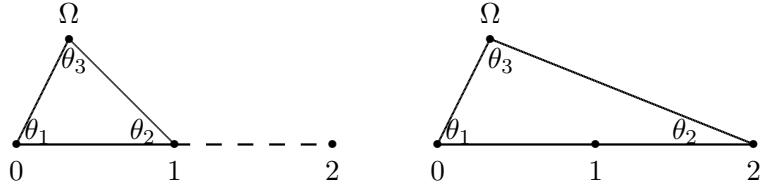
The discontinuity in the limit shape can also be seen from the PDE that it satisfies. See also Remark \ref{remark1}.
\begin{proposition}\label{prop:PDE}
The limiting mean height function $\overline{h}$ from Theorem \ref{th:macro} satisfies the PDE
\begin{align}
\frac{{\partial} \overline{h}}{{\partial }\tau}=\begin{cases}
f(\nabla \overline{h}),& \mu<\mu_0 \\
2f(\nabla \overline{h}),& \mu>\mu_0 \end{cases}
\end{align}
where
\begin{align}
f(x,y)=-\frac{\sin \pi x  \sin\pi(y-x)}{\pi \sin \pi y}.
\end{align}
\end{proposition}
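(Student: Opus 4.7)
The plan is to compute $\partial_\tau \overline h$, $\partial_\xi \overline h$, and $\partial_\mu \overline h$ via the envelope theorem, then recognize the resulting expressions as the angles of the triangle from Proposition \ref{prop:normal}, and finally reduce the PDE to an elementary trigonometric identity via the law of sines together with the area formula in that triangle.

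First, since $\Omega=\Omega(\xi,\mu)$ is a critical point of $F(\cdot\mid\xi,\mu)$, the chain rule contribution through $\Omega$ vanishes, so for each parameter $v\in\{\tau,\xi,\mu\}$,
\begin{equation*}
\frac{\partial \overline h}{\partial v} \;=\; \frac{1}{\pi}\,\Im\!\left.\frac{\partial F}{\partial v}(z\mid \xi,\mu)\right|_{z=\Omega}.
\end{equation*}
Reading the partials from \eqref{eq:defF} gives $\partial_\tau \overline h=\Im\Omega/\pi$, $\partial_\xi \overline h=-\arg\Omega/\pi$, and $\partial_\mu \overline h=(\arg(\Omega-a)-\arg\Omega)/\pi$, with $a=1$ when $\mu<\mu_0$ and $a=2$ when $\mu>\mu_0$; here $\arg$ denotes the principal branch taking values in $(0,\pi)$ on $\HH$.

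Next, I identify these arguments as the angles of the triangle $T_a$ with vertices $0,a,\Omega$ from Proposition \ref{prop:normal}: $\theta_1=\arg\Omega$, $\theta_2=\pi-\arg(\Omega-a)$, and $\theta_3=\pi-\theta_1-\theta_2=\arg(\Omega-a)-\arg\Omega$. Hence $\pi\,\partial_\xi \overline h=-\theta_1$, $\pi\,\partial_\mu \overline h=\theta_3$, and $\pi(\partial_\mu-\partial_\xi)\overline h=\pi-\theta_2$. Substituting into the definition of $f$ and using $\sin(-\theta_1)=-\sin\theta_1$ together with $\sin(\pi-\theta_2)=\sin\theta_2$ reduces $f(\nabla\overline h)$ to $\sin\theta_1\sin\theta_2/(\pi\sin\theta_3)$.

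Finally, the law of sines in $T_a$ reads $|\Omega-a|/\sin\theta_1=|\Omega|/\sin\theta_2=a/\sin\theta_3$, and the two standard expressions for the area of $T_a$ give $\tfrac12 a\,\Im\Omega=\tfrac12|\Omega|\,|\Omega-a|\sin\theta_3$. Combining these yields $\sin\theta_1\sin\theta_2/\sin\theta_3=\Im\Omega/a$, so $f(\nabla\overline h)=\Im\Omega/(\pi a)$. This is exactly $\partial_\tau \overline h$ when $a=1$ and half of $\partial_\tau \overline h$ when $a=2$, proving the claim in both regimes. The only subtlety — not really an obstacle — is to fix the principal branch of $\log$ in \eqref{eq:defF} consistently with $\Omega\in\HH$ so that all $\theta_j\in(0,\pi)$ and the geometric picture of Figure \ref{fig:triangle} applies; the PDE of course holds only away from the line $\mu=\mu_0$, across which $\nabla\overline h$ already fails to be continuous by Proposition \ref{prop:normal}.
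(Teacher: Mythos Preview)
Your proof is correct and follows essentially the same route as the paper: both compute the partials of $\overline h$ via the envelope theorem (using $F'(\Omega)=0$), identify them with the angles $\theta_j$ of the triangle in Proposition~\ref{prop:normal}, and then reduce the identity $\sin\theta_1\sin\theta_2/\sin\theta_3=\Im\Omega/\sigma$ to elementary triangle geometry. The only cosmetic difference is that the paper gets this last identity directly from the law of sines together with $\Im\Omega=|\Omega|\sin\theta_1$, whereas you route it through the area formula, which amounts to the same thing.
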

One easily computes that the signature of the Hessian of $f$ is $(1,-1$).  This implies that our model is locally in the  class of growth models described by the anisotropic KPZ equation \cite{BS} as mentioned in the Introduction.  However, the PDE has a jump discontinuity at the separating line.

\subsection{Gaussian free field}

The main result of the paper is on the fluctuation of $h$ around $\EE h$, which turn out to be governed by the Gaussian free field. For a survey on the Gaussian free field see \cite{S}. 

Let us first  recall some definitions. Denote  the Laplace operator on $\HH$ with Dirichlet boundary conditions by $\triangle$  and consider the Sobolev space $\mathcal W_0$ on $\HH$ defined as the completion of the space of smooth functions with compact support in $\HH$ equipped with inner product\begin{align}\label{eq:sobolevinpr}
(\phi_1,\phi_2)_\nabla=\int_\HH \nabla \phi_1 \cdot \nabla\phi_2.\end{align}
Note that by integrating by parts and using the fact we have Dirichlet boundary conditions we have
\[(\phi_1,\phi_2)_\nabla=-\int_\HH \phi_1 \triangle \phi_2,\]
for sufficiently smooth $\phi_2$. 

The Gaussian free field  on $\HH$ is a collection of centered Gaussian  random variables $\{\langle F,\phi\rangle_\nabla \}_{\phi\in \mathcal W_0}$ (indexed by the Sobolev space $\mathcal W_0$) and covariance 
\begin{align}\label{eq:GFFcovariance}
\EE \left[\langle F,\phi_1\rangle_\nabla \langle F,\phi_2\rangle_\nabla\right] =(\phi_1,\phi_2)_\nabla.
\end{align}

The main result of the paper is that  the fluctuations $h-\EE h$ in the large $L$ limit are governed by the Gaussian free field. More precisely, the push forward of $h-\EE h$ under the map $(\xi,\mu)\mapsto \Omega(\xi,\mu)$ converges to the Gaussian free field. The proof that we present here, differs from the usual approach to compute the moments of the height fluctuations at different points. Instead, inspired by \eqref{eq:GFFcovariance} we define the pairing of the height function with a test function and compute the limiting behavior of the characteristic function.  Care should be taken here, since the function $h-\EE h$ is defined on a discrete set. We therefore restrict ourselves  to test functions $\phi$ that are $C^2$ and have compact support in ${\HH}$. Then we define the pairing between $h-\EE h$ and $\phi$ by a discretization of the Sobolev inner product \eqref{eq:sobolevinpr}  and including the pushforward with $\Omega$.

For a function $F:\Z\times \N\to \R$ and a $C^2$ function $\phi$ with compact support in  $\HH$, define the pairing $\langle F, \phi\rangle$ by 
\begin{align} \label{eq:discretepairing}
\langle F,\phi\rangle= -\frac{\sqrt{\pi}}{ L^2}\sum_{(x,m)\in L \mathcal D}F(x,m) \triangle\phi\left(\Omega(x/L,m/L)\right) J(x/L,m/L),
\end{align}
where $J$ stands for the Jacobian of the map $(\xi,\mu)\mapsto \Omega(\xi,\mu)$, i.e.
\begin{align}\label{eq:Jacobian}
J(\xi,\mu)=\left|\det \begin{pmatrix}
\frac{\partial \Re \Omega}{\partial \xi} &\frac{\partial \Re \Omega}{\partial \mu}\\
\frac{\partial \Im \Omega}{\partial \xi}& \frac{\partial \Im \Omega}{\partial \mu}
\end{pmatrix} \right|
\end{align}

The main result of this paper is formulated in the following theorem.
\begin{theorem}\label{th:main}
Let $\phi$ be a $C^2$ function with compact support on $\HH$ and $h$ be the height function defined in \eqref{eq:defh}. Then with $\langle \cdot, \cdot \rangle$ as in \eqref{eq:discretepairing} we have
\begin{equation}\label{eq:thmain}
\lim_{L\to \infty} \EE [\exp {\rm i} t \langle h-\EE h,\phi\rangle]={\rm e}^{-\frac{\|\phi\|_\nabla^2 t^2 }{2}} .
\end{equation}
Hence, as $L\to \infty$, the pushfoward of the random surface defined by $h-\EE h$ by the map $\Omega:\mathcal D\to \HH$ converges to the Gaussian free field on $\HH$.
\end{theorem}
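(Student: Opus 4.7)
The plan is to prove \eqref{eq:thmain} by expanding the characteristic function into cumulants $\kappa_n^L$ of $\langle h-\EE h,\phi\rangle$, and showing that $\kappa_2^L\to\|\phi\|_\nabla^2$ while $\kappa_n^L\to 0$ for all $n\geq 3$. The main tool is the determinantal structure of the configuration $\mathcal X(t)=\{(x_k^m(t),m)\}$ at the scaled time $t=[L\tau]$, together with a double contour integral representation of its correlation kernel $K_L$ whose phase is essentially the function $F$ from \eqref{eq:defF}, with critical points in $\HH$ given by $\Omega(\xi,\mu)$.

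\textbf{Reduction and cumulant formula.} A discrete summation by parts in the $x$-variable rewrites
\[
\langle h-\EE h,\phi\rangle=\sum_{(x,m)}\Psi_L(x,m)\bigl(\mathbf 1_{(x,m)\in\mathcal X}-\rho_L(x,m)\bigr),
\]
where $\Psi_L(x,m)=-\frac{\sqrt\pi}{L^2}\sum_{x'\leq x}\triangle\phi(\Omega(x'/L,m/L))J(x'/L,m/L)$ is the discrete antiderivative in $x$ of the weight in \eqref{eq:discretepairing}. Because $\phi$ is compactly supported in $\HH$, $\Psi_L$ is compactly supported in $L\mathcal D$ with $\|\Psi_L\|_\infty=O(L^{-1})$. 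The Fredholm determinant identity then yields
\[
\log\EE[\mathrm e^{\mathrm it\langle h-\EE h,\phi\rangle}]=\sum_{n\geq 1}\tfrac{(-1)^{n+1}}{n}\Tr\bigl[((\mathrm e^{\mathrm it\Psi_L}-1)K_L)^n\bigr]-\mathrm it\sum\Psi_L\rho_L,
\]
and organising by powers of $t$ presents each cumulant $\kappa_n^L$ as a combinatorial sum of cyclic traces $\sum\Psi_L(x_1)^{k_1}\cdots\Psi_L(x_\ell)^{k_\ell}K_L(x_1,x_2)\cdots K_L(x_\ell,x_1)$ with $k_1+\cdots+k_\ell=n$.

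\textbf{Second cumulant.} Substituting the contour integral formula for $K_L$ into $\kappa_2^L$ and carrying out steepest descent at the critical points $\Omega(\xi,\mu)$ and $\Omega(\xi',\mu')$ of $F$ yields, away from the diagonal, the macroscopic asymptotic
\[
L^2\,K_L(x,m;x',m')\,K_L(x',m';x,m)\;\sim\;\frac{J(\xi,\mu)J(\xi',\mu')}{\pi^2}\,G\bigl(\Omega(\xi,\mu),\Omega(\xi',\mu')\bigr),
\]
with $G(z,w)=\log|(z-\bar w)/(z-w)|$ the Dirichlet Green's function on $\HH$. Replacing the double sum by a Riemann integral in $(\xi,\mu,\xi',\mu')$, performing two integrations by parts to move the antiderivative structure of $\Psi_L$ back onto $\triangle\phi$, and invoking Green's identity for the Dirichlet Laplacian, identifies the limit
\[
\kappa_2^L\longrightarrow\frac{1}{2\pi}\int_{\HH\times\HH}G(z,w)\triangle\phi(z)\triangle\phi(w)\,dA(z)\,dA(w)=\|\phi\|_\nabla^2;
\]
the factor $\sqrt\pi$ in the definition of $\Psi_L$ is precisely calibrated to produce this normalisation.

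\textbf{Higher cumulants and main obstacle.} For $n\geq 3$, each term of $\kappa_n^L$ carries $n$ factors of $\Psi_L$ (contributing $L^{-n}$) together with a cyclic product of $n$ kernels; the off-diagonal bound $|K_L(x;x')|=O(L^{-1})$ alone yields only $\kappa_n^L=O(1)$, but an extra factor of $L^{-(n-2)}$ is produced by symmetrising the cycle and replacing successive differences $\Psi_L(x_i)-\Psi_L(x_{i+1})$ by discrete gradients (of size $O(L^{-2})$), so $\kappa_n^L\to 0$. The main obstacle is the kernel asymptotics in the second-cumulant step when the two points lie on opposite sides of the separating line $m=m_0$: the phase $F$ there takes its \emph{mixed} form, all three logarithmic singularities $z=0,1,2$ contribute, and the steepest descent contours must simultaneously pass through two critical points $\Omega(\xi,\mu)$ and $\Omega(\xi',\mu')$ lying in structurally different regimes while correctly encircling those singularities. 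The essential content of Theorem \ref{th:main} --- that the limit is still the pullback of the Dirichlet Green's function on $\HH$ irrespective of which sides the two points occupy --- relies crucially on Proposition \ref{prop0}, which guarantees that $\Omega$ extends continuously across $m=m_0$, so that the two saddles merge smoothly as the points approach the separating line.
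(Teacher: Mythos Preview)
Your overall architecture --- write $\langle h-\EE h,\phi\rangle$ as a linear statistic via summation by parts, expand the Fredholm determinant, show $\kappa_2^L\to\|\phi\|_\nabla^2$ and $\kappa_n^L\to 0$ for $n\geq 3$ --- is the paper's. But two of the steps, as written, fail. First, the bound $|K_L(x;x')|=O(L^{-1})$ is false for the raw kernel: by \eqref{eq:asymptoticsI24terms} the entries of $K$ carry factors $e^{L\Re(F_1(\Omega_1)-F_2(\Omega_2))}$, which grow or decay exponentially in $L$. The paper fixes this by conjugating, $K_G(p_1,p_2)=\tfrac{G(p_1)}{G(p_2)}K(p_1,p_2)$ with $G(x,m)=e^{L\Re F(\Omega(x/L,m/L))}$; all Schatten-norm control (Lemma~\ref{lem:conj}) is for $K_G$, not $K$. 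Without this conjugation none of your power counting is valid. Relatedly, your $\Psi_L$ is \emph{not} compactly supported: it is constant on horizontal rays to the right of the support of $\triangle\phi\circ\Omega$, and those constants need not vanish. This is why the paper splits $f=f_1+f_2$ with $f_2$ of bounded support and handles $\var X_{f_1}\to 0$ separately (Lemma~\ref{lem:varianceXf1}).

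Second, the higher-cumulant argument is wrong. You claim that symmetrising the cycle replaces $\Psi_L(x_i)$ by differences $\Psi_L(x_i)-\Psi_L(x_{i+1})$ of size $O(L^{-2})$. But $\Psi_L$ is the discrete antiderivative of something of size $O(L^{-2})$, so $\Psi_L(x_i)-\Psi_L(x_{i+1})=O(L^{-2}|x_i-x_{i+1}|)$, which for generic points in the cycle (macroscopic separation $\sim L$) is only $O(L^{-1})$: no gain. The mechanism that actually kills $\kappa_n$ for $n\geq 3$ is algebraic, not a smallness estimate. After steepest descent on $K_G$, the non-oscillatory part of the cyclic product reduces to a sum of terms $\prod_{j=1}^\ell(\omega_j-\omega_{j+1})^{-1}$ with $\omega_j\in\{\Omega_j,\bar\Omega_j\}$, and Kenyon's lemma (\cite[Lemma~7.3]{Kenyon}) says the sum of such products over all $\ell$-cycles vanishes \emph{identically}; this is \eqref{eq:kenyon}. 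Finally, the pointwise claim $L^2K_LK_L\sim G(\Omega,\Omega')$ for the second cumulant is also incorrect --- the product $K(p_1,p_2)K(p_2,p_1)$ retains oscillatory cross-terms that do not converge. The paper avoids this by first summing in $x_1,x_2$ to build the kernel $R$ of Proposition~\ref{prop:variancelinstat}, then performing steepest descent on its quadruple-integral representation (Proposition~\ref{prop:quadR}).
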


The key fact that we use to prove Theorem \ref{th:main}  is that the process on the interlacing particles at time $t$ defines a determinantal point process, as we will discuss in the next paragraph.

\begin{remark}
The fact that \eqref{eq:thmain} implies \eqref{eq:GFFcovariance} follows by the polarization identity for norms in Hilbert spaces and the fact that  the pairing in \eqref{eq:discretepairing} is linear. See also \cite[Prop. 2.13]{S}.
\end{remark}

\begin{remark}
Note that in Theorem \ref{th:macro}  the height function grows with $L$ as $L\to\infty$, whereas the fluctuations in Theorem \ref{th:main} are not scaled with $L$ at all.  However, the Gaussian free field is a probability measure on generalized functions. The pointwise limit in the scaled variables of $h-\EE h$ is ill-defined. Indeed, the variance  at a given point grows logarithmically with $L$. In contrast, the correlation between  two separate points remains bounded and is expressed in terms of the Green's function for the Laplace operator on $\HH$ with Dirichlet boundary conditions. See also~\cite{S}. \end{remark}

\subsection{Determinantal point processes}

For a discrete set $\mathcal X$ a determinantal point process on $\mathcal X$ is a probability measure on $2^\mathcal X$ for which there exists a kernel $K:\mathcal X\times \mathcal X\to \C$
such that 
\begin{align}\label{eq:dpp}
	\mathrm{Prob}\{X \in 2^\mathcal X  \mid Y\subset X)=\det K(x,y)_{x,y\in Y}
\end{align}
for all finite sets $Y\subset X$. For more details on determinantal point processes we refer to \cite{BorDet,HKPV,J,K,L,Sosh,Sosh2}. A determinantal point process is completely determined by its kernel.

%An interesting point that will be of importance to us, is that the kernel for a determinantal point process is not unique. Indeed, if we conjugate the kernel $K$ by some diagonal matrix, that is, we consider $K_G$ constructed by
%\begin{align}
%\label{eq:defKG}K_G(x,y)=\frac{G(x)}{G(y)} K(x,y)\end{align}
%for any non-zero function $G$, then $K_G$ defines the same determinantal point process.

The fact of the matter is that the growth model on the interlacing particles system at time  $t$, defines a determinantal point process on  $\Z\times \N$
with kernel given by (see \cite{BoFe}) 
\begin{multline}\label{eq:defKdoubleint}
	K(x_1,m_1, x_2,m_2) =-\frac{\chi_{m_1< m_2}}{2\pi{\rm i}} \oint_{\Gamma_0}    \frac{p_{m_1}(w)}{p_{m_2}(w)}  \frac{{\rm d} w}{w^{x_1+m_1-x_2-m_2+1}}\\
+\frac{1}{(2\pi{\rm i})^2} \oint_{\Gamma_0} \oint_{\Gamma_{1,2}}
\frac{{\rm e}^{tw}  p_{m_1}(w) z^{x_2+m_2}}
	{{\rm e}^{tz} p_{m_2}(z) w^{x_1+m_1}}\frac{ {\rm d}z   {\rm d}w}{w(w-z)}.
\end{multline}
Here $p_m$ is defined by
\begin{align}\label{eq:defpm}
	p_m(z)=\left\{\begin{array}{cc}
	(z-1)^m, & m \leq m_0\\
	(z-1)^{m_0}(z-2)^{m-m_0}, & m\geq m_0.
\end{array}\right.\end{align}
Moreover,  $\Gamma_0$ is a contour that encircles the pole $w=0$ but no other and is equipped  with counterclockwise orientation. The contour $\Gamma_{1,2}$ encircles the poles $z=1,2$ but no other and is also equipped with counterclockwise orientation.
 \begin{remark}
The expression \eqref{eq:defKdoubleint}  for the kernel  can be obtained from the expression as given in \cite{BoFe} by the transform $w\mapsto 1/w$ and $z\mapsto 1/z$.
\end{remark}

The fact that we deal with an determinantal process with an explicitly known kernel, allows us to analyze the system in detail. The main idea behind the proof of Theorem \ref{th:main} is to write $\langle h-\EE h,\phi\rangle$ as a linear statistic, which allows us to rewrite the characteristic function at the left-hand side of \eqref{eq:thmain}  as a (Fredholm) determinant. The proof of Theorem \ref{th:main} is then given in two steps: first we compute the limiting behavior of the variance of $\langle h-\EE h, \phi\rangle$ and then we use the Fredholm determinant representation for the characteristic function to prove that the fluctuations are Gaussian.

 As we will show (see Proposition \ref{prop:variancelinstat}), the variance of $\langle h-\EE h, \phi\rangle$ can be expressed as a double sum involving the kernel $K$ in \eqref{eq:defKdoubleint}.    The asymptotic behavior for $L\to \infty$ can then be found by an asymptotic analysis for the double integral representation for $K$  based on saddle point methods. However, the disadvantage of this approach is that we also need to control the asymptotic behavior of the $K$ near the boundary points, which then afterwards turn out to be negligible (note that the Gaussian free field is defined on a Sobolev space with Dirichlet boundary conditions).  For this reason, we will follow a different approach that  shows that the boundary is redundant  in a more direct way. Instead of computing the asymptotics for $K$ first, we provide an alternative expression for the double sum in the variance in terms of a quadruple integral  and compute the (bulk) asymptotics afterwards. Moreover, based on the fact that we deal with  Dirichlet boundary conditions, we give a probabilistic argument to show that the boundary has no effect on the fluctuations and that we can ignore the boundary in the Fredholm determinant identity for the characteristic function of  $\langle h-\EE h,\phi\rangle$.  Apart from the conceptual benefit, this approach also reduces the amount of  computational technicalities. Indeed,  as the boundary has  a complicated structure due to the different jump rates, giving case by case estimates on the kernel near the different parts of the boundary is cumbersome.

\subsection{Overview of the rest of the paper}
The rest of the paper is organized as follows. In Section \ref{sec:newco} we prove Propositions \ref{prop0}, \ref{prop:normal} and \ref{prop:PDE} on the complex structure. In Section \ref{sec:var} we compute the limiting behavior of the variance of $\langle h-\EE h,\phi\rangle$. In Section \ref{sec:proof} we prove that the fluctuations are Gaussian and by combining this with the results of Section \ref{sec:var} we obtain a proof for Theorem \ref{th:main}.  Some of the statements in Sections \ref{sec:var} and \ref{sec:proof} are based on steepest descent arguments that we postpone to Sections \ref{sec:asymptoticsK} and \ref{sec:asymptoticsR}.  More precisely, Proposition \ref{prop:asymptR1} is proved in \ref{sec:asymptoticsR} and Lemma \ref{lem:conj} is proved in Section \ref{sec:asymptoticsK}.
The proof of Theorem \ref{th:macro} is, although standard, given for completeness in Section \ref{sec:asymptoticsK}.

\section{Proof of Propositions \ref{prop0}, \ref{prop:normal} and \ref{prop:PDE}} \label{sec:newco}
In this section we prove Propositions \ref{prop0}, \ref{prop:normal} and \ref{prop:PDE}.

\begin{proof}[Proof of Propsition \ref{prop0}]
We  prove  that to every $\Omega\in \HH$ there exists a unique $(\xi,\mu)\in \mathcal D$ such that $\Omega=\Omega(\xi,\mu)$.  Let us first consider \eqref{eq:defF} for $\mu\leq \mu_0$. By taking real and imaginary parts we can put  $F'(\Omega \mid \xi,\mu)=0$  in matrix form as
\begin{align}\label{eq:matrixeq1}
\begin{pmatrix}
\Re \frac{1}{\Omega} & \Re\left(\frac{1}{\Omega}-\frac{1}{\Omega-1} \right) \\
\Im \frac{1}{\Omega}& \Im \left( \frac{1}{\Omega}-\frac{1}{\Omega-1}\right)
\end{pmatrix}
\begin{pmatrix}
\xi\\
\mu
\end{pmatrix}=\begin{pmatrix} \tau\\
0\end{pmatrix}.
\end{align}
The determinant of this matrix can be easily computed to be
 \[\det\begin{pmatrix}
\Re \frac{1}{\Omega} & \Re\left(\frac{1}{\Omega}-\frac{1}{\Omega-1} \right) \\
\Im \frac{1}{\Omega}& \Im \left( \frac{1}{\Omega}-\frac{1}{\Omega-1}\right)
\end{pmatrix}=\frac{\Im \Omega}{|\Omega(\Omega-1)|^2},
\]
which is always strictly positive for $\Omega \in \HH$. Hence we can invert the matrix and solve \eqref{eq:matrixeq1}
\begin{align}\label{eq:inverse1}\begin{pmatrix}
\xi\\
\mu
\end{pmatrix}
=\tau \begin{pmatrix}
|\Omega|^2-|\Omega-1|^2\\
|\Omega-1|^2
\end{pmatrix}.\end{align}
We recall that this formula is only valid for $\mu\leq \mu_0$. Concluding, for any $\Omega \in \HH$ with $|\Omega-1|^2\leq \mu_0/\tau$ there is  a unique pair $(\xi,\mu)$ such that $\Omega=\Omega(\xi,\mu)$. 

By repeating the same procedure  but now for $\mu>\mu_0$ we  obtain
\begin{align}\label{eq:inverse2}
\begin{pmatrix}
\xi\\
\mu
\end{pmatrix}
=
\begin{pmatrix}
\frac{|\Omega|^2}{2}\left(\tau +\frac{\mu_0}{|\Omega-1|^2}\right)
-\mu\\
\mu_0+\frac{|\Omega-2|^2}{2}\left(\tau -\frac{\mu_0}{|\Omega-1|^2}\right) \end{pmatrix}.\end{align}
Of course, this is only valid for $\mu> \mu_0$ so that we need $\tau -\frac{\mu_0}{|\Omega-1|^2}>0$. Hence  for every $\Omega\in \HH$ with  $|\Omega-1|^2> \mu_0/\tau$ there is a unique pair $(\xi,\mu)$ such that $\Omega=\Omega(\xi,\mu)$. 

Concluding, we find that the map $(\xi,\mu)\mapsto \Omega(\xi,\mu)$  is a one-to-one map from $ \{(\xi,\mu)\in \mathcal D \mid \mu\leq \mu_0\}$ onto
\[  \{\Omega \in \HH  \mid |\Omega-1|^2\leq \frac{\mu_0}{\tau}\},\]
and restricted to $ \{(\xi,\mu)\in \mathcal D \mid \mu> \mu_0\}$  it is one-to-one
to 
\[  \{\Omega\in \HH  \mid |\Omega-1|^2> \frac{\mu_0}{\tau}\},\]
Since these sets are disjoint and the union is $\HH$, we see that the unrestricted  map $(\xi,\mu)\mapsto \Omega(\xi,\mu)$ is indeed one-to-one from $\mathcal D$ onto $\HH$. 

The continuity is statement is immediate.
\end{proof}
%\subsection{Proofs of Propositions \ref{prop:normal} and \ref{prop:PDE}} \label{sec:proofpropno}
\begin{proof}[Proof of Proposition \ref{prop:normal}]
The vector $(-{\partial \overline{h} }/{\partial }\xi ,-{\partial \overline h}/{\partial }\mu,1)$ is a normal to the graph of $\overline{h}$.  By Theorem \ref{th:macro}, taking  total derivatives and using $F'(\Omega)=0$, we obtain 
\begin{align}\label{eq:partialxi}
\frac{{\partial }}{{\partial }\xi} \overline{h}=\frac{1}{\pi} \frac{{\partial }}{{\partial }\xi} \Im F(\Omega)=-\frac{1}{\pi} \arg \Omega,
\end{align}
and
\begin{align}\label{eq:partialmu}
\frac{{\partial  }}{{\partial }\mu} \overline{h}=\frac{1}{\pi} \frac{{\partial}}{{\partial }\mu} \Im F(\Omega)=\frac{1}{\pi} \arg(\Omega -\sigma)-\frac{1}{\pi} \arg \Omega,
\end{align}
where $\sigma=1$  if $\mu\leq \mu_0$ and $\sigma=2$ if $\mu>\mu_0$.  Hence by definition of $\theta_j$ we have
\begin{align}
-\frac{\partial }{\partial \xi} \overline{h}=\frac{\theta_1}{\pi}, \quad \text{and} \quad -\frac{{\partial}}{{\partial}\mu} \overline{h}=\frac{\theta_1}{\pi}- \frac{\pi-\theta_2}{\pi}=-\frac{\theta_3}{\pi},\end{align}
which proves the statement.
\end{proof}

\begin{proof}[Proof of Proposition \ref{prop:PDE}]
By arguing as in the proof of Proposition \ref{prop:normal} above, we obtain 
\begin{align}\label{eq:partialtau}
\frac{\partial }{{\partial} \tau} \overline{h}=\frac{1}{\pi} \frac{\partial }{{\partial} \tau} \Im F(\Omega)=\frac{\Im \Omega}{\pi}. 
\end{align} 
By the law of sines we have
\begin{align}
\frac{\sin \theta_2}{|\Omega|}=\frac{\sin \theta_3}{\sigma}.
\end{align}
And hence we can write
\begin{align}
\Im \Omega=|\Omega| \sin \theta_1=\frac {\sin \theta_1 \sin \theta_2}{\sigma \sin \theta_3}=\frac{\sin \theta_1 \sin (\theta_1+\theta_3)}{\sigma \sin \theta_3},
\end{align}
where $\sigma=1$  if $\mu\leq \mu_0$ and $\sigma=2$ if $\mu>\mu_0$. The statement now follows by  writing  $\Im \Omega$ and $\theta_j$ in terms of the partial derivatives of $\overline{h}$ with respect to $\xi,\mu$ and $\tau$ using \eqref{eq:partialxi}, \eqref{eq:partialmu} and \eqref{eq:partialtau}. 
\end{proof}

\section{The variance of $\langle h-\EE h,\phi\rangle$} \label{sec:var}

The purpose of this section is to compute the limit of the variance of $\langle h-\EE h,\phi\rangle$ as $L\to \infty$. 
\begin{proposition}\label{prop:variance} Let $h$ be as in \eqref{eq:defh}, $\phi$ a $C^2$ function with compact support in $\HH$ and $\langle\cdot,\cdot\rangle$ as in \eqref{eq:discretepairing}. Then 
\begin{align}\label{eq:propvariance}
\lim_{L\to \infty} \var  \langle h-\EE h,\phi\rangle = \|\phi\|_{\nabla}^2,
\end{align}
where $\|\cdot\|_{\nabla}$ is the Sobolev norm \eqref{eq:sobolevinpr}.
\end{proposition}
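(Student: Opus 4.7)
My plan is to express $\var\langle h-\EE h,\phi\rangle$ as a quadruple contour integral, extract its bulk asymptotics by steepest descent, and then identify the limit with $\|\phi\|_\nabla^2$ via the Dirichlet Green's function on $\HH$. Since the particle system is determinantal with kernel $K$ from \eqref{eq:defKdoubleint}, for $(x_1,m_1)\neq(x_2,m_2)$ one has
\[
\mathrm{Cov}\bigl(h(x_1,m_1),h(x_2,m_2)\bigr)=-\sum_{y_1\geq x_1}\sum_{y_2\geq x_2}K(y_1,m_1,y_2,m_2)\,K(y_2,m_2,y_1,m_1),
\]
with a diagonal correction at lower order. Substituting the double-integral representation for each factor produces a quadruple integral in $(w_1,z_1,w_2,z_2)$ in which $y_1,y_2$ appear only through $(z_2/w_1)^{y_1}(z_1/w_2)^{y_2}$. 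Choosing the contours so that $|z_2|<|w_1|$ and $|z_1|<|w_2|$, I sum the two geometric series exactly, obtaining a quadruple-integral representation of $\mathrm{Cov}(h_1,h_2)$ in which $x_1,x_2$ enter only through simple powers $(z_2/w_1)^{x_1}$, $(z_1/w_2)^{x_2}$ times fixed algebraic factors such as $1/(w_1-z_2)$, $1/(w_2-z_1)$. Writing $\psi(x,m)=\Delta\phi(\Omega(x/L,m/L))\,J(x/L,m/L)$ and inserting this into the variance gives
\[
\var\langle h-\EE h,\phi\rangle=-\frac{\pi}{L^4}\sum_{(x_i,m_i)\in L\mathcal D}\psi_1\psi_2\,Q_{12},
\]
where $Q_{12}$ denotes the resulting quadruple integral.

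I next interchange the $(x_i,m_i)$-sums with the contour integrals. After rescaling by $L$ the sums become Riemann sums in $(\xi_i,\mu_i)=(x_i/L,m_i/L)$, and the integrand takes the form $\exp\bigl(L\sum_{i=1,2}[F(w_i;\xi_i,\mu_i)-F(z_i;\xi_i,\mu_i)]\bigr)$ times algebraic factors. For $(\xi_i,\mu_i)$ in the bulk of $\mathcal D$ the relevant saddles of $F$ are $\Omega_i=\Omega(\xi_i,\mu_i)\in\HH$ and their conjugates. Deforming each pair of $(w_i,z_i)$-contours through these saddles and keeping the leading contribution yields, for macroscopically separated $\Omega_1,\Omega_2$, the asymptotic
\[
\sum_{y_i\geq x_i}K(y_1,m_1,y_2,m_2)\,K(y_2,m_2,y_1,m_1)\;\longrightarrow\;-\frac{1}{2\pi^2}\log\left|\frac{\Omega_1-\Omega_2}{\Omega_1-\bar\Omega_2}\right|.
\]
Changing variables $\Omega_i=\Omega(\xi_i,\mu_i)$, so that $J_i\,d\xi_i\,d\mu_i$ becomes area measure on $\HH$ and exactly cancels the Jacobian in $\psi_i$, I obtain
\[
\lim_{L\to\infty}\var\langle h-\EE h,\phi\rangle=-\frac{1}{2\pi}\int_\HH\int_\HH\log\left|\frac{\Omega_1-\Omega_2}{\Omega_1-\bar\Omega_2}\right|\Delta\phi(\Omega_1)\Delta\phi(\Omega_2)\,dA(\Omega_1)\,dA(\Omega_2).
\]
The integral kernel is $2\pi$ times the Dirichlet Green's function of $-\Delta$ on $\HH$, so two integrations by parts, justified without boundary terms because $\phi$ has compact support in $\HH$, convert the right-hand side into $\int_\HH|\nabla\phi|^2=\|\phi\|_\nabla^2$.

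The main obstacle is the steepest-descent analysis itself. Because the function $F$ in \eqref{eq:defF} is piecewise defined across $\mu=\mu_0$, the contour deformations and saddle computations must be carried out case-by-case depending on whether the relevant $m_i$ lies above or below $m_0$, and the critical-point equation for $F$ switches between quadratic and cubic. I also need uniform error estimates as $(\xi_i,\mu_i)$ range over the support of $\Delta\phi\circ\Omega$; here the hypothesis that $\phi$ is compactly supported in $\HH$ is essential, since under the homeomorphism of Proposition \ref{prop0} it keeps both saddles bounded away from $\partial\HH=\R$, and therefore $(\xi_i,\mu_i)$ bounded away from $\partial\mathcal D$, where the analysis would break down. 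Granting the bulk asymptotic above, the remaining steps (interchange of sums and integrals, Riemann-sum limit, change of variables, integration by parts) are routine.
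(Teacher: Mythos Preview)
Your approach coincides with the paper's: express the variance through sums of $K\cdot K$, convert to a quadruple contour integral by summing geometric series, extract the Green's function on $\HH$ by steepest descent, and finish by changing variables to $\Omega$ and integrating by parts. The paper organizes this via an intermediate kernel $R$ (Proposition~\ref{prop:variancelinstat}), whose quadruple-integral representation (Proposition~\ref{prop:quadR}) and asymptotics (Proposition~\ref{prop:asymptR1}) are exactly the computations you outline; in fact your $\mathrm{Cov}(h_1,h_2)$ equals $R$ once the diagonal term is absorbed, by the reproducing identity $\sum_{x_2}K_{12}K_{21}=K_{11}\delta_{m_1m_2}$ (Lemma~\ref{lem:KsquareK}). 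That identity is also what makes the contour choice work: with two half-infinite sums $\sum_{y_i\geq x_i}$ the required inequalities $|z_2|<|w_1|$ and $|z_1|<|w_2|$ are in tension with the nesting dictated by the sign of $m_1-m_2$, and the paper's mixed sums $\sum_{\geq}\sum_{<}$ are precisely what resolves this.

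One ingredient you should not file under ``routine'': the steepest-descent asymptotic is only valid when $\|(\xi_1,\mu_1)-(\xi_2,\mu_2)\|\geq L^{-1/2+\delta}$, and the logarithmic singularity of the Green's function means the near-diagonal contribution to the Riemann sum is not a priori negligible. The paper supplies a uniform bound $R=O(\log L)$ (equation~\eqref{eq:boundR}), obtained from the same quadruple integral, to control it; you will need the analogue.
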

The proof will be given in Section 4.2. We will first provide a formula that expresses the variance of a general  linear statistic in terms of a kernel $R$.   We will show that $\langle h-\EE h,\phi\rangle$ is in fact a linear statistic and Proposition~\ref{prop:variance} then follows by  the asymptotic behavior for $R$ as $L\to \infty$. The proof of the asymptotic behavior of $R$ is based on a steepest descent argument and will be postponed to Section  \ref{sec:asymptoticsR}. 
\subsection{The variance of a linear statistic}
Let $f:\Z\times \N \to \R$ be a function with finite support and define the random variable $X_f$ as 
\begin{align}\label{eq:linstat}
X_f=\sum_{(x,m)\in \mathcal C} f(x,m),
\end{align}
where $\mathcal C$ is a random configuration of points.

Random variables of the type $X_f$ are generally referred to as linear statistics and  play an important role in the study of determinantal point processes.  It is standard (and straightforward to derive from \eqref{eq:dpp}) that 
\begin{align}\label{eq:meanoflinstat}
\EE X_f=\sum_{(x,m)} f(x,m) K(x,m,x,m),
\end{align}
and
\begin{multline}\label{eq:varianceoflinstat}
\var X_f=\sum_{(x,m)\in \Z\times \N} f(x,m)^2K(x,m,x,m)\\
-\sum_{(x_1,m_1)\in \Z\times \N }\sum_{(x_2,m_2)\in \Z\times \N} f(x_1,m_1) f(x_2,m_2) K(x_1,m_1,x_2,m_2) K(x_2,m_2,x_1,m_1), 
 \end{multline}
 where $K$ is the kernel in \eqref{eq:defKdoubleint}.
 We will rewrite \eqref{eq:varianceoflinstat} in terms of a kernel $R$. To this end, we need the following lemma. 
 
 \begin{lemma}\label{lem:KsquareK} With $K$ as in \eqref{eq:defKdoubleint} we
 we have that 
	\begin{align} \label{eq:KsquareK}
	  \sum_{x_2\in \Z} K(x_1,m_1,x_2,m_2)K(x_2,m_2,x_1,m_1)=  
	  K(x_1,m_1; x_1,m_1) \delta_{m_1,m_2}	\end{align}
for $(x_1,m_1)\in \mathcal \Z\times \N$ and $m_2\in \N$.
\end{lemma}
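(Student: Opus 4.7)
The plan is to prove the identity by case analysis on whether $m_1 = m_2$ or $m_1 \ne m_2$, reducing each case to a contour manipulation based on the double-integral representation \eqref{eq:defKdoubleint}. Both cases also admit cleaner conceptual explanations that I will mention.

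For the diagonal case $m_1 = m_2 = m$, the $\chi_{m_1<m_2}$-indicator in \eqref{eq:defKdoubleint} is absent, so $K(\cdot,m;\cdot,m)$ is the pure double-integral operator. Conceptually, the interlacing dynamics preserves the particle count on each horizontal level---from the initial configuration there are exactly $m$ particles at level $m$, and this is conserved under jumps. By the standard theory of Eynard--Mehta kernels, $K(\cdot,m;\cdot,m)$ must then act as a rank-$m$ projection on $\ell^2(\Z)$, so $K^2 = K$ at level $m$, which is precisely the claim $\sum_{x_2} K(x_1,m;x_2,m)K(x_2,m;x_1,m) = K(x_1,m;x_1,m)$. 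An analytic verification substitutes the double integral, observes that the $x_2$-dependence enters only as $(z_1/w_2)^{x_2+m}$, and evaluates $\sum_{x_2 \in \Z}$ by splitting $x_2 \ge 0$ and $x_2 < 0$ with complementary contour configurations $|z_1|<|w_2|$ vs.\ $|z_1|>|w_2|$, so each geometric series converges. Re-aligning the two configurations picks up the residue at $z_1 = w_2$, which collapses the quadruple integral back to the double-integral formula for $K(x_1,m;x_1,m)$.

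For the off-diagonal case, without loss of generality take $m_1 < m_2$. Then the $\chi$-term is present in $K(x_1,m_1;x_2,m_2)$ but absent in $K(x_2,m_2;x_1,m_1)$, so the product splits as (double)(double)$-$(single)(double). The (double)(double) contribution is treated as in the diagonal case: splitting the $x_2$-sum and deforming contours yields a residue at $z_1 = w_2$ that reduces the quadruple integral to a triple integral, and a direct residue computation shows this triple integral coincides exactly with the $x_2$-sum of the (single)(double) term, producing cancellation and hence zero. Equivalently, one may invoke the Eynard--Mehta decomposition
\begin{equation*}
K(x_1,m_1;x_2,m_2) = \sum_k \Psi_k^{m_1}(x_1)\Phi_k^{m_2}(x_2) - \phi^{(m_1,m_2)}(x_1,x_2)\chi_{m_1<m_2},
\end{equation*}
with biorthogonality $\sum_x \Phi_i^m(x)\Psi_j^m(x) = \delta_{ij}$ and the intertwining $\sum_y \phi^{(m_1,m_2)}(x,y)\Psi_k^{m_2}(y) = \Psi_k^{m_1}(x)$, both of which can be read off \eqref{eq:defKdoubleint} by Laurent expansion; the vanishing of the off-diagonal sum then follows in two lines of algebra.

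The main obstacle will be the contour bookkeeping in the analytic approach: the $w$-contour must continue to encircle only $w=0$ and the $z$-contour must continue to encircle only $\{1,2\}$ throughout the deformations, which restricts how the contours may be moved to realise the radial orderings required for convergence of the geometric series, and care is needed near the pinching points where a contour would otherwise cross $\{0,1,2\}$. Once this is settled the residue calculations are routine, and the mirror-image computation for the case $m_1 > m_2$ presents no additional difficulty.
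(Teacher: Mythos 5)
Your argument is correct in substance and rests on the same mechanism as the paper's proof of Lemma~\ref{lem:KsquareK}: substitute the double-integral representation \eqref{eq:defKdoubleint}, carry out the bilateral sum over $x_2$ of the factor $(z/w')^{x_2}$ by splitting it into two convergent geometric series on contours of slightly different radii, and observe that re-aligning the contours produces precisely the residue at the collision point, which either vanishes or collapses back to the double-integral formula for $K(x_1,m_1,x_1,m_1)$ depending on the nesting. The only real difference is where the single-integral term of \eqref{eq:defKdoubleint} is disposed of: the paper absorbs it \emph{first}, by re-nesting the contours so that each kernel becomes one double integral (for $m_1<m_2$ the residue picked up when $\Gamma_{1,2}$ is pushed around $\Gamma_0$ cancels the single integral, while for $m_1\geq m_2$ the corresponding residue vanishes; see \eqref{eq:KsquareK1} and Figure~\ref{fig:m1m2}), after which a single delta-type identity settles the diagonal and off-diagonal cases simultaneously; you instead keep the single-integral term and cancel it against a residue term at the end of the off-diagonal computation --- the same algebra in a different order, at the cost of heavier contour bookkeeping and of checking convergence of the two partial sums separately. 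Two caveats on your conceptual shortcuts: the inference ``exactly $m$ particles per level, hence $K(\cdot,m;\cdot,m)$ is a rank-$m$ projection'' is not automatic for a non-Hermitian correlation kernel (the Bernoulli-decomposition argument requires self-adjointness), so it should be viewed as a heuristic, with your analytic verification carrying the proof; likewise the Eynard--Mehta decomposition with biorthogonality and the intertwining relation does hold for this kernel, but it cannot simply be read off \eqref{eq:defKdoubleint} --- identifying $\Psi_k^m$, $\Phi_k^m$, $\phi^{(m_1,m_2)}$ and verifying those relations amounts to essentially the same contour computations again.
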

% Note that from \eqref{eq:defKdoubleint} it is clear that $K(x_2,m_2,x_1,m_1)=0$ if $x_2+m_2<0$, since in that case  $w=0$ is not a pole anymore.  Therefore can reduce t the sum in \eqref{eq:KsquareK}  to a sum over $x_2$ with $x_2+m_2\geq 0$. 
\begin{proof}
We start by  rewriting the integral formulas for $K$ in \eqref{eq:defKdoubleint} as follows. If $m_1\geq m_2$, we deform $\Gamma_0$ such that it also goes around $\Gamma_{1,2}$. Due to the term $(w-z)^{-1}$ we pick up a residue that results into single integral over $\Gamma_{1,2}$. However, by the assumption $m_1\geq m_2$ the integrand of the single integral has no pole inside  $\Gamma_{1,2}$.  Hence it vanishes and we are left with the double integral only, where now $\Gamma_0$ also goes around $\Gamma_{1,2}$.  If on the other hand, $m_1<m_2$, then we deform the contour $\Gamma_{1,2}$ such that it also goes around $\Gamma_0$. The residue that we pick up in this way results into a single integral over $\Gamma_0$ that exactly cancels the single integral that was already present in \eqref{eq:defKdoubleint}. Concluding we can rewrite \eqref{eq:defKdoubleint} as 
\begin{align}\label{eq:KsquareK1}
	K(x_1,m_1, x_2,m_2) =\frac{1}{(2\pi{\rm i})^2} \oint_{\Gamma_0}\oint_{\Gamma_{1,2}}
\frac{{\rm e}^{tw}  p_{m_1}(w) z^{x_2+m_2}}
	{{\rm e}^{tz} p_{m_2}(z) w^{x_1+m_1}}\frac{   {\rm d}z  {\rm d}w }{w(w-z)},
\end{align}
where $\Gamma_0$ now also goes around $\Gamma_{1,2}$ if $m_1<m_2$, and if on the other hand $m_1\geq m_2$, we have that $\Gamma_{1,2}$ also goes around $\Gamma_0$.  See also Figure~\ref{fig:m1m2}.

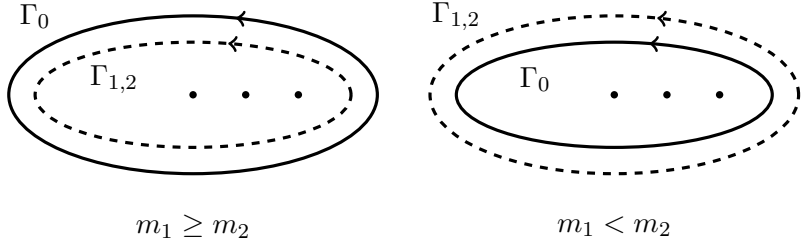
\begin{figure}[t]
\begin{center}
\begin{tikzpicture}[scale=0.7]
\fill (0,0) circle(2pt);
\fill (1,0) circle(2pt);
\fill (2,0) circle (2pt);
\draw[postaction={decorate},decoration={
  markings,
  mark=at position 0.2 with {\arrow{>}}},very thick,dashed] (0,0) ellipse (3 and 1);
\draw[postaction={decorate},decoration={
  markings,
  mark=at position 0.2 with {\arrow{>}}},very thick] (0,0) ellipse (3.5 and 1.5);
\fill (8,0) circle(2pt);
\fill (9,0) circle(2pt);
\fill (10,0) circle (2pt);
\draw (0,-2.5) node  {$m_1\geq m_2$};
\draw (8,-2.5) node  {$m_1<m_2$};
\draw (-3,1.5) node {$\Gamma_0$};
\draw (-1.5,0.3) node {$\Gamma_{1,2}$};
\draw (5,1.5) node {$\Gamma_{1,2}$};
\draw (6.5,0.3) node {$\Gamma_{0}$};

\draw[postaction={decorate},decoration={
  markings,
  mark=at position 0.2 with {\arrow{>}}},very thick] (8,0) ellipse (3 and 1);
\draw[postaction={decorate},decoration={
  markings,
  mark=at position 0.2 with {\arrow{>}}},very thick,dashed]  (8,0) ellipse (3.5 and 1.5);
\end{tikzpicture}
\end{center}
\caption{Deforming the contours such that the kernel $K(x_1,m_1,x_2,m_2)$ is represented by one double contour only. The solid line represents the contour $\Gamma_0$. The dashed contour represents $\Gamma_{1,2}$. The dots are the points $0,1$ and $2$ which are  poles for the integrands.}
\label{fig:m1m2}
\end{figure}
The next step in proving \eqref{eq:KsquareK} is to write $K(x_1,m_1,x_2,m_2)$ and 
$K(x_2,m_2,x_1,m_1)$ both as one double integral. Note that we can always deform the contours such that all contours are circles around the origin and such that the radius of $\Gamma_0'$ and $\Gamma_{1,2}$ are equal.  After these preparations we substitute \eqref{eq:defKdoubleint} into the sum \eqref{eq:KsquareK} and take the sum under the integral. The sum is then over terms $\frac{z^{x_2}}{w'^{x_2}}$ and we have
\begin{multline}\label{eq:KsquareKsum} \sum_{x_2\in \Z} \frac{1}{(2\pi {\rm i})^2} \oint\oint  \frac{{\rm e}^{tw'}p_{m_2}(w')}{{\rm e}^{tz} p_{m_{2}}(z)} \frac{z^{x_2+m_2}}{w^{x_2+m_2+1}} \frac{ {\rm d}w' {\rm d}z}{(w-z)(w'-z')}\\=\frac{1}{2\pi{\rm i} } \oint_{\Gamma_{1,2}} \frac{{\rm d}z}{(w-z)(z-z')}.
\end{multline}
The integral at the right-hand side vanishes if $w$ and $z'$ are  either both inside the contour $\Gamma_{1,2}$ or both outside. These situations precisely occur when $m_1\neq m_2$. Hence we  proved \eqref{eq:KsquareK} for $m_1\neq m_2$. Finally, if $m_1=m_2$, we have that $z'$ is in the region enclosed by $\Gamma_{1,2}$ and hence the left-hand side of \eqref{eq:KsquareK}, by using \eqref{eq:KsquareKsum} and picking up the residue at $z=z'$, equals
\begin{align}
\frac{1}{(2\pi{\rm i})^2} \oint_{\Gamma_0} \oint_{\Gamma_{1,2'}}
\frac{{\rm e}^{tw}  p_{m_1}(w) z'^{x_1+m_1}}
	{{\rm e}^{tz'} p_{m_1}(z') w^{x_1+m_1}}\frac{ {\rm d}z'   {\rm d}w}{w(w-z')},
\end{align}
where $\Gamma_0$ goes around $\Gamma_{1,2'}$. Hence it equals $K(x_1,m_1,x_1,m_1)$ by \eqref{eq:KsquareK1} and we also proved \eqref{eq:KsquareK} for the remaining case $m_1=m_2$.\end{proof}

Using the fact that the diagonal of $K^2$ and $K$ agree,  we can rewrite the variance of  linear statistic in a useful different form  as presented in the following proposition.
\begin{proposition} \label{prop:variancelinstat}
Define the difference operator $D$ by  $Df(x,m)=f(x,m)-f(x-1,m)$ for $f:\Z\times \N\to \R$. Then 
\begin{multline}
\var X_f= \sum_{(x_1,m_1)\in \Z\times \N } \sum_{(x_2,m_2)\in \Z\times \N}Df(x_1,m_1) Df(x_2,m_2) R(x_1,m_1,x_2,m_2),
\end{multline}
where 
\begin{multline}\label{eq:defR}
R(y_1,m_1,y_2,m_2)\\=
\left\{
\begin{array}{ll}
\sum_{x_1\geq y_1}\sum_{x_2< y_2} K(x_1,m_1,x_2,m_2)K(x_2,m_2,x_1,m_1)  & y_1\geq y_2,\\
\sum_{x_1< y_1}\sum_{x_2\geq  y_2} K(x_1,m_1,x_2,m_2)K(x_2,m_2,x_1,m_1)  & y_1<y_2.
\end{array}
\right.
\end{multline}
\end{proposition}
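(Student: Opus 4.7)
The plan is to identify $R(y_1,m_1,y_2,m_2)$ with the covariance of the height function at the two points $(y_1,m_1)$ and $(y_2,m_2)$, and then read off the asserted formula from the determinantal two-point function.

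First I would rewrite $X_f$ in terms of $h$ by a discrete integration by parts. Since $\mathbf{1}_{(x,m)\in\mathcal C}=h(x,m)-h(x+1,m)$ by the very definition \eqref{eq:defh}, and $f$ has finite support, an Abel summation gives
\begin{equation*}
X_f=\sum_{x,m}f(x,m)\bigl(h(x,m)-h(x+1,m)\bigr)=\sum_{y,m}Df(y,m)\,h(y,m),
\end{equation*}
with no boundary terms. Taking variances, the claim of the proposition reduces to the identity
\begin{equation*}
R(y_1,m_1,y_2,m_2)=\mathrm{Cov}\bigl(h(y_1,m_1),\,h(y_2,m_2)\bigr).
\end{equation*}

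Next I would compute this covariance using the determinantal formula $\mathrm{Cov}(\mathbf{1}_{p_1},\mathbf{1}_{p_2})=-K(p_1,p_2)K(p_2,p_1)$, which is valid whenever $p_1\neq p_2$. Expanding $h(y,m)=\sum_{x\ge y}\mathbf{1}_{(x,m)\in\mathcal C}$ naively in both arguments would introduce diagonal terms of the form $K(x,m,x,m)-K(x,m,x,m)^2$ that must be reassembled via Lemma \ref{lem:KsquareK}; to bypass this, I would work with the complementary height $\bar h(y,m):=\#\{x_k^m:x_k^m<y\}$, which satisfies $h(y,m)+\bar h(y,m)=m$. Since $m$ is deterministic, we have
\begin{equation*}
\mathrm{Cov}(h(y_1,m_1),h(y_2,m_2))=\begin{cases}-\mathrm{Cov}(h(y_1,m_1),\bar h(y_2,m_2))&\text{if }y_1\ge y_2,\\ -\mathrm{Cov}(\bar h(y_1,m_1),h(y_2,m_2))&\text{if }y_1<y_2.\end{cases}
\end{equation*}
In the first representation the summation runs over $x_1\ge y_1,\,x_2<y_2$, so that $x_1\ge y_1\ge y_2>x_2$ and in particular $(x_1,m_1)\neq(x_2,m_2)$; similarly, in the second case the range $x_1<y_1<y_2\le x_2$ excludes the diagonal. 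Hence the determinantal two-point formula applies without any diagonal correction, and substituting yields exactly \eqref{eq:defR}.

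The main obstacle is purely one of bookkeeping: one must choose the correct splitting $h=m-\bar h$ in each of the two cases $y_1\ge y_2$ and $y_1<y_2$ in order to exclude the coincidences $(x_1,m_1)=(x_2,m_2)$; the asymmetry in the definition of $R$ between these two cases is a direct reflection of this choice. Once the splitting is fixed, everything else is a straightforward application of the determinantal identity for the covariance of two indicators, and Lemma \ref{lem:KsquareK} is needed only implicitly, through the fact that $h(\cdot,m)+\bar h(\cdot,m)$ is constant.
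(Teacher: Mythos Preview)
Your argument is correct and takes a genuinely different route from the paper. The paper works purely at the level of the kernel: it expands $f(x_j,m_j)=\sum_{y_j\le x_j}Df(y_j,m_j)$ inside the variance formula \eqref{eq:varianceoflinstat}, changes the order of summation, and then invokes Lemma~\ref{lem:KsquareK} explicitly to convert the diagonal term $\sum f^2 K$ into a double sum of the form $\sum\sum KK$, after which the two pieces combine into $R$. Your approach instead first performs an Abel summation on $X_f$ itself to obtain $X_f=\sum Df\cdot h$, so that the proposition reduces to identifying $R(y_1,m_1,y_2,m_2)$ with $\mathrm{Cov}(h(y_1,m_1),h(y_2,m_2))$; the complement trick $h+\bar h=m$ then lets you write this covariance as a sum over disjoint pairs of sites, where the two--point determinantal formula applies without any diagonal correction. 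This is more conceptual: it explains \emph{why} $R$ has its piecewise form (it is the height--height covariance, with the two cases chosen precisely so that the summation ranges never meet), whereas the paper's proof derives the same expression by direct algebraic manipulation. One small remark: your closing sentence undersells the point --- the identity $h+\bar h=m$ is a combinatorial fact about the particle configuration and has nothing to do with Lemma~\ref{lem:KsquareK}, so in your argument that lemma is in fact not used at all, implicitly or otherwise.
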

\begin{proof}
The right-hand side of \eqref{eq:varianceoflinstat} consists of two terms.  Let us first consider the first term.  By writing $f(x_j,m_j)=\sum_{y_j\leq x_j} D f$ and changing the order of summation we get
\begin{multline*}
\sum_{x_1,m_1} f(x_1,m_1)^2K (x_1,m_1,x_1,m_1)\\
=\sum_{y_1,m_1}\sum_{y_2} Df(y_1,m_1) Df(y_2,m_1) \sum_{x_1\geq \max(y_1,y_2)}K(x_1,m_1,x_1,m_1).
\end{multline*}
By separating the cases $y_1<y_2$ and $y_1\geq y_2$ and using \eqref{eq:KsquareK} to introduce  extra sums over $x_2$ and $m_2$,  we obtain 
\begin{multline}\label{eq:expandtrace2}
\sum_{x_1,m_1} f(x_1,m_1)^2K (x_1,m_1,x_1,m_1)=\sum_{y_1,m_1}\sum_{y_2,m_2} Df(y_1,m_1) Df(y_2,m_2)\\\times 
\left\{
\begin{array}{ll}
\sum_{x_1\geq y_1}\sum_{x_2} K(x_1,m_1,x_2,m_2)K(x_2,m_2,x_1,m_1),  & y_1\geq y_2\\
\sum_{x_1}\sum_{x_2\geq y_2}K(x_1,m_1,x_2,m_2)  K(x_2,m_2,x_1,m_1), & y_1<y_2.\\
\end{array}
\right.
\end{multline}
 Note that after summing over $x_2$, the summand vanishes if $m_1\neq m_2$.
 
 The second term in \eqref{eq:varianceoflinstat} is a double sum that we can rewrite as
\begin{multline*}
\sum_{x_1,m_1}\sum_{x_2,m_2} f(x_1,m_1)f(x_2,m_2) K(x_1,m_1,x_2,m_2)K(x_2,m_2,x_1,m_1)\\
=\sum_{x_1,m_1}\sum_{x_2,m_2}\sum_{y_1\leq x_1}D f(y_1,m_1) \sum_{y_2\leq x_2}Df(y_2,m_2)\\ \times K(x_1,m_1,x_2,m_2)K(x_2,m_2,x_1,m_1),
\end{multline*}
which after changing the order of summation reduces to
\begin{multline}\label{eq:expandtrace1}
\sum_{x_1,m_1}\sum_{x_2,m_2} f(x_1,m_1)f(x_2,m_2) K(x_1,m_1,x_2,m_2)K(x_2,m_2,x_1,m_1)\\=\sum_{y_1,m_1}\sum_{y_2,m_2} Df(y_1,m_1) Df(y_2,m_2)\\\times 
\sum_{x_1\geq y_1}\sum_{x_2 \geq y_2} K(x_1,m_1,x_2,m_2)K(x_2,m_2,x_1,m_1).
\end{multline}
Inserting \eqref{eq:expandtrace1} and \eqref{eq:expandtrace2} in  \eqref{eq:varianceoflinstat} gives the statement.
\end{proof}
We end this paragraph with a symmetry property of $R$ that is useful later on.
\begin{lemma} \label{prop:symR}
If $m_1\neq m_2$ then \begin{multline}\label{eq:symR}
\sum_{x_1\geq y_1}\sum_{x_2< y_2} K(x_1,m_1,x_2,m_2)K(x_2,m_2,x_1,m_1) \\
=\sum_{x_1< y_1}\sum_{x_2\geq  y_2} K(x_1,m_1,x_2,m_2)K(x_2,m_2,x_1,m_1)  
\end{multline}
\end{lemma}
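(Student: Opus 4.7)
The plan is to derive the identity from Lemma \ref{lem:KsquareK} by summing over $x_1$ or $x_2$ to eliminate the constraint on the other variable, using the hypothesis $m_1\neq m_2$ to produce vanishing.

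First I would observe that Lemma \ref{lem:KsquareK} has a natural ``dual'' obtained by swapping the roles of $(x_1,m_1)$ and $(x_2,m_2)$ in its statement. Because $K(x_1,m_1,x_2,m_2)K(x_2,m_2,x_1,m_1)$ is symmetric as a scalar product under this swap of labels, the relabeling gives
\begin{align*}
\sum_{x_1\in \Z}K(x_1,m_1,x_2,m_2)K(x_2,m_2,x_1,m_1)=K(x_2,m_2,x_2,m_2)\,\delta_{m_1,m_2}
\end{align*}
for each fixed $(x_2,m_2)$ and each $m_1$. Combined with Lemma \ref{lem:KsquareK} itself, the upshot is that under the standing assumption $m_1\neq m_2$, each of the two ``full'' sums
\begin{align*}
\sum_{x_2\in \Z}K(x_1,m_1,x_2,m_2)K(x_2,m_2,x_1,m_1)\quad\text{and}\quad \sum_{x_1\in \Z}K(x_1,m_1,x_2,m_2)K(x_2,m_2,x_1,m_1)
\end{align*}
vanishes, for every fixed value of the other spatial variable.

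Given these two vanishing identities, the lemma is a one-line bookkeeping exercise. Writing $A$ and $D$ for the left-hand side and the right-hand side of \eqref{eq:symR}, I would compute $A-D$ by completing each restricted sum to the full sum over $\Z$:
\begin{align*}
A-D&=\sum_{x_1\geq y_1}\sum_{x_2\in \Z}K(x_1,m_1,x_2,m_2)K(x_2,m_2,x_1,m_1)\\&\quad-\sum_{x_2\geq y_2}\sum_{x_1\in \Z}K(x_1,m_1,x_2,m_2)K(x_2,m_2,x_1,m_1),
\end{align*}
where the bracketed quadrant $\{x_1\geq y_1,\,x_2\geq y_2\}$ is added to both sides and cancels, and the two ``off-diagonal'' quadrants produce $A$ and $-D$ respectively. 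By the two vanishing identities above, each of the two double sums on the right is $0$, so $A=D$, which is exactly \eqref{eq:symR}.

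The only real issue is legitimacy of the infinite summations: reshuffling the order of summation and extending to $\Z$ requires absolute convergence. This can be justified in the usual way for the extended kernel of a determinantal process on $\Z\times\N$ — the marginal bounds $|K(x,m,x,m)|\leq 1$ together with the Cauchy-Schwarz type estimate $|K(x_1,m_1,x_2,m_2)K(x_2,m_2,x_1,m_1)|\leq K(x_1,m_1,x_1,m_1)K(x_2,m_2,x_2,m_2)$, or alternatively a direct steepest-descent bound on the double integral in \eqref{eq:defKdoubleint}, suffice since the occupied sites lie in a bounded region of $\Z\times\N$. This convergence issue is routine and is the only even mildly delicate step.
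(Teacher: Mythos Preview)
Your proof is correct and follows essentially the same route as the paper's. Both arguments use Lemma~\ref{lem:KsquareK} (and its relabeled version with the roles of $(x_1,m_1)$ and $(x_2,m_2)$ swapped) to conclude that the full sum over either spatial variable vanishes when $m_1\neq m_2$; the paper writes this as the chain $A=-C=D$ through the common quadrant $C=\sum_{x_1\geq y_1}\sum_{x_2\geq y_2}$, while you compute $A-D=(A+C)-(C+D)=0$ directly. Your explicit mention of the dual form of Lemma~\ref{lem:KsquareK} and of the summability issue are both points the paper leaves implicit.
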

\begin{proof} 
The proof follows by applying \eqref{eq:KsquareK} twice and changing the order of summation
\begin{multline}
\sum_{x_1\geq y_1}\sum_{x_2< y_2} K(x_1,m_1,x_2,m_2)K(x_2,m_2,x_1,m_1) \\
=-\sum_{x_1\geq y_1}\sum_{x_2\geq  y_2} K(x_1,m_1,x_2,m_2)K(x_2,m_2,x_1,m_1)  
\\=\sum_{x_1< y_1}\sum_{x_2\geq y_2} K(x_1,m_1,x_2,m_2)K(x_2,m_2,x_1,m_1).\end{multline}
\end{proof}

\subsection{Proof of Proposition \ref{prop:variance} }In this paragraph we prove Proposition \ref{prop:variance}.
We need the following lemma that expresses  $\langle h, \phi \rangle $ as a linear statistic.
\begin{lemma}\label{lem:linearstat}
{Let $h$ be as in \eqref{eq:defh}, $\phi$ a $C^2$ function with compact support in $\HH$, $\langle h, \phi\rangle$ as in \eqref{eq:discretepairing} and $f$ be defined by \begin{align}\label{eq:deff}
f(x,m)=-\frac{\sqrt{\pi}}{L^2}\sum_{\overset{y\leq x}{(y,m)\in L\mathcal D}} \triangle\phi(\Omega(y/L,m/L)) J(y/L,m/L).\end{align} 
Then $\langle h,\phi\rangle=X_f$.} \end{lemma}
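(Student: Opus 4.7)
The plan is to prove the identity by a Fubini-type rearrangement, using that the height function at $(x,m)$ is just a sum of indicators over particles on row $m$ lying at or to the right of $x$.

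First I would write $h(x,m)=\sum_{k=1}^{m}\mathbf{1}[x_k^m\geq x]$ and substitute this directly into the right-hand side of \eqref{eq:discretepairing}, giving
\[
\langle h,\phi\rangle = -\frac{\sqrt{\pi}}{L^2}\sum_{(x,m)\in L\mathcal D}\sum_{k=1}^{m}\mathbf{1}[x_k^m\geq x]\,\triangle\phi(\Omega(x/L,m/L))\,J(x/L,m/L).
\]
Next I would swap the order of summation, pulling the sum over particles $(x_k^m,m)\in\mathcal C$ to the outside and leaving the sum over the discrete argument $x$ with $(x,m)\in L\mathcal D$ and $x\leq x_k^m$ inside. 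The inner sum is exactly $f(x_k^m,m)$ by the definition \eqref{eq:deff}, so the whole expression collapses to $\sum_{(x_k^m,m)\in\mathcal C}f(x_k^m,m)=X_f$.

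The only thing that needs justification is that the swap is legitimate. This is where the compact support hypothesis on $\phi$ is used: since $\phi$ is $C^2$ with compact support in $\HH$, the lattice set $\{(x,m)\in L\mathcal D:\triangle\phi(\Omega(x/L,m/L))\neq 0\}$ is finite (it is contained in the preimage under $\Omega(\cdot/L,\cdot/L)$ of a compact subset of $\HH$, which is bounded away from $\partial\mathcal D$ and hence intersects $L\mathcal D$ in finitely many points). Thus the original double sum defining $\langle h,\phi\rangle$ has only finitely many nonzero terms with bounded indicators, and the Fubini swap is trivially valid.

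I do not anticipate a real obstacle here; the statement is essentially a summation by parts/rearrangement. The only mild subtlety is keeping track of the fact that particles $(x_k^m,m)$ may lie outside $L\mathcal D$, but such particles simply inherit a value $f(x_k^m,m)$ defined by the formula \eqref{eq:deff} (which restricts the inner sum to $(y,m)\in L\mathcal D$), so no issue arises.
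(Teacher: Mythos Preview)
Your proposal is correct and follows essentially the same approach as the paper: substitute the definition of $h$ into the pairing, swap the order of summation (particles outside, lattice point inside), and recognize the inner sum as $f$. Your added justification of the Fubini step via compact support of $\phi$ is a helpful detail the paper leaves implicit.
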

\begin{proof}
Let $\mathcal C$ be a random configuration of points. Then
\[\langle h, \phi\rangle=-\frac{\sqrt{\pi}}{L^2} \sum_{(y,m)\in L \mathcal D} h(y,m) \triangle \phi(\Omega(y/L,m/L))J(y/L,m/L).\]
By inserting definition of $h$ in \eqref{eq:defh} and changing the order of summation we get
\[\langle h, \phi\rangle=-\frac{\sqrt{\pi}}{L^2} \sum_{(x,m)\in \mathcal C} \sum_{\overset{y\leq x}{(y,m)\in L\mathcal D}} \triangle \phi(\Omega(y/L,m/L)) J(y/L,m/L),\]
which is the statement. 
\end{proof}
Now that we have shown that $\langle h,\phi\rangle$ is a linear statistic, we can express its variance in terms of $R$ by Proposition \ref{prop:variancelinstat}. Hence it suffices to find the asymptotic behavior of $R$, that we present in the following proposition.

\begin{proposition} \label{prop:asymptR1}
Fix $\delta>0$ and set 
\begin{align}
\begin{cases} x_j= [L \xi_j],\\
m_j=[L \mu_j],
\end{cases}
\end{align}
where $[x]$ stands for the integer part of $x$. Then
\begin{align}
\label{eq:limR1} \lim_{L\to \infty} R(x_1,m_1,x_2,m_2)=-\frac{1}{2 \pi^2} \log\left|\frac{\Omega(\xi_1,\mu_1)-\Omega(\xi_2,\mu_2)}{\Omega(\xi_1,\mu_1)-\overline{\Omega(\xi_2,\mu_2)}}\right|,\end{align}
uniformly for $(\xi_j,\mu_j)$ in compact subsets of  $\mathcal D$ and $\|(\xi_1,\mu_1)-(\xi_2,\mu_2)\|~\geq~L^{-\tfrac12+\delta}$.

Moreover, we have
\begin{align}
\label{eq:boundR}
R(x_1,m_1,x_2,m_2)=\OO(\log L),\quad \text{as } L\to \infty,\end{align}
where the constant is uniform for $(\xi_j,\mu_j)$ in compact subsets of~$\mathcal D$. \end{proposition}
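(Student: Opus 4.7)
The plan is to express $R$ as a quadruple contour integral and then extract its asymptotics by saddle point analysis. Starting from the compactly-deformed formula \eqref{eq:KsquareK1} for $K$, I would substitute into \eqref{eq:defR}, nest the contours so that $|z'|<|w|$ and $|w'|<|z|$, exchange summation and integration, and evaluate the two geometric series
\[
\sum_{x_1\geq y_1}\bigl(z'/w\bigr)^{x_1+m_1}=\frac{w\,(z'/w)^{y_1+m_1}}{w-z'},\qquad \sum_{x_2<y_2}\bigl(z/w'\bigr)^{x_2+m_2}=\frac{w'\,(z/w')^{y_2+m_2}}{z-w'}.
\]
Combined with the scaling $y_j=[L\xi_j]$, $m_j=[L\mu_j]$, $t=L\tau$ and the definition of $F$ in \eqref{eq:defF}, the resulting integrand takes the form
\[
\frac{\exp\bigl(L\bigl[F(w|\xi_1,\mu_1)-F(z'|\xi_1,\mu_1)+F(w'|\xi_2,\mu_2)-F(z|\xi_2,\mu_2)\bigr]\bigr)}{(w-z)(w'-z')(w-z')(z-w')},
\]
up to boundary corrections from the integer parts.

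Next I would perform a saddle point analysis. The non-real critical points of $F(\cdot\,|\xi_j,\mu_j)$ come in complex conjugate pairs; in $\HH$ the critical point is $\Omega_j:=\Omega(\xi_j,\mu_j)$, and the one in the lower half plane is $\bar\Omega_j$. The contours for $(w,z')$ should be deformed through $\Omega_1$ and $\bar\Omega_1$, and those for $(w',z)$ through $\Omega_2$ and $\bar\Omega_2$. To satisfy the nesting constraints while passing through the saddles, the deformation generically crosses the poles $w=z'$ and $z=w'$; the residues picked up there reduce the quadruple integral to a double integral. The saddle contributions of this reduced integral from $\Omega_j$ and $\bar\Omega_j$, combined with the explicit values of the rational factor $[(w-z)(w-z')(w'-z')(z-w')]^{-1}$ at these saddles, are to produce the claimed limit
\[
-\frac{1}{2\pi^{2}}\log\Bigl|\frac{\Omega_1-\Omega_2}{\Omega_1-\bar\Omega_2}\Bigr|.
\]
The jump in $F$ at $\mu=\mu_0$ splits the computation into four sub-cases depending on whether each $\mu_j$ is above or below $\mu_0$; in the sub-cases with $\mu_j>\mu_0$ there is an additional real critical point, but it lies on the cut for the logarithms and does not contribute to leading order on compact subsets of $\mathcal D$.

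For the uniform bound $R=O(\log L)$, where saddle coalescence can occur when the two points are within $L^{-1/2+\delta}$, I would combine the elementary inequality $|R|\leq \sum_{x_1,x_2}|K(x_1,m_1,x_2,m_2)|^{2}$ (which follows from \eqref{eq:KsquareK} and Lemma~\ref{prop:symR}) with a bulk bound of the form $|K(x_1,m_1,x_2,m_2)|^{2}=O(1/|x_1-x_2|^{2})$ inherited from the discrete sine kernel behavior in the interior of $\mathcal D$. The restricted sum $\sum 1/|x_1-x_2|^{2}$ over $\{x_1\geq y_1,\,x_2<y_2\}$ inside $L\mathcal D$ is $O(\log L)$, which yields the required uniform estimate.

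The main obstacle will be the careful bookkeeping of the contour deformations: selecting steepest descent paths that simultaneously respect the nesting required for convergence of the geometric sums, pass through the correct saddles with the correct orientation, and identify exactly which residues at $w=z'$ and $z=w'$ are crossed during the deformation. A secondary technical point is verifying that, after the residue reduction, the remaining quadruple-integral piece is subdominant to the double-integral contribution, uniformly on compact subsets of $\mathcal D$ when $\|(\xi_1,\mu_1)-(\xi_2,\mu_2)\|\geq L^{-1/2+\delta}$, together with patching the analysis across the separating line $\mu_j=\mu_0$ where the form of $F$ changes.
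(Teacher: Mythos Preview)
Your overall strategy for \eqref{eq:limR1} matches the paper: derive a quadruple integral for $R$ by summing geometric series (this is exactly Proposition~\ref{prop:quadR}), then deform all four contours to steepest descent/ascent paths through $\Omega_j,\bar\Omega_j$ and track the residues picked up at the factors $(w-z)^{-1},(w-z')^{-1},(w'-z)^{-1},(w'-z')^{-1}$. One point to sharpen: the leading contribution is the double integral obtained after taking \emph{both} residues $w=z'$ and $w'=z$; the exponent then cancels identically, leaving
\[
\frac{1}{(2\pi i)^2}\int_{\bar\Omega_1}^{\Omega_1}\int_{\bar\Omega_2}^{\Omega_2}\frac{dz\,dz'}{(z-z')^2},
\]
which is evaluated \emph{exactly} (no Gaussian saddle approximation) to the Green's function. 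So the phrase ``saddle contributions of this reduced integral'' is misleading; the saddle points enter only as the endpoints of the arcs picked up during deformation. The paper in addition has to show that the remaining quadruple, four triple, and one further double integral are all $o(1)$, which is where the actual steepest descent estimates live.

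Your route to \eqref{eq:boundR} is genuinely different from the paper and, as written, has a gap. The inequality $|R|\le \sum_{x_1,x_2}|K(x_1,m_1,x_2,m_2)|^2$ does \emph{not} follow from \eqref{eq:KsquareK} and Lemma~\ref{prop:symR}: the kernel $K$ is not Hermitian, so $K(a,b)K(b,a)\neq |K(a,b)|^2$, and individually $|K(a,b)|$ can be exponentially large in $L$. What is true is that the conjugation by $G$ from Lemma~\ref{lem:conj} cancels in the product, $K(a,b)K(b,a)=K_G(a,b)K_G(b,a)$, and then one can try to bound $\sum |K_G(a,b)||K_G(b,a)|$ via the decomposition $K_G=I_{1,G}+I_{2,G}$ and the estimates \eqref{eq:IG1a}--\eqref{eq:IG22}. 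This is more delicate than the ``$1/|x_1-x_2|^2$ sine kernel'' heuristic: the $I_{2,G}$ piece is only $O(L^{-1/2})$ for nearby points, not $O(1/|x_1-x_2|)$, and you must also handle $m_1\neq m_2$. The paper avoids this entirely by extracting the logarithmic bound directly from the same quadruple integral, observing that when $\Omega_1$ and $\Omega_2$ coalesce the various residual integrals develop at worst logarithmic singularities in $|\Omega_1-\Omega_2|$.
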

The proof of Proposition \ref{prop:asymptR1} will be postponed to Section \ref{sec:asymptoticsR}. Finally we prove Proposition \ref{prop:asymptR1}.

\begin{proof}[Proof of Proposition \ref{prop:asymptR1}]
Write $\langle h,\phi\rangle$  as $X_f$ with \[f(x,m)=-\frac{1}{L^2}\sum_{y\leq x} \triangle \phi( \Omega(x/L,m/L)) J(x/L,m/L).\] Then  $D f=-L^{-2} \triangle \phi$. Hence by $\var \langle h-\EE h,\phi\rangle=\var \langle h, \phi\rangle=\var X_f$   and  by Proposition \ref{prop:variancelinstat} we obtain
\begin{multline*}
\var \langle h-\EE h,\phi\rangle
=-\frac{\pi}{L^4}\sum_{(x_1,m_1)}\sum_{(x_2,m_2)} \triangle \phi(\Omega(x_1/L,m_1/L))\triangle \phi (\Omega(x_2/L,m_2/L) )\\
\times R(x_1,m_1,x_2,m_2) J(x_1/L,m_1/L) J(x_2/L,m_2/L).
\end{multline*}
The right-hand side can be viewed as a Riemann sum and hence by the fact that $\phi$ has compact support we have that 
\begin{multline*}
\lim_{L\to \infty}  \var\langle h-\EE h,\phi\rangle=-\iiiint \triangle \phi(\Omega(\xi_1,\mu_1)) \triangle 
\phi(\Omega(\xi_2,\mu_2))\\
\times \frac{1}{2\pi}\log \left|\frac{\Omega(\xi_1,\mu_1)-{\Omega(\xi_2,\mu_2)}}{\Omega(\xi_1,\mu_1)-\overline{\Omega(\xi_2,\mu_2)}}\right| J(\xi_1,\mu_1)J(\xi_2,\mu_2)  {\rm d}\xi_1{\rm d}\mu_1{\rm d}\xi_2{\rm d}\mu_2.
\end{multline*}
Here we used \eqref{eq:limR1} for points $(\xi_j,\mu_j)$ that are far apart and \eqref{eq:boundR} to show that the contribution of the points that are close is negligible.
By a change of variables we obtain 
\begin{multline*}\lim_{L\to \infty}  \var\langle h-\EE h,\phi\rangle\\=-\iint_{\HH \times \HH} \triangle \phi(\Omega_1) \triangle 
\phi(\Omega_2) \frac{1}{2\pi}\log \left|\frac{\Omega_1-{\Omega_2}}{\Omega_1-\overline{\Omega}_2}\right| {\rm d}m(\Omega_1) {\rm d}m(\Omega_2),\end{multline*}
where ${\rm d}m$ stands for the planar Lebesgue measure.
The fact of the matter is that  the logarithm in the integrand is the Green's function for the Laplace operator on the upper half plane with Dirichlet boundary conditions. Therefore
\[\lim_{L\to \infty}  \var\langle h-\EE h,\phi\rangle =-\iint_{\HH} 
\phi(\Omega)\triangle \phi(\Omega) {\rm d}m(\Omega)=\|\phi\|_\nabla^2,\]
where the last step follows by  integration by parts. 
\end{proof}

\section{Gaussian fluctuations}\label{sec:proof}

In this section we present a proof of Theorem \ref{th:main}.   The starting point is again that we can write $\langle h,\phi\rangle$ as a linear statistic $X_f$ with \[f(x,m)=-\frac{1}{L^2}\sum_{y\leq x} \triangle \phi( \Omega(x/L,m/L)) J(x/L,m/L),\]
as given in Lemma \ref{lem:linearstat}. Note that the support of $f$ is unbounded (in contrast to the compact support of  $\phi$). Indeed, the function $f$ is constant on horizontal rays that are at the right of the support of $\phi$ (scaled with $L$).  

The unbounded support of $f$ turns out to be inconvenient in the proof. Therefore, we split the function $f=f_1+f_2$ such that $f_2$ has bounded support containing the  (scaled) support of $\triangle \phi \circ \Omega$.  To this end, fix $\eps>0$ and define
\begin{align}\label{def:deps} \mathcal D_\eps =\{(\xi,\mu) \in \mathcal D : \Im \Omega(\xi,\mu)>\eps\}.\end{align}
Now split  $f$ as 
\begin{align}\label{eq:deffj}
f=f_1+f_2, \qquad f_1=f \chi_{\mathcal D_\eps}.
\end{align}
where $\chi_{\mathcal D_\eps}$ stands for the characteristic function of the set $\mathcal D_\eps$.  The point  is to choose $\eps$ small enough such that $\mathcal D_\eps$ contains the  support of  $\triangle \phi \circ \Omega$. See also Figure~\ref{fig:deps}. In that case, the function $f_2$ contains all the information of $\phi$.

In Section \ref{sec:varxf1} we first prove that the contribution of $X_{f_1}$ to $X_{f}$  is negligible. More precisely, the variance tends to zero  as $\eps \downarrow 0$ after taking the limit $L\to \infty$. Then in Section \ref{sec:varxf2} we prove that the fluctuation of $X_{f_2}$ are Gaussian. Finally, based on these two results and a standard probability argument we prove Theorem \ref{th:main} in Section \ref{sec:proofmainresult}  

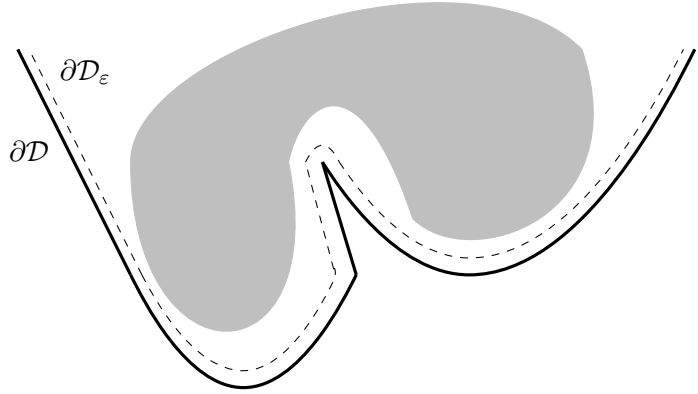
\begin{figure}[t]
\centering{
\begin{tikzpicture}[scale=1.5]
\draw[very thick] (-1,1) parabola bend (0,0) (1,1);
\draw[very thick]  (0.7,2) parabola bend (2,1) (4,3);
\draw[very thick]  (1,1) -- (0.7,2);
\draw[very thick]  (-1,1) .. controls (-1.5,2) and (-1.5,2) .. (-2,3);
\draw[dashed]  (-0.9,1) parabola bend (0,0.15) (0.8,0.95);
\draw[dashed] (0.7,2) parabola bend (2,1) (4,3);
\draw[dashed] (0.8,1.05) .. controls (0.82,1.1) and (0.82,0.9) .. (0.8,.95);
\draw[dashed] (0.8,1.05) -- (0.55,2);
\draw[dashed] (-0.9,1) .. controls (-1.4,2) and (-1.4,2) .. (-1.9,3);
\draw[dashed] (0.55,2) parabola bend (0.7,2.15) (0.85,2);
\draw[dashed] (0.85,2) parabola bend (2,1.15) (3.9,3);
\draw (-1.9,2.1) node {$\partial \mathcal D$};
\draw(-1.4,2.8) node {$\partial \mathcal D_\eps$};
\filldraw[lightgray] (-1,2) .. controls (-1,0) and (0.8,0) .. (0.4,2) -- (0.4,2) .. controls (0.5,2.5) and (1,3) .. (1.5,1.5)--
(1.5,1.5) .. controls (2,1) and (3.5,1.5) .. (3,3) --  (3,3) .. controls (2,4) and (-1,3) .. (-1,2);
\end{tikzpicture}
}
\caption{The solid curve represents $\partial \mathcal D$ and the dashed curve  $\partial \mathcal D_\eps$. The shaded region is the (compact) support of $\triangle \phi \circ \Omega$.}
\label{fig:deps}. \end{figure}

\subsection{The variance of $X_{f_1}$}  \label{sec:varxf1}
For $X_{f_1}$ we prove the following lemma, for which the main ingredient is Proposition \ref{prop:asymptR1}. 
\begin{lemma}\label{lem:varianceXf1} For $\eps>0$, let $X_{f_1}$ and ${f_1}$ be as in \eqref{eq:linstat} and \eqref{eq:deffj}. There exists a positive function $g_\phi$, with 
$\lim_{\eps \downarrow 0} g_\phi(\eps)=0$, such that
\[
\limsup _{L\to \infty} \var X_{f_1}\leq \big(g_\phi(\eps)\big)^2.
\]
 \end{lemma}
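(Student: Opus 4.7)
The plan is to apply Proposition~\ref{prop:variancelinstat} to $X_{f_1}$, localise $Df_1$ on the boundary of the bulk $\partial\mathcal{D}_\varepsilon$ where $\Im\Omega=\varepsilon$, and then use the asymptotics of $R$ from Proposition~\ref{prop:asymptR1}: since the Dirichlet Green's function on $\HH$ vanishes as one of its arguments approaches the real line, the resulting boundary--boundary pairing vanishes as $\varepsilon\to 0$. (We read $f_1$ as the part of $f$ supported outside the bulk, which is the interpretation consistent with the statement of the lemma.)

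First I would compute $Df_1$ explicitly. Since $\phi$ is $C^2$ with compact support in $\HH$, for $\varepsilon$ small enough $\operatorname{supp}(\triangle\phi\circ\Omega)$ is a compact subset of $\mathcal{D}_\varepsilon$, and a direct computation from \eqref{eq:deff} gives
\[
Df(x,m) = -\frac{\sqrt{\pi}}{L^2}\triangle\phi(\Omega(x/L,m/L))\,J(x/L,m/L)\,\chi_{L\mathcal{D}}(x,m),
\]
which is supported in $L\mathcal{D}_\varepsilon$. The discrete Leibniz rule then decomposes $Df_1$ into two pieces: one proportional to $Df$ times the cutoff, which vanishes for disjoint-support reasons, and one of the form $f(x-1,m)$ times the jump of the cutoff across $\partial L\mathcal{D}_\varepsilon$ row by row. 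The contributions at the left entries to the bulk vanish because $f(\cdot-1,m)=0$ to the left of $\operatorname{supp}(\triangle\phi\circ\Omega)$; at the right exits the partial sum has already integrated the whole row, yielding $Df_1(x,m) = c_m$ at that single site per row, where
\[
c_m := -\frac{\sqrt{\pi}}{L^2}\sum_{y:(y,m)\in L\mathcal{D}}\triangle\phi(\Omega(y/L,m/L))\,J(y/L,m/L) = O(1/L).
\]
Proposition~\ref{prop:variancelinstat} then gives $\var X_{f_1} = \sum_{m_1,m_2} c_{m_1}c_{m_2}\,R(x^*(m_1),m_1,x^*(m_2),m_2)$, where $(x^*(m),m)$ denotes the (generically unique) right-exit point of $L\mathcal D_\varepsilon$ on row $m$; multiple exits after the cusp birth are handled analogously.

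Next I would evaluate this double sum as $L\to\infty$. Splitting into pairs with $\|(\mu_1,\xi^*(\mu_1))-(\mu_2,\xi^*(\mu_2))\|\le L^{-1/2+\delta}$ and its complement, the near-diagonal part is controlled by $O(L^{3/2+\delta})\cdot O(L^{-2}\log L)\to 0$ using the $O(\log L)$ bound of Proposition~\ref{prop:asymptR1}, while the complementary sum is a Riemann sum converging to
\[
-\frac{1}{2\pi^2}\iint \gamma(\mu_1)\gamma(\mu_2)\log\left|\frac{\Omega^\varepsilon(\mu_1)-\Omega^\varepsilon(\mu_2)}{\Omega^\varepsilon(\mu_1)-\overline{\Omega^\varepsilon(\mu_2)}}\right|\,d\mu_1\,d\mu_2,
\]
with $\gamma(\mu):=\lim_{L\to\infty} L\,c_m$ (for $\mu=m/L$) continuous with compact support, and $\Omega^\varepsilon(\mu):=\Omega(\xi^*(\mu),\mu)$ tracing the level set $\{\Im z=\varepsilon\}$. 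Writing $\Omega^\varepsilon(\mu)=u(\mu,\varepsilon)+i\varepsilon$, the log-ratio becomes $\log|(u_1-u_2)/(u_1-u_2+2i\varepsilon)|$, which tends to $0$ pointwise for $\mu_1\ne\mu_2$ as $\varepsilon\to 0$. Together with an $\varepsilon$-uniform integrable majorant supplied by the logarithmic integrability of the diagonal singularity and the Lipschitz behaviour of $u(\cdot,\varepsilon)$, dominated convergence makes the absolute value of this integral, which we take as $g_\phi(\varepsilon)^2$, tend to $0$.

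The main obstacle is the $\varepsilon$-uniform regularity of the level curves $\{\Im\Omega=\varepsilon\}$ near the singular points of $\partial\mathcal{D}$ (the cusp and the points where the boundary meets $x+m=0$, the horizontal axis, and $m=m_0$), where $\Omega$ fails to be a diffeomorphism; these corners must be shown not to spoil either the Riemann-sum approximation or the dominated convergence step.
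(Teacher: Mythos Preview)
Your approach is essentially the paper's: localise $Df_1$ on $\partial\mathcal D_\eps$, apply Proposition~\ref{prop:variancelinstat} with the asymptotics of $R$, recognise the Riemann sum for a double boundary integral of the Dirichlet Green's function, and let $\eps\downarrow 0$. Two small comments. First, the paper parametrises the whole curve $\partial\mathcal D_\eps$ by a parameter $t$ with $\mu_\eps'(t)=\pm 1$ rather than by $\mu$ alone; this absorbs the post-critical multi-component case cleanly, where in fact the \emph{left} entry of the drifting cloud also carries a nonzero value of $Df_1$ (your claim that all left entries vanish uses that every $y\le x-1$ lies to the left of the support, which fails there). Second, your worry about $\eps$-uniform regularity near the singular boundary points does not really arise: the paper simply notes that $\delta$ inherits compact support from $\phi$, so the relevant portion of $\partial\mathcal D_\eps$ stays in a fixed compact, and the dominating function $\log\bigl(1+4\eps^2/|\Omega(t)-\Omega(s)|^2\bigr)$ is integrable and tends to $0$ pointwise.
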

\begin{proof}
By Proposition \ref{prop:variancelinstat} we need to compute $D f_1$. We recall that $\phi$ has compact support, so that we can choose $\eps>0$ small enough so that $\triangle \phi\circ \Omega$ has a support that is entirely contained in $\mathcal D_\eps$.  In that case we have that $Df_1(x,m)\neq 0$ only if $(x,m)$ is in  the set \begin{multline}B=\{(x,m)\in L\mathcal D_\eps \mid (x-1,m) \in L \mathcal \mathcal D\setminus D_\eps\}\\
\cup 
\{(x,m)\in L\left(\mathcal D\setminus \mathcal D_\eps\right) \mid (x-1,m) \in L \mathcal D_\eps\}.\end{multline}
Note that $B$ is a discrete set of points close to $L\partial \mathcal D_\eps$. 
Moreover,
\begin{align}\label{eq:dfx1b}
Df_1(x,m) = \pm\frac{\sqrt \pi}{L^2} \sum_{y\leq x\pm1}\triangle \phi(\Omega(y/L,m/L)) J(x/L,m/L) 
\end{align}
if $(x,m)\in B$ and $Df_1(x,m)=0$ otherwise. Hence \begin{align}\label{varianceB}
\var X_{f_1}=\sum_{(x_1,m_1)\in B}\sum_{(x_2,m_2)\in B} Df_1(x_1,m_1) D f_1(x_2,m_2) R(x_1,m_1,x_2,m_2).
\end{align} 
The right-hand side is a Riemann sum for a double integral over $\partial \mathcal D_\eps\times \partial \mathcal D_\eps$.

First note that the right-hand side of \eqref{eq:limR1},  for points $(\xi_j,\mu_j)\in \partial \mathcal D_\eps$, can be written as\begin{multline}
\label{alternativeGreen}
-\frac{1}{2\pi^2} \log\left|\frac{\Omega(\xi_1,\mu_1)-\Omega(\xi_2,\mu_2) }{\Omega(\xi_1,\mu_1)-\overline{\Omega(\xi_2,\mu_2)}}\right| 
=\frac{1}{2\pi^2} \log\left|1+\frac{2{\rm i} \eps}{\Omega(\xi_1,\mu_1)-\Omega(\xi_2,\mu_2)}\right|\\=\frac{1}{4\pi^2}  \log\left|1+\frac{4 \eps^2}{|\Omega(\xi_1,\mu_1)-\Omega(\xi_2,\mu_2)|^2}\right| \end{multline}
where in the last inequality we  used that $\Omega(\xi_1,\mu_1)-\Omega(\xi_2,\mu_2)\in \R$ for $(\xi_j,\mu_j)\in \partial \mathcal D_\eps$. 

In order to interpret \eqref{varianceB}  as a Riemann sum for an integral over $\partial \mathcal D_\eps\times \partial \mathcal D_\eps$, we use the following parametrization for $\mathcal D_\eps$  that is based on the projection onto the vertical coordinate.  There exists a continuous function $(\eps,t) \mapsto (\xi_\eps(t),\mu_\eps(t))$ and points  $-\infty=t_0<t_1^{(\eps)}<t_2^{(\eps)}<t_3^{(\eps)}<t_4=\infty$ that depend continuously on $\eps$, such that  $\mu'_\eps(t)$ is constant on $(t_j^{(\eps)},t_{j+1}^{(\eps)})$ and takes the values $\pm 1$, and
\[
\mathcal D_{\eps}=\{ (\xi_\eps(t),\mu_\eps(t)) \mid t\in \R\}.
\]
By \eqref{eq:dfx1b} we see that the limiting behavior of  $\pm LDf_1([L\xi_\eps(t)],[L \mu_\eps(t)])$ is given by
\begin{align}\label{delta} \delta(t)=\sqrt{\pi}\int_{\xi_\eps(t)}^\infty \triangle \phi(\Omega(\xi,\mu_\eps(t)) J(\xi,\mu_\eps(t)) {\rm d}\xi.\end{align}
By taking the limit $L\to \infty$ in \eqref{varianceB} and using \eqref{alternativeGreen} and \eqref{delta} we obtain \begin{multline*}
\limsup_{L\to \infty}  \var X_{f_1}\\
\leq \frac{1}{4\pi}\int_\R \int_\R |\delta  (t)| | \delta (s)| 
  \log\left|1+\frac{4 \eps^2}{|\Omega(t)-\Omega(s)|^2}\right|  {\rm d} s{\rm d}t=:\big(g_\phi(\eps)\big)^2.
  \end{multline*}
Note that since $\phi$ has compact support, also $\delta$ has compact support and hence the $g_\phi(\eps)<\infty$. By taking the limit, we have  $\lim_{\eps \downarrow 0} g_\phi(\eps)=0$ and this proves the statement.
\end{proof}

\subsection{Gaussian fluctuations for $X_{f_2}$} \label{sec:varxf2}
The purpose of this paragraph is to prove the following proposition.
\begin{proposition} \label{lem:gfluctf2}
Let $f_2$ be as in \eqref{eq:deffj}. Then 
\begin{align}\label{eq:lemgfluct}
\lim_{L \to \infty} \left(
\EE \left({\rm e}^{{\rm i} t(X_{f_2}-\EE X_{f_2})}\right) -{\rm e}^{-\frac{1}{2} t^2 \var X_{f_2}}\right)=0,
\end{align}
uniformly for $t$ in compact subsets of $\C$.
\end{proposition}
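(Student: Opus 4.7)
The plan is to write the characteristic function as a Fredholm determinant, take its logarithm and expand in cumulants, and then show that only the first two cumulants survive the $L\to\infty$ limit. Since $f_2$ is supported on a bounded (scaled) region inside a compact subset of $\mathcal D$, its support $B\subset\Z\times\N$ is a finite set with $|B|=\OO(L^2)$, so the standard Fredholm determinant identity for a determinantal process gives $\EE\,e^{{\rm i}tX_{f_2}}=\det(I+(e^{{\rm i}tf_2}-1)K)$, a genuine finite-dimensional determinant. Using $\log\det(I+A)=\sum_{n\geq 1}\tfrac{(-1)^{n+1}}{n}\Tr A^n$, expanding $e^{{\rm i}tf_2}-1=\sum_{k\geq 1}\tfrac{({\rm i}t)^k}{k!}f_2^k$, and collecting in powers of $t$, one obtains the cumulant expansion
\[
\log\EE\,e^{{\rm i}t(X_{f_2}-\EE X_{f_2})}=-\tfrac{1}{2}t^2\var X_{f_2}+\sum_{N\geq 3}\frac{({\rm i}t)^N}{N!}C_N(X_{f_2}),
\]
where each $C_N$ is a finite linear combination of traces of the form $\Tr(f_2^{k_1}K\cdots f_2^{k_n}K)$ with $\sum k_j=N$. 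The task thus reduces to showing that $C_N(X_{f_2})\to 0$ for every $N\geq 3$, together with a uniform-in-$L$ tail bound on the cumulant series that lets one interchange the limit and the summation in $N$ for $t$ in compact subsets of $\C$.

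To handle each $C_N$ I would mimic the Abel summation trick used in Proposition~\ref{prop:variancelinstat}: the trace of length $n$ is an $n$-fold lattice sum of a product of kernels weighted by factors of $f_2$, and summation by parts converts every $f_2$ into $Df_2$ paired with an iterated partial sum of the $K$'s, producing a generalised kernel $R^{(n)}$ playing the role of the variance kernel $R$. Because $f_2$ is a lattice Riemann sum of a function whose $\xi$-derivative is $-\sqrt{\pi}L^{-1}(\triangle\phi\circ\Omega)J$, one has $\|Df_2\|_\infty=\OO(L^{-2})$ on a set of $\OO(L^2)$ lattice points, so each pair of factors $(Df_2,Df_2)$ contributes an $\OO(1)$ Riemann sum — precisely the mechanism that gave the nontrivial limit for the variance in Proposition~\ref{prop:variance}. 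The essential new ingredient for $N\geq 3$ is that $R^{(N)}$ carries at least one additional \emph{off-diagonal} factor of $K$ whose bulk behaviour decays to zero in $L$, and this extra decay is what forces $C_N\to 0$. I expect it to come from the steepest-descent analysis of $K$ after a suitable conjugation $K\mapsto \psi^{-1}K\psi$, which (as I read the roadmap) is the content of Lemma~\ref{lem:conj}.

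For the uniformity in $t$ on compacts, a crude a priori bound on the traces, obtained from the finite rank of $K|_B$ together with $\|f_2\|_\infty=\OO(L^{-1})$, should yield $|C_N|\leq C^N$ for some $L$-independent constant $C$, so that $\bigl|\tfrac{({\rm i}t)^N}{N!}C_N\bigr|\leq \tfrac{(C|t|)^N}{N!}$ is summable; dominated convergence in $N$ then promotes pointwise vanishing of each $C_N$ to uniform convergence on compacts. The main technical obstacle is the bulk estimate on $R^{(n)}$ for $n\geq 3$: unlike $R^{(2)}$, which has the explicit logarithmic Green's function limit of Proposition~\ref{prop:asymptR1}, for $n\geq 3$ there is no reason to expect such a clean explicit limit, and one must instead exploit oscillatory cancellation between the saddle-point contributions to $K$ in the relevant trace cycle. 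That is the technical heart of the argument, and I expect it to be discharged by Lemma~\ref{lem:conj} and its saddle-point consequences carried out in Section~\ref{sec:asymptoticsK}.
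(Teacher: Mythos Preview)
Your overall architecture is right and matches the paper: Fredholm determinant, expand $\log\det$, show the terms beyond the second vanish, and control the tail uniformly in $L$. But two of your proposed mechanisms are not the ones that actually work, and one of them hides a genuine gap.

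First, the paper does \emph{not} do Abel summation to pass to $Df_2$ and a generalised kernel $R^{(n)}$ for $n\geq 3$. It works directly with the traces $\Tr\big((e^{{\rm i}tf_2}-1)K_G\big)^\ell$, where $K_G$ is the conjugated kernel of Lemma~\ref{lem:conj}. Using $\|f_2\|_\infty=\OO(L^{-1})$ together with the Schatten bounds $\|K_G\|_2=\OO(L^{1+\delta})$, $\|K_G\|_\infty=\OO(L)$, one reduces $\Tr\big((e^{{\rm i}tf_2}-1)K_G\big)^\ell$ to $({\rm i}t)^\ell\Tr(f_2K)^\ell$ plus $o(1)$. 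Your intuition that ``an extra off-diagonal factor of $K$'' provides decay is incorrect here: the norm estimate only gives $|\Tr(f_2K_G)^\ell|\leq \|f_2\|_\infty^\ell\|K_G\|_2^2\|K_G\|_\infty^{\ell-2}=\OO(L^{2\delta})$, which does \emph{not} go to zero. What kills $\Tr(f_2K)^\ell$ for $\ell\geq 3$ is not decay but an algebraic cancellation: substituting the four-term saddle-point asymptotics for $K$ (Lemma~\ref{lem:conj}(iv)) into the cyclic product yields $4^\ell$ terms, most of which oscillate and are negligible, but $4\ell$ of them are \emph{non-oscillating} and individually of order $1$. These survive only as symmetrised sums of $\prod_{j}(\omega_j-\omega_{j+1})^{-1}$ over cyclic orderings, and Kenyon's combinatorial lemma (\cite[Lemma~7.3]{Kenyon}) says that this sum over $\ell$-cycles is identically zero. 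This is the step you are missing; Lemma~\ref{lem:conj} alone does not supply it.

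Second, your proposed tail bound $|C_N|\leq C^N$ from ``finite rank of $K|_B$'' is not available: without conjugation the entries of $K$ are exponentially large in $L$, and even after conjugation the crude norm bound on $\Tr(f_2K_G)^N$ is $\OO(L^{2\delta})$, not $\OO(1)$. The paper instead organises the expansion in $\ell$ rather than in powers of $t$, bounds $|\Tr((e^{{\rm i}tf_2}-1)K_G)^\ell|\leq (C|t|)^\ell D$ for $\ell\geq 4$ via the Schatten-$4$ norm, and thereby gets uniform convergence of $\log\det$ only on a small disc $|t|<C^{-1}$. The extension to all of $\C$ is then obtained by rewriting the left-hand side as ${\det}_4(1+A)\exp(\tfrac13\Tr A^3)$, using the regularised-determinant bound $|{\det}_4(1+A)|\leq \exp(\Gamma_4\|A\|_4)$, and invoking Montel's theorem on the resulting normal family of entire functions. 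Your dominated-convergence-in-$N$ plan would need a uniform cumulant bound that the available estimates do not give.
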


The proof of Proposition \ref{lem:gfluctf2} is based on some asymptotic results on the kernel $K$, that we will first present. We recall that for $p\geq 1$ the $p$-th Schatten norm of a matrix $A$ is defined as 
\begin{align}
\|A\|_p^p=\Tr |A^*A|^{p/2}=\sum \sigma_j(A)^p,
\end{align}
where  $\sigma_j(A)$ are the singular values of $A$, counted according to multiplicity. Moreover, $\|A\|_\infty=\sigma_{max}(A)$, i.e. the maximal singular value.
\begin{lemma} \label{lem:conj} Let $\delta>0$ and $\mathcal D_c$ be a compact subset of $\mathcal D$.   There exists a function $G:\left( \Z\times \N\right) \cap L\mathcal D_c \to (0,\infty)$ such that the matrix 
\begin{align}
\label{eq:defG}
K_G(x_1,m_1,x_2,m_2)=\frac{G(x_1,m_1)}{G(x_2,m_2)} K(x_1,m_1,x_2,m_2)\end{align}
for $(x_1,m_1),(x_2,m_2) \in L \mathcal D_c$ satisfies
\begin{enumerate}
\item $\|K_G\|_2=\OO(L^{1+\delta})$ as $L\to \infty$.
\item $\|K_G\|_4=\OO(L)$as $L\to \infty$.
\item $\|K_G\|_\infty=\OO(L)$ as $ L\to \infty$. 
\end{enumerate}
Moreover, if we set $\Omega_j=\Omega(x_j/L,m_j/L)$ and $F_j(\cdot)=F(\cdot \mid x_j/L,m_j/L)$, then
\begin{enumerate}
\item[4.] for  $(x_j,m_j)\in L\mathcal D_c$  with $\|(x_1,m_1)-(x_2,m_2)\|> L^{\tfrac{1}{2}+\delta}$ we have
\begin{multline}\label{eq:asymptoticsI24terms}
K(x_1,m_1,x_2,m_2)=-\frac{1}{2\pi}\Big(\frac{{\rm e}^{L(F_1(\Omega_1) -F_2(\Omega_2))} }{\Omega_1(\Omega_1-\Omega_2){L} (-F''_1(\Omega_1))^{1/2} (F''_2(\Omega_2))^{1/2}}\\
+\frac{{\rm e}^{L(F_1(\Omega_1) -F_2(\overline\Omega_2))} }{\Omega_1(\Omega_1- \overline\Omega_2){L} (-F''_1(\Omega_1))^{1/2} (F''_2(\overline\Omega_2))^{1/2}}\\
+\frac{{\rm e}^{L(F_1(\overline\Omega_1) -F_2(\Omega_2))} }{\overline\Omega_1(\overline\Omega_1-\Omega_2){L} (-F''_1(\overline\Omega_1))^{1/2} (F''_2(\Omega_2))^{1/2}}\\
+\frac{{\rm e}^{L(F_1(\overline\Omega_1) -F_2(\overline\Omega_2))} }{\overline\Omega_1(\overline\Omega_1-\overline\Omega_2){L} (-F''_1(\overline\Omega_1))^{1/2} (F''_2(\overline\Omega_2))^{1/2}}\Big)(1+\OO(L^{-\delta/2})), 
\end{multline}
as $L\to \infty$, where the constant is uniform for $(x_j,m_j)\in L\mathcal D_c$.  The signs of the square roots $(-F''_1(\Omega_1))^{1/2}$ and $(F''_2(\Omega))^{1/2}$ are given in Lemma \ref{lem:I21}. \end{enumerate}
\end{lemma}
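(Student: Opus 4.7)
My plan is to do a steepest descent analysis of the double contour representation of $K$ and extract the Schatten norm bounds from the resulting pointwise asymptotics via an appropriate conjugation. As a first step I would use the contour manipulations in the proof of Lemma~\ref{lem:KsquareK} to rewrite $K(x_1,m_1,x_2,m_2)$ as a single double integral
\begin{equation*}
\frac{1}{(2\pi\mathrm{i})^2}\oint_{\Gamma_0}\oint_{\Gamma_{1,2}} \frac{\mathrm{e}^{tw}\,p_{m_1}(w)\,z^{x_2+m_2}}{\mathrm{e}^{tz}\,p_{m_2}(z)\,w^{x_1+m_1}}\,\frac{\mathrm{d}z\,\mathrm{d}w}{w(w-z)},
\end{equation*}
with the relative contour topology (one contour inside or outside the other) dictated by the sign of $m_1-m_2$. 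Under the macroscopic substitution $(x_j,m_j,t)=L(\xi_j,\mu_j,\tau)+\OO(1)$ the integrand becomes, up to prefactors of order one, $\mathrm{e}^{L(F_1(w)-F_2(z))}/(w(w-z))$, whose two complex saddles are $\Omega_j$ and $\overline\Omega_j$.

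For item (4) I would deform $\Gamma_0$ to a steepest descent contour for $\Re F_1$ through $\Omega_1$ and $\overline\Omega_1$, and $\Gamma_{1,2}$ to a steepest ascent contour for $\Re F_2$ through $\Omega_2$ and $\overline\Omega_2$, consistent with the topology imposed by Lemma~\ref{lem:KsquareK}. Standard Laplace-type analysis at each of the four pairs of saddles then produces the four terms in \eqref{eq:asymptoticsI24terms}, each with a Gaussian factor $1/(L|F''_j|)^{1/2}$ whose sign is fixed by the tangent direction of the steepest descent path at the saddle (the content of Lemma~\ref{lem:I21}, which I would invoke). The separation hypothesis $\|(x_1,m_1)-(x_2,m_2)\|>L^{1/2+\delta}$ translates, via the local Lipschitz continuity of $\Omega$ on $\mathcal D_c$ from Proposition~\ref{prop0}, into $|\Omega_1-\Omega_2|\geq c L^{-1/2+\delta/2}$, which keeps $1/(w-z)$ controlled on the $L^{-1/2}$-saddle neighborhoods and gives the stated remainder $\OO(L^{-\delta/2})$. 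For items (1)--(3) I would set $G(x,m)=\exp(-L\,\Re F(\Omega(x/L,m/L)\mid x/L,m/L))$. Since $F$ has real coefficients, $\Re F(\Omega_j)=\Re F(\overline\Omega_j)$, so all four exponentials in \eqref{eq:asymptoticsI24terms} become bounded after passing to $K_G$; combined with $|\Omega(p_1/L)-\Omega(p_2/L)|\sim |p_1-p_2|/L$ this yields $|K_G(p_1,p_2)|=\OO(1/|p_1-p_2|)$ for well-separated pairs, and $|K_G|=\OO(1)$ on the complementary near-diagonal region (using standard extended sine/Airy/Pearcey-type bounds on $K$ in the bulk, at the edges and near the cusp). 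A Schur test on the discrete analogue of $\int r^{-1}\,\mathrm{d}r$ yields $\|K_G\|_\infty=\OO(L)$, and summing $r^{-2}$ over $L\mathcal D_c$ yields $\|K_G\|_2=\OO(L^{1+\delta})$, the $L^\delta$ slack absorbing a logarithmic factor. For $\|K_G\|_4$ I would interpolate via $\|K_G\|_4^4\leq \|K_G\|_2^2\|K_G\|_\infty^2$ together with a sharpening of $\|K_G\|_2$ near the diagonal coming from the projection identity in Lemma~\ref{lem:KsquareK}, which kills the logarithmic excess.

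The main obstacle is the uniform-in-$\mathcal D_c$ saddle point analysis. As $(\xi,\mu)\to\partial\mathcal D$ the two saddles coalesce (and may additionally collide with one of the poles $\{0,1,2\}$ of $F$), so the steepest descent contours must be chosen to depend carefully on $(\xi_j,\mu_j)$ and to stay uniformly bounded away from these degenerations on $\mathcal D_c$. The piecewise definition of $F$ across $\mu=\mu_0$ in \eqref{eq:defF} further forces a case analysis in the regimes $\mu_j<\mu_0$ and $\mu_j>\mu_0$, where the saddle is cut out by a cubic rather than a quadratic equation and a real spurious saddle appears off the relevant steepest descent path. For each case one must verify that the desired deformation is compatible with the contour topology prescribed in the proof of Lemma~\ref{lem:KsquareK}. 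Packaging all of this into a clean global contour deformation, valid uniformly on compact subsets of $\mathcal D$, is the most delicate part of the argument.
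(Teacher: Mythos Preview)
Your overall strategy matches the paper's: steepest descent on the double integral \eqref{eq:defKdoubleint}, conjugate by $G=\exp(\pm L\,\Re F(\Omega))$, then harvest Schatten bounds from the resulting pointwise estimates. However, there is one structural step you omit. When you deform $\Gamma_0$ and $\Gamma_{1,2}$ to steep descent/ascent paths through $\Omega_1,\overline\Omega_1$ and $\Omega_2,\overline\Omega_2$, the two contours necessarily intersect (say at $\eta,\overline\eta$), and the factor $(w-z)^{-1}$ forces you to pick up a residue. The kernel therefore splits as $K=I_1+I_2$, where $I_2$ is the double integral you analyse and
\[
I_1=\frac{1}{2\pi\mathrm{i}}\int_{\overline\eta}^{\eta} \mathrm{e}^{L(F_1(w)-F_2(w))}\,\frac{\mathrm{d}w}{w}
\]
is a single integral along an arc joining the intersection points. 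This $I_1$ is not a lower-order correction absorbable into the four-saddle expansion: on the diagonal it is the dominant term (indeed $K(x,m,x,m)=I_1$ is what gives the limiting density \eqref{eq:limdensK}), and for well-separated points one needs a separate estimate---deforming the arc to a circular arc of radius $|\eta|$ and using monotonicity of $\Re(F_1-F_2)$ along it---to obtain $|I_{1,G}|\le A/(1+R)$ in general and exponential smallness for $R>L^{1/2+\delta}$. The paper carries this out in Lemmas~\ref{lem:I22} and~\ref{lem:I1}; without it your claim ``$|K_G|=\OO(1)$ near the diagonal'' has no justification, and the appeal to extended sine/Airy/Pearcey kernels is not a substitute (in any case, since $\mathcal D_c$ is compact in the bulk, the Airy and Pearcey regimes do not arise here).

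Two smaller points. First, to convert $\|(x_1,m_1)-(x_2,m_2)\|>L^{1/2+\delta}$ into a lower bound on $|\Omega_1-\Omega_2|$ you need the \emph{inverse} map $\Omega\mapsto(\xi,\mu)$ to be Lipschitz, not $\Omega$ itself; the paper reads this off the explicit formulas \eqref{eq:inverse1}--\eqref{eq:inverse2}. Second, your interpolation $\|K_G\|_4^4\le\|K_G\|_2^2\|K_G\|_\infty^2$ only yields $\OO(L^{4+2\delta})$, and Lemma~\ref{lem:KsquareK} concerns $K$, not $K_G$, so it does not directly sharpen $\|K_G\|_2$. The paper instead expands $\|K_G\|_4^4$ as a quadruple sum, shows the near-diagonal contributions are negligible, and bounds the far-apart part by $L^4$ times $\iiiint |\Omega_1-\Omega_2|^{-1}|\Omega_2-\Omega_3|^{-1}|\Omega_3-\Omega_4|^{-1}|\Omega_4-\Omega_1|^{-1}$, which is finite with no logarithmic divergence.
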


The proof of this lemma relies on a steepest descent analysis on the integral representation \eqref{eq:defKdoubleint} for the kernel $K$. This analysis requires a significant amount of work and we therefore postpone it to Section \ref{sec:asymptoticsK}.  The choice in the sign of the square roots  depends on a deformation of the contour in the steepest descent analysis. Since the the sign does not matter for the proofs in this section we omit  it here and postpone it to Lemma \ref{lem:I21}. 

In the proof of Proposition \ref{lem:gfluctf2} we will also need the following.
\begin{lemma} \label{lem:boundonf} Let $f$ be as in Lemma \eqref{lem:linearstat} and $f=f_1+f_2$ as in \eqref{eq:deffj}. Then 
\begin{enumerate}
\item $\|f_2\|_\infty =\OO(L^{-1})$ as $L \to \infty$.
\item $\|f_2\|_2=\OO(1)$ as $L\to \infty$. 
\end{enumerate} 
\end{lemma}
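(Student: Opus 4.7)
The plan is to derive both bounds by elementary size estimates on the formula \eqref{eq:deff} for $f$, exploiting the compactness of $\mathrm{supp}\,\triangle\phi$ in $\HH$ and the fact that $f_{2}$ is supported in the scaled bounded region $L\mathcal D_\eps$.

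For the sup norm, I would first invoke Proposition \ref{prop0}: since $\phi\in C^{2}$ has compact support in $\HH$ and $\Omega\colon\mathcal D\to\HH$ is a homeomorphism, the pull-back $S:=\Omega^{-1}(\mathrm{supp}\,\triangle\phi)$ is a compact subset of $\mathcal D$. Both $|\triangle\phi\circ\Omega|$ and the Jacobian $J$ from \eqref{eq:Jacobian} are continuous on $\mathcal D$ and hence bounded on $S$ by a constant $M=M(\phi)$. For each fixed $m$, the summand in \eqref{eq:deff} vanishes unless $(y/L,m/L)\in S$, which restricts $y$ to at most $O(L)$ lattice points (at most $L$ times the horizontal diameter of $S$). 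Consequently
\[
|f(x,m)|\;\le\;\frac{\sqrt{\pi}}{L^{2}}\cdot O(L)\cdot M\;=\;O(L^{-1})
\]
uniformly in $(x,m)$, and $\|f_{2}\|_\infty\le\|f\|_\infty=O(L^{-1})$ follows immediately.

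For the $2$-norm, I would combine this pointwise bound with the fact that the support of $f_{2}$ lies in $L\mathcal D_\eps\cap(\Z\times\N)$. Because $\mathcal D_\eps$ is bounded and therefore has finite Lebesgue area, this set has cardinality $O(L^{2})$, whence
\[
\|f_{2}\|_{2}^{2}\;\le\;\bigl|L\mathcal D_\eps\cap(\Z\times\N)\bigr|\cdot\|f_{2}\|_\infty^{2}\;=\;O(L^{2})\cdot O(L^{-2})\;=\;O(1).
\]

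There is no substantive obstacle here; the argument is a routine counting-and-size estimate. The only ingredients that need to be checked carefully are the continuity of $\Omega^{-1}$ (via Proposition \ref{prop0}, so that $S$ is compactly contained in $\mathcal D$) and the continuity of $J$ on $\mathcal D$ (so that $\sup_{S}J<\infty$), together with the boundedness of $\mathcal D_\eps\subset\mathcal D$.
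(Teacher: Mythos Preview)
Your approach is exactly what the paper has in mind; its own proof is a one-liner (``follows easily by the definitions \ldots\ and the fact that $\|\triangle\phi\|_\infty<\infty$''), and you have simply written out the details. Part (i) is fine.

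There is, however, a small slip in part (ii): the set $\mathcal D_\eps=\{(\xi,\mu)\in\mathcal D:\Im\Omega(\xi,\mu)>\eps\}$ is \emph{not} bounded. Since $\Omega$ is a homeomorphism onto $\HH$, we have $\mathcal D_\eps=\Omega^{-1}(\{\Im z>\eps\})$, and from the inverse formulas \eqref{eq:inverse1}--\eqref{eq:inverse2} one sees that points with $\Im\Omega$ fixed above $\eps$ and $\Re\Omega\to\infty$ correspond to $(\xi,\mu)$ with $\mu\to\infty$. What is true (and is all you need) is that $\mathrm{supp}(f_2)$ is contained in $O(L^2)$ lattice points: by the definition \eqref{eq:deff}, $f(x,m)=0$ whenever $m/L$ lies outside the $\mu$-projection of your compact set $S$, so only $O(L)$ values of $m$ occur; and for each such $m$ the horizontal slice of $\mathcal D$ (hence of $\mathcal D_\eps$) at height $m/L$ is a bounded interval, contributing $O(L)$ values of $x$. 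With this correction your estimate $\|f_2\|_2^2\le O(L^2)\cdot O(L^{-2})=O(1)$ goes through unchanged.
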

\begin{proof}
This follows easily by the definitions of $f_2$ in \eqref{eq:deffj} and $f$ in \eqref{eq:deff},  and the fact that $\|\triangle \phi\|_\infty<\infty$ by assumption.
\end{proof}
The last ingredient for the proof is the following result that is based on a lemma in \cite{Kenyon}.
\begin{lemma}We have that
\begin{align}\label{eq:kenyon}
\lim_{L\to \infty} \Tr (f_2K)^\ell=0, \qquad \ell \geq 3.
\end{align}
\end{lemma}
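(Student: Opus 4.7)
Since $f_2$ is a multiplication (diagonal) operator it commutes with the diagonal conjugation $G$ of Lemma~\ref{lem:conj}, so by cyclicity of the trace
\[
\Tr(f_2 K)^\ell \;=\; \Tr\bigl(G(f_2 K)^\ell G^{-1}\bigr) \;=\; \Tr(f_2 K_G)^\ell,
\]
reducing the problem to controlling $\Tr(f_2 K_G)^\ell$, where $K_G$ enjoys the Schatten-norm bounds of Lemma~\ref{lem:conj}.

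I would then apply the Schatten--H\"older inequality
\[
|\Tr(f_2 K_G)^\ell|\;\le\;\|(f_2 K_G)^\ell\|_1\;\le\;\|f_2 K_G\|_2^2\,\|f_2 K_G\|_\infty^{\ell-2},
\]
valid for $\ell\ge 2$. Combining $\|f_2\|_\infty=\OO(L^{-1})$ (Lemma~\ref{lem:boundonf}) with $\|K_G\|_\infty=\OO(L)$ and $\|K_G\|_2=\OO(L^{1+\delta})$ (Lemma~\ref{lem:conj}) through $\|AB\|_p\le \|A\|_\infty\|B\|_p$ produces $\|f_2 K_G\|_\infty=\OO(1)$ and $\|f_2 K_G\|_2=\OO(L^\delta)$, hence a \emph{preliminary} estimate $|\Tr(f_2 K_G)^\ell|=\OO(L^{2\delta})$. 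This is subpolynomial but not $o(1)$, so additional cancellation must be extracted from item~4 of Lemma~\ref{lem:conj}: substitute the four-term saddle-point expression $K_{\mathrm{sd}}$ for $K$, control the pointwise error $K-K_{\mathrm{sd}}$ by the Schatten argument above, and expand $\Tr(f_2 K_{\mathrm{sd}})^\ell$ into $4^\ell$ Riemann-sum contributions labelled by saddle choices $(\sigma_i,\tau_i)$. Each contribution carries a phase $L\sum_i\bigl[F_i(\Omega_i^{\sigma_i})-F_i(\Omega_i^{\tau_{i-1}})\bigr]$; the ones with non-cancelling phase ($\sigma_i\ne \tau_{i-1}$ for some $i$) give $o(1)$ by repeated summation-by-parts against the smooth factor $f_2$ (each step gaining a $1/L$), while the $2^\ell$ phase-cancelling contributions collapse, after the change of variables $\Omega:\mathcal D\to\HH$, to cyclic integrals of the shape $\sum_{\epsilon\in\{\pm\}^\ell}\int \prod_i h_i(\Omega_i)\,(\Omega_i^{\epsilon_i}-\Omega_{i+1}^{\epsilon_{i+1}})^{-1}\,d^2\Omega_i$.

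The main obstacle will be showing that for $\ell\ge 3$ this last cyclic sum vanishes. For $\ell=2$ the same mechanism produces the Dirichlet Green's function on $\HH$ and recovers the variance computed in Proposition~\ref{prop:variance}; for $\ell\ge 3$ the vanishing is the Gaussian-field statement that the $\ell$-th connected correlator of the Gaussian free field is zero. I would obtain it by a residue deformation in the $\Omega$-plane, using that each summand has only simple poles in the $\Omega_i^\pm$ and that the signs of the square roots $(\pm F_i'')^{1/2}$ dictated by Lemma~\ref{lem:I21} conspire so that the residues telescope around the cycle, yielding the required convergence $\Tr(f_2K)^\ell\to 0$.
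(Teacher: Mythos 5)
Your reduction is the same as the paper's up to the last step: conjugating by the diagonal $G$ to pass to $K_G$, discarding the contributions of nearby points via $\|f_2\|_\infty=\OO(L^{-1})$ and the norm bounds of Lemma \ref{lem:conj}, inserting the four-term saddle-point expansion \eqref{eq:asymptoticsI24terms}, and keeping only the non-oscillatory terms, which are (symmetric prefactor) times $\prod_{j}(\omega_j-\omega_{j+1})^{-1}$ with $\omega_j\in\{\Omega_j,\overline{\Omega}_j\}$, exactly as in \eqref{kenyonfinal}. The gap is in how you propose to kill these surviving terms. A fixed cyclic integral of the type $\int \prod_i h_i(\Omega_i)\,\prod_i(\Omega_i^{\epsilon_i}-\Omega_{i+1}^{\epsilon_{i+1}})^{-1}\,{\rm d}m(\Omega_1)\cdots{\rm d}m(\Omega_\ell)$ does \emph{not} vanish in general; there is no cancellation to be had term by term. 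Moreover, these are two-dimensional area integrals against compactly supported, non-analytic densities $h_i$ (built from $\triangle\phi\circ\Omega$, the Jacobian and the saddle-point prefactors), so "residue deformation in the $\Omega$-plane" is not an available tool, and the branch choices of $(\pm F_j'')^{1/2}$ from Lemma \ref{lem:I21} only enter through the single-variable (hence symmetric) prefactor -- they cannot produce a telescoping cancellation around the cycle. Appealing to the vanishing of higher connected correlators of the Gaussian free field is also circular at this point, since the present lemma is an ingredient in proving that the limit is Gaussian.

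The mechanism that actually closes the argument, and the one used in the paper, is combinatorial: the multiple sum \eqref{eq:expandedtracekenyon} is invariant under any relabeling of the summation variables $(x_1,m_1),\dots,(x_\ell,m_\ell)$, and since the prefactor multiplying \eqref{kenyonfinal} is a product of single-variable factors (hence symmetric), one may replace $\prod_j(\omega_j-\omega_{j+1})^{-1}$ by its average over all $\ell$-cycles $\sigma$, i.e. over $\prod_j(\omega_{\sigma(j)}-\omega_{\sigma(j+1)})^{-1}$. Lemma 7.3 of Kenyon's paper states that this cycle-averaged rational function is identically zero for $\ell\ge 3$, so the surviving contributions cancel exactly, and only the already-controlled error terms remain. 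Without this identity (or an equivalent algebraic cancellation), your outline does not yield $\Tr(f_2K)^\ell\to 0$; your preliminary Schatten bound $\OO(L^{2\delta})$ correctly signals that operator-norm estimates alone cannot finish the proof.
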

\begin{proof}
The proof of this statement follows by \eqref{eq:asymptoticsI24terms} and  using the same arguments as in  \cite[Sec. 7]{Kenyon}. We will  briefly highlight the main points here. To start with, expand the trace 
\begin{multline}\label{eq:expandedtracekenyon}
\Tr (f_2 K)^\ell =\sum_{(x_1,m_1)}\cdots \sum_{(x_\ell,m_\ell)} f(x_1,m_1)\cdots f(x_\ell,m_\ell)\\
\times K(x_1,m_1,x_2,m_2)K(x_2,m_2,x_3,m_3) \cdots K(x_\ell,m_\ell,x_1,m_1).\end{multline} 
Note that the support of $f_2$ is contained in $L\mathcal D_c$ for some compact subset of $\mathcal D$.  Moreover,  the contribution of points $(x_j,m_j)$ that are close is negligible. To this end, we note that $\|f_2\|_\infty=\OO(L^{-1})$ and that we can replace $K$ with $K_G$ and use \eqref{splitKG}, \eqref{eq:IG1a}, \eqref{eq:IG21} and \eqref{eq:IG22} to show that the contribution   coming from points for which $\|(x_j,m_j)-(x_{j+1},m_{j+1})\|\leq L^{1/2+\delta}$ for some $1\leq j \leq \ell$,  tends to zero as $L\to \infty$. 

Hence in \eqref{eq:expandedtracekenyon} we only need to consider points $(x_j/L,m_j/L)\in \mathcal D_c$ for which the subsequent point are sufficiently far apart such that we can use \eqref{eq:asymptoticsI24terms}. By expanding parenthesis we obtain $4^\ell$ terms. Most terms are highly oscillating and therefore their contribution to the sum \eqref{eq:expandedtracekenyon} is negligible. Hence there are only $4\ell$ surviving terms, that do not oscillate.  Up to multiplication with a symmetric function, each term can be written as 
\begin{align} \label{kenyonfinal} \prod_{j=1}^{\ell} \frac{1}{\omega_j-\omega_{j+1}},\end{align}
where $\omega_j=\Omega_j$ or $\overline{\Omega}_{j}$ and $\omega_{\ell+1}=\omega_1$.
The point  of the proof is that the sum  \eqref{eq:expandedtracekenyon} is invariant under permutation of  variables. Hence we can replace $\omega_j$ in \eqref{kenyonfinal} with $\omega_{\sigma(j)}$ for any permutation $\sigma$. Moreover, we can replace it by a sum over any set of permutations.   Lemma 7.3 in \cite{Kenyon} tells us that the sum of \eqref{kenyonfinal} over  $\ell$-cycles is zero and hence we obtain the statement. \end{proof}

We are now ready for the proof of Proposition  \ref{lem:gfluctf2}. The proof relies on a Fredholm determinant identity for the characteristic function of a linear statistic. The statement then follows by estimates on the operator in the determinant. %For more details on determinants of operators and basic results on Hilbert-Schmidt and trace class operators we refer to \cite{Simon}.

\begin{proof}[Proof of Proposition \ref{lem:gfluctf2}]
First we note that since $X_{f_2}$ is a  linear statistic, it is standard  that we can write the characteristic function as a (Fredholm) determinant 
\begin{align}\label{eq:chartodet}
\EE \left(\exp {\rm i} t X_{f_2} \right)=\det \left(1+({\rm e}^{{\rm i} t f_2}-1)K\right).
\end{align}
Indeed, by writing the exponential of the sum as the product of the exponentials we obtain
\[\EE \left(\exp {\rm i} t X_{f_2} \right)=\EE  \left( \prod_{(x,m)\in \mathcal C } \left(1+({\rm e}^{{\rm i} t f_2(x,m)}-1) \right)\right)
\]
where $\mathcal C$ at the right-hand side is a random configuration of points. By expanding the product we see that the right-hand side is a Fredholm determinant (see for example \cite{J} for more details).  

 Note that  $f_2$ has support in $L \mathcal D_c$ some compact $\mathcal D_c\subset \mathcal D$. Hence the same is true for ${\rm e}^{{\rm i} tf_2}-1$. This implies that we can (and do) restrict the matrix $K$ to the points on the grid $\Z\times \N $ that are inside $L\mathcal D_c$. In particular we can apply the results of Lemma  \ref{lem:conj}.
 
 Since the determinant is invariant under conjugation with any invertible matrix we have that 
\begin{align}\label{eq:chartodetconj}
\det \left(1+({\rm e}^{{\rm i} t f_2}-1)K\right)=\det \left(1+({\rm e}^{{\rm i} t f_2}-1)K_G,\right)
\end{align}
where $K_G$ is as in \eqref{eq:defG}. It is useful to replace $K_G$ to $K$ since we can control  norms of $K_G$ as $L\to \infty$ as given in Lemma \ref{lem:conj}.  In the remaining part of the proof, we analyze the asymptotic behavior of the determinant at the right-hand side of \eqref{eq:chartodetconj}.  

Let us first make some remarks on the operator in the determinant. First, by Lemma \ref{lem:boundonf} there exists a constant $C_1$ such that
\begin{align}
\|{\rm e}^{i t f_2}-1\|_\infty \leq t C_1  L^{-1}.
\end{align}
By combining this with the norms on $K_G $ in Lemma \ref{lem:conj} and using the fact that $\|AB\|_p\leq \|A\|_\infty\|B\|_p$, we see that there exists constant $C_2$ and $C_3$ such that 
\begin{align}\label{eq:inequalitiesnorm}
\|({\rm e}^{i t f_2}-1)K_G\|_\infty \leq t C_2 \quad \text{and}\quad  \|({\rm e}^{i t f_2}-1)K_G\|_4\leq t C_3.\end{align}
Clearly, the constants $C_j$ do not depend on $L$.

Now we return to \eqref{eq:chartodetconj}. The determinant can be rewritten as
\begin{align}\label{eq:expanddet}
\det\left(1+ \left({\rm e}^{i t f_2}-1\right)K_G\right)=\exp\left(
\sum_{\ell=1}^\infty \frac{(-1)^{\ell+1}}{\ell}
\Tr \left( ({\rm e}^{i t f_2}-1)K_G\right)^\ell\right).\end{align}
This expansion is valid for $t$ in a  neighborhood of the origin in the complex plane. A priori, this neighborhood depends on  $L$. Our first task is to show that the neighborhood can be chosen independent of $L$. To this end we note that for $\ell\geq 4$ we have 
\begin{multline*}
\left|\Tr \left( ({\rm e}^{i t f_2}-1)K_G\right)^\ell\right|\leq \|({\rm e}^{i t f_2}-1)K_G\|_\ell^\ell\\ 
\leq  \|({\rm e}^{i t f_2}-1)K_G\|_\infty^{\ell-4} \|({\rm e}^{i t f_2}-1)K_G\|_4^4
\end{multline*}
where in the last inequality we iteratively used the inequality $\|A\|_{p+1}^{p+1}\leq \|A\|_p^p \|A\|_\infty$ valid for $A$ in the $p+1$ Schatten class. Now by \eqref{eq:inequalitiesnorm}  we then have that there exists constants $C_4,D>0$, independent of $L$, such that 
\begin{align}
\label{eq:boundontrace}
\left|\Tr \left( ({\rm e}^{i t f_2}-1)K_G\right)^\ell\right|\leq t^\ell C_4^\ell D.\end{align}
This means that \eqref{eq:expanddet} is valid for $|t|< C_4^{-1}$, for all $L$.

The next step is to analyze the asymptotic behavior of the traces in the exponential at the right-hand side of \eqref{eq:expanddet}. We claim that 
\begin{align}\label{eq:tracelimit}
\lim_{L\to \infty}  \Tr \left( ({\rm e}^{i t f_2}-1)K_G\right)^\ell=0, \qquad \ell\geq 3.\end{align}
To this end,  we first observe that we have the following inequality 
\begin{align}\label{eq:tracelimitineq}
|\Tr f_2^{s_1} K_{G} \cdots f_2^{s_\ell} K_G | \leq \|f_2\|_\infty^{\sum s_j}  \|K_G\|_2^2 \|K_G\|_\infty^{\ell-2}, \quad \ell\geq 2.
\end{align}
Here we used the well-known identities $|\Tr A|\leq \|A\|_1$, $\|AB\|_1\leq \|A\|_2\|B\|_2$ and $\|AB\|_2 \leq \|A\|_2\|B\|_\infty$ where $\|\cdot\|_1$ stands for the trace norm. If $s_1+s_2+\ldots s_\ell \geq \ell+1$ the trace converges to zero as $L\to \infty$ by Lemma \ref{lem:conj}(i)+(iii) and Lemma \ref{lem:boundonf}(i).  Hence, by expanding the exponential, we see that  the only term in  \eqref{eq:tracelimit}   that  possibly remains in the limit, is  therefore the term with $s_j=1$. Hence we proved
\begin{align}
\label{eq:tracelimit2}\lim_{L\to \infty} \left( \Tr \left( ({\rm e}^{i t f_2}-1)K_G\right)^\ell-\Tr (i t f_2K_G)^\ell \right)=0, \qquad \ell\geq 3.
\end{align}
Now \eqref{eq:tracelimit} follows by \eqref{eq:tracelimit2} and \eqref{eq:kenyon}.

By  \eqref{eq:expanddet} and  \eqref{eq:tracelimit} we have
\begin{align}\label{eq:limitexpanddet}
\lim_{L \to \infty} \det\left(1+ ({\rm e}^{i t f_2}-1)K_G\right) {\rm e}^{-\Tr  ({\rm e}^{i t f_2}-1)K_G+\frac{1}{2} \Tr  \left(({\rm e}^{i t f_2}-1)K_G\right)^2} =1,
\end{align}
uniformly for $t$ in a neighborhood of the origin. Note that  we also changed the order of the limits, which is allowed by \eqref{eq:boundontrace}.

 We proceed by simplifying \eqref{eq:limitexpanddet} a bit further. 

By the identities after \eqref{eq:tracelimitineq} and  Lemma \ref{lem:boundonf} we have 
\[|\Tr f^\ell K_G|\leq   \|f\|_\infty^{\ell-1} \|f\|_2 \|K_G\|_2=\OO(L^{2+\delta -\ell}), \quad \text{as } L\to \infty, \quad \ell\geq 3,\]
and hence by expanding the exponential we obtain 
\begin{align} \label{eq:tracelimitsimpl1}
\Tr ({\rm e}^{i t f_2}-1)K_G ={\rm i} t \Tr f_2 K_G -\frac{t^2}{2} \Tr f_2^2K_G+\OO(L^{\delta-1})
\end{align}
as  $L\to \infty$.  Moreover, by \eqref{eq:tracelimitineq} we have 
\begin{align} \label{eq:tracelimitsimpl2}
\Tr \left(({\rm e}^{i t f_2}-1)K_G\right)^2=-{t^2}\Tr (f_2K_G)^2+\OO(L^{2\delta-1}).
\end{align}
Inserting \eqref{eq:tracelimitsimpl1} and \eqref{eq:tracelimitsimpl2} in \eqref{eq:limitexpanddet} leads to
\begin{align}
\lim_{L \to \infty} \det\left(1+ \left({\rm e}^{i t f_2}-1\right)K_G\right) {\rm e}^{-{\rm it }\Tr f_2 K_G+\frac{t^2}{2} \Tr (f_2^2K_G-f_2K_G f_2K_G)} =1.
\end{align}
By \eqref{eq:meanoflinstat} and \eqref{eq:varianceoflinstat} and the fact that we can replace $K_G$ back to $K$ in the trace we can write this as
\begin{align}
\lim_{L \to \infty} \det\left(1+ \left({\rm e}^{i t f_2}-1\right)K_G\right) {\rm e}^{- {\rm it} \EE X_{f_2}+\frac{t^2}{2} \var X_{f_2}} =1,
\end{align}
Since the variance of $X_{f_2}$ remains bounded as $L\to \infty$ (indeed, $X_{f_2}=X_{f}-X_{f_1}$ and both $\var X_f$ and $\var X_{f_1}$ are bounded), we also have 
\begin{align}\label{eq:limitlocal}
\lim_{L \to \infty} \left(\det\left(1+ \left({\rm e}^{i t f_2}-1\right)K_G\right) {\rm e}^{- {\rm it} \EE X_{f_2}} -{\rm e}^{-\frac{t^2}{2} \var X_{f_2}} \right)=0,
\end{align}
uniformly for $t$ in a neighborhood of the origin. Combining \eqref{eq:limitlocal} with \eqref{eq:chartodet} gives \eqref{eq:lemgfluct} in a neighborhood of the origin.

To prove that \eqref{eq:limitlocal} (and hence \eqref{eq:lemgfluct}) holds for all $t\in \C$ we argue as follows. Since \eqref{eq:tracelimitsimpl1} and \eqref{eq:tracelimitsimpl2} hold for all $t\in \C$ it is sufficient to prove that \eqref{eq:limitexpanddet} holds for all $t\in \C$. To this end, we note that
\begin{multline}\label{eq:deffL}
f_L(t):= \det\left(1+ ({\rm e}^{i t f_2}-1)K_G\right) {\rm e}^-{\Tr  ({\rm e}^{i t f_2}-1)K_G+\frac{1}{2} \Tr  \left(({\rm e}^{i t f_2}-1)K_G\right)^2} \\
 = {\det}_4\left(1+ ({\rm e}^{i t f_2}-1)K_G\right) {\rm e}^{\frac{1}{3}\Tr  \left(({\rm e}^{i t f_2}-1)K_G\right)^3}
\end{multline}
where ${\det}_4$ is the regularized determinant of order 4 (in fact, this is how the regularized determinant is defined \cite{Simon}). For such determinants we have the inequality \cite[Th. 9.2(b)]{Simon}
\[|{\det}_4(1+A)| \leq \exp \Gamma_4 \|A\|_4,\]
for some constant $\Gamma_4$ (independent of $A$).
Applying this inequality with $A=({\rm e}^{i tf_2} -1)K_G$ in \eqref{eq:deffL} and using \eqref{eq:inequalitiesnorm} and \eqref{eq:tracelimit},  we see that $\{f_L\}_L$  is a sequence of entire functions that is locally bounded. By Montel's  Theorem,  it has a subsequence converging uniformly on compact subsets of $\C$. By analyticity and \eqref{eq:limitexpanddet}, the limit of any convergent subsequence must be the constant function $1$. Hence, the entire sequence converges to $1$ and  \eqref{eq:limitlocal} holds uniformly for  $t$ in compact subsets of $\C$. Combining this with \eqref{eq:chartodet} gives the statement.
\end{proof}

\subsection{Proof of Theorem \ref{th:main}}\label{sec:proofmainresult}
\begin{proof}[Proof of Theorem \ref{th:main}]
In view of Lemma \ref{lem:linearstat} we can reformulate the statement as
\begin{align}\label{eq:mainstat2}
\lim_{L\to \infty} \EE \exp({\rm i}t (X_f-\EE X_f))=\exp\left(-\tfrac{t^2}{2} \|\phi\|_\nabla^2\right), \quad t\in \R,
\end{align}
with $f$ as in Lemma \ref{lem:linearstat}.   Split $f=f_1+f_2$ as in \eqref{eq:deffj}, define $Y_{f_j}=X_{f_j}-\EE X_{f_j}$ and write 
\begin{multline}\label{eq:finaltriangle}
\left|\EE \exp({\rm i} t Y_f)-\exp(-\tfrac{1}{2} t^2 \|\phi\|_\nabla ^2) \right|
\leq
\left|\EE \exp({\rm i} t Y_{f})-\EE \exp({\rm i} t Y_{f_2}) \right|\\
+\left|\EE \exp({\rm i} t Y_{f_2})-\exp(-\tfrac{1}{2}t^2 \var Y_{f_2}) \right|\\
+\left|\EE \exp(-\tfrac{1}{2} t^2 \var Y_{f_2})-\exp(-\tfrac{1}{2}t^2 \|\phi\|_\nabla ^2) \right|.
\end{multline}
We estimate the three terms at the right-had side separately.

Since $Y_f$ and $Y_{f_2}$ are real valued and $\EE Y_f=\EE Y_{f_2}=0$, we have
\begin{align}
\begin{split}
\left|\EE \exp({\rm i} t Y_f)-\EE \exp({\rm i} t Y_{f_2}) \right|\leq \EE\left|\exp({\rm  i}t Y_f) -\exp({\rm i} t Y_{f_2})\right|
\\
\leq |t| \EE(|Y_f-Y_{f_2}|) 
\leq |t| \sqrt{\var (Y_f-Y_{f_2})}=|t|\sqrt{\var Y_{f_1}}.\end{split}
\end{align}
Hence by Lemma \ref{lem:varianceXf1}  there exists a positive function $g_\phi(\eps)$ with $\lim_{\eps \downarrow 0} g_\phi(\eps)=$ and
\begin{align}\label{eq:limsup1}
\begin{split}\limsup_{L\to \infty} 
\left|\EE \exp({\rm i} t Y_{f})-\EE\exp({\rm i} t Y_{f_2}) \right| \leq  |t| g_\phi(\eps).
\end{split}
\end{align}
This estimates the first term at the right-hand side of \eqref{eq:finaltriangle}.  

Proposition \ref{lem:gfluctf2} deals with the second term.

As for the  third term, note that 
\begin{align}
\var Y_{f} =\var( Y_{f_1}+Y_{f_2})=\var Y_{f_1}+2\EE(Y_{f_1}Y_{f_2})+\var Y_{f_2}.
\end{align}
Hence by applying Cauchy-Schwarz and Lemma \ref{lem:varianceXf1} we see that there exists a constant $A>0$ such that \begin{multline}\label{eq:limsup2}
\limsup_{L\to \infty} \left|\var Y_{f_2}-\var Y_f\right|\\
\leq  \limsup_{L\to \infty}\left( 2 \sqrt{\var Y_{f_1} \var Y_f} +\var Y_{f_1}\right)\leq A g_\phi(\eps).\end{multline}
Finally, inserting \eqref{eq:limsup1}, \eqref{eq:limsup2},  \eqref{eq:propvariance} and \eqref{eq:lemgfluct}  in \eqref{eq:finaltriangle} leads to 
\begin{align}
\limsup_{L\to \infty} \left|\EE \exp({\rm i} t Y_f)-\exp(-\tfrac{1}{2} t^2 \|\phi\|_\nabla ^2) \right|
\leq (A+|t|)g_\phi(\eps),
\end{align}  Since this holds for arbitrary $\eps>0$ we can take the limit $\eps \downarrow 0$ and we arrive at the statement.
\end{proof}

\section{Asymptotic analysis of $K$} \label{sec:asymptoticsK}

In this section we prove Lemma \ref{lem:conj} and Theorem \ref{th:macro}. The proofs are based on classical steepest descent arguments on the integral representation for the kernel \eqref{eq:defKdoubleint}, which
by \eqref{eq:defF} we can write as
\begin{align}\label{eq:defKdoubleint2}
\begin{split}
K(x_1,m_1,x_2,m_2) =-\frac{1}{2\pi {\rm i}} \oint_{\Gamma_0} {\rm e}^{L (F_1(w)-F_2(w))}\frac{{\rm d}w}{w}\\
+\frac{1}{(2 \pi{\rm i})^2} \oint_{\Gamma_0} \oint_{\Gamma_{1,2}}{\rm e}^{L(F_1(w)-F_2(z))}\frac{ {\rm d}z {\rm d}w}{w(w-z)}.
\end{split}
\end{align}
Here we used lower index $F_j$ to denote the function $F$ in \eqref{eq:defF} with parameters $(\xi_j/\mu_j)=(x_j/L,m_j/L)$. We will also write $\Omega_j$ to denote $\Omega(x_j/L,m_j/L)$.

 We will only need the asymptotics for $(x,m)\in L \mathcal D_c$ for a compact subset $\mathcal D_c\subset \mathcal D$ in the proof of Lemma \ref{lem:conj}.  For more background on steepest descent method applied to double integrals, we refer to \cite{O}.

\subsection{Saddle points of $F$}

We start by investigating the saddle points of $F$,  as well as the paths of steepest descent and ascent for $\Re F$ leaving from these points. 

 If $\mu\leq \mu_0$ then $F$ has either two real saddle points or two complex conjugates and no more.  If $\mu>\mu_0$  there is an additional real saddle point in the open interval $(1,2)$.  We recall that $\mathcal D$ is defined as the set of all pairs $(\xi,\mu)$ such that we have complex conjugate solutions. Here we are interested in asymptotics for $K$ for points $(x_j,m_j)\in L\mathcal D_c$ and hence we restrict our attention to   the situation where there are two complex conjugate saddle points   (and  one real saddle point in the interval $(1,2)$ in case $\mu>\mu_0$). 
 
 Since $F$ is analytic in the upper half plane, the paths of steepest descent for $\Re F$ are (parts of)  level lines for $\Im F$. The non-real saddle points are simple, resulting in a quadratic behavior for $F$ near these points. This implies that we have  two  paths of steepest descent leaving from each non-real saddle point. One of the paths ends up in either $w=1$ or $w=2$  and the other ends in $\infty$ and is asymptotically parallel with the negative part of the real axis. There are also two paths of steepest ascent leaving from the saddle points. One of them is ending in $w=0$ and the other in $\infty$ asymptotically parallel to the positive part of the real axis.  
 
 Finally, if $\mu>\mu_0$ there is a third saddle point in the interval $(1,2)$. The paths of steepest descent are just the parts of the interval $(1,2)$ to the left and right of that point. The paths of steepest ascent end either in $w=0$ or $\infty$. If the paths of steepest descent of the non-real saddle points end in $w=1$ then the path of steepest ascent leaving from the saddle point in $(1,2)$ ends in $\infty$. Otherwise it ends in $w=0$.  See also Figure~\ref{fig:realsaddlepoints}.

 \begin{figure}[t]
\centering
\includegraphics[scale=0.3]{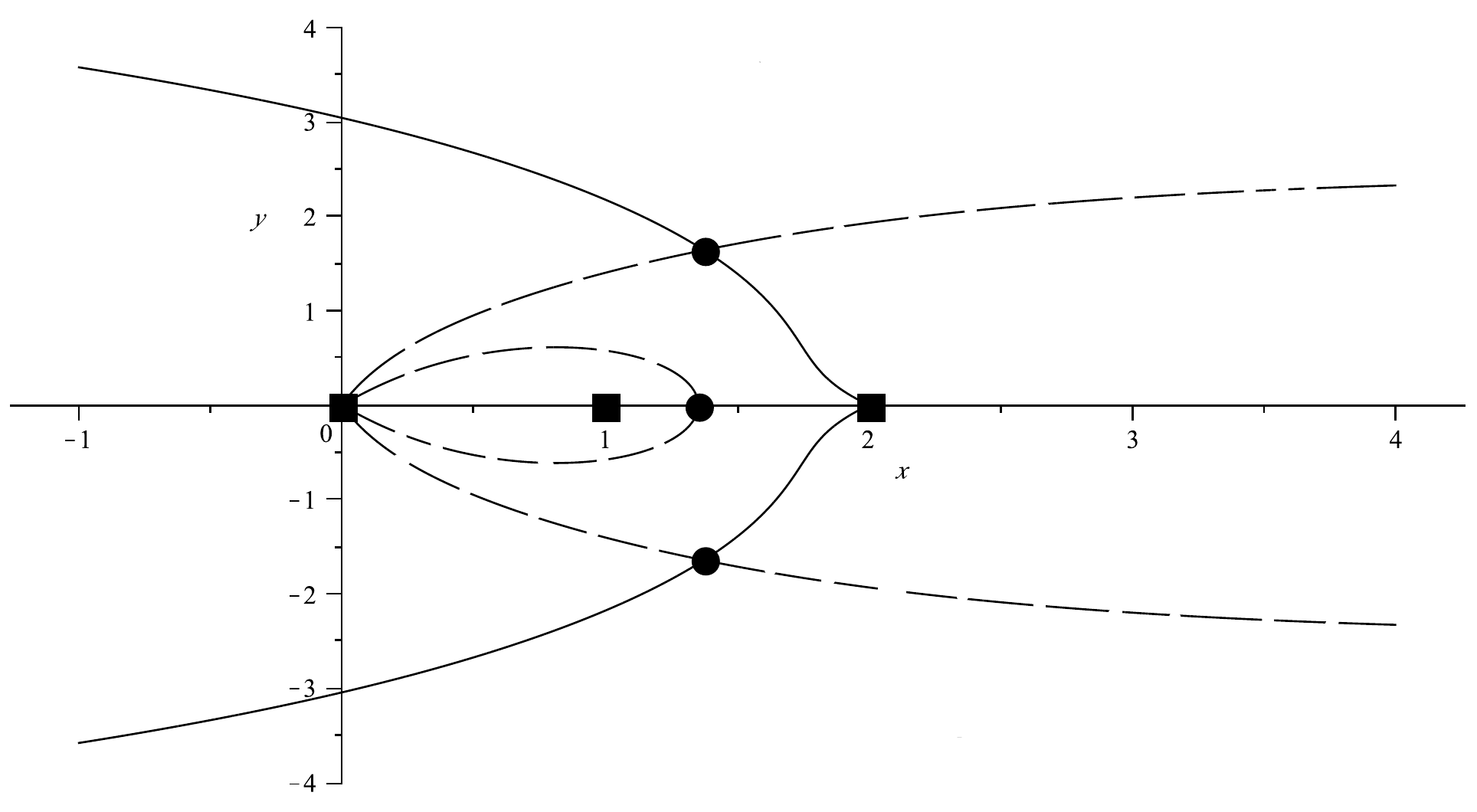}
\includegraphics[scale=0.3]{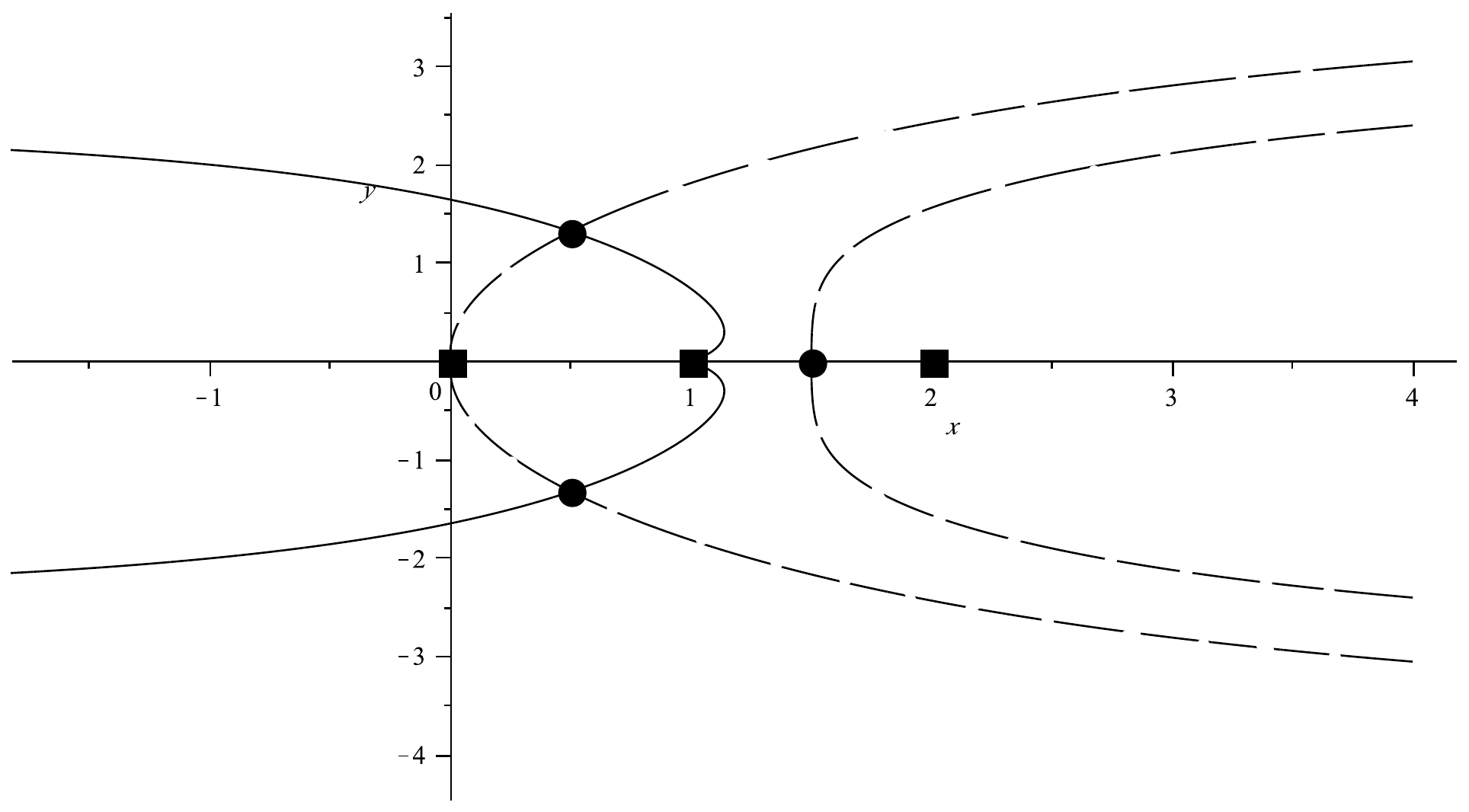}
\caption{Contours of steepest descent/ascent for $\Re F$ leaving from the saddle points  for the situation $\mu>\mu_0$. The black squares represent the points $w=0$, $w=1$ and $w=2$. The black dots are the saddle points.}
\label{fig:realsaddlepoints}

\end{figure}

\subsection{Deforming the contours}
The next step in the steepest descent analysis is to  deform $\Gamma_0$ and $\Gamma_{1,2}$ to be paths of steepest descent/ascent for $\Re F$ leaving from  saddle points $\Omega_1$ and $\Omega_2$. However, by deforming the contours in this way we have to take the term $(w-z)^{-1}$ into account.   Indeed, the deformed contours $\Gamma_0$ and $\Gamma_{1,2}$ will intersect at a point, say  $\eta$, in the upper half plane (and hence by symmetry also at $\overline{\eta}$ in the lower half plane). See also Figure~\ref{fig:deforming}. This implies that after  deforming the contours we pick up a residue and we obtain 
\begin{align}
\begin{split}
K(x_1,m_1,x_2,m_2) =\frac{1}{2\pi {\rm i}} \int_{C_\eta} {\rm e}^{L (F_1(w)-F_2(w))}\frac{{\rm d}w}{w}\\+\frac{1}{(2 \pi{\rm i})^2} \oint_{\Gamma_0} \oint_{\Gamma_{1,2}}{\rm e}^{L(F_1(w)-F_2(z))}\frac{ {\rm d}z {\rm d}w}{w(w-z)}.
\end{split}
\end{align}
Here $C_\eta$ is a path  connecting $\overline{\eta}$ with $\eta$  that crosses the real axis on the positive side if $m_1\geq m_2$ and on the negative side if $m_1<m_2$.  In both cases, the orientation of the path is from $\overline{\eta}$ to $\eta$.

Let us  introduce the notations
\begin{align}\label{eq:defI1}
I_1=\frac{1}{2\pi {\rm i}} \int_{C_\eta} {\rm e}^{L (F_1(w)-F_2(w))}\frac{{\rm d}w}{w},
\end{align}
and
\begin{align}\label{eq:defI2}
I_2=\frac{1}{(2 \pi{\rm i})} \oint_{\Gamma_0} \oint_{\Gamma_{1,2}}{\rm e}^{L(F_1(w)-F_2(z))}\frac{ {\rm d}z {\rm d}w}{w(w-z)},
\end{align}
and  deal with the terms $I_1$ and $I_2$ separately.

\subsection{Analysis of the double integral $I_2$} First we consider the integral $I_2$. In the first lemma that we now present  we assume that $\Omega_1$ and $\Omega_2$ are not close to each other. 

\begin{lemma} \label{lem:I21} Let  $\delta>0$ and  $\mathcal D_c\subset\mathcal D$ compact. If $(x_j,m_j)\in L\mathcal D_c$  and $\|(x_1,m_1)-(x_2,m_2)\|> L^{\tfrac{1}{2}+\delta}$, then  
\begin{multline}
I_2(x_1,m_1,x_2,m_2)=-\frac{1}{2\pi}\Big(\frac{{\rm e}^{L(F_1(\Omega_1) -F_2(\Omega_2))} }{\Omega_1(\Omega_1-\Omega_2){L} (-F''_1(\Omega_1))^{1/2} (F''_2(\Omega_2))^{1/2}}\\
+\frac{{\rm e}^{L(F_1(\Omega_1) -F_2(\overline\Omega_2))} }{\Omega_1(\Omega_1- \overline\Omega_2){L} (-F''_1(\Omega_1))^{1/2} (F''_2(\overline\Omega_2))^{1/2}}\\
+\frac{{\rm e}^{L(F_1(\overline\Omega_1) -F_2(\Omega_2))} }{\overline\Omega_1(\overline\Omega_1-\Omega_2){L} (-F''_1(\overline\Omega_1))^{1/2} (F''_2(\Omega_2))^{1/2}}\\
+\frac{{\rm e}^{L(F_1(\overline\Omega_1) -F_2(\overline\Omega_2))} }{\overline\Omega_1(\overline\Omega_1-\overline\Omega_2){L} (-F''_1(\overline\Omega_1))^{1/2} (F''_2(\overline\Omega_2))^{1/2}}\Big)(1+\OO(L^{-\delta/2})), 
\end{multline}
as $L\to \infty$. The constant in  $\mathcal O$ is uniform for $(\xi_j,\mu_j)\in \mathcal D_c$ and does not depend on $\delta$.
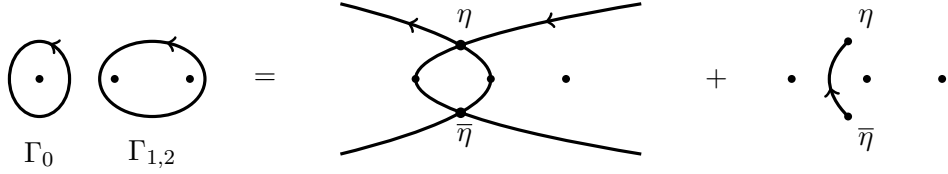
\begin{figure}[t]
\begin{tikzpicture}[scale=1]
]
\draw[postaction={decorate},decoration={
  markings,
  mark=at position 0.2 with {\arrow{>}}},very thick] (-1,0) ellipse (0.4 and 0.5); 
\draw[postaction={decorate},decoration={
  markings,
  mark=at position 0.2 with {\arrow{>}}},very thick] (0.5,0) ellipse (0.7 and 0.5);
\fill (-1,0) circle (1.5pt);
\fill (0,0) circle (1.5pt);
\fill (1,0) circle (1.5pt);

\fill (4,0) circle (1.5pt);
\fill (5,0) circle (1.5pt);
\fill (6,0) circle (1.5 pt);
\draw[postaction={decorate},decoration={markings,
  mark=at position 0.2 with {\arrow{>}}},very thick,rotate around ={-90:(4,0)} ] (3,3) parabola   bend (4,0) (5,3);
\draw[postaction={decorate},decoration={markings,
  mark=at position 0.8 with {\arrow{>}}},very thick,rotate around ={90:(5,0)} ] (4,2) parabola bend (5,0) (6,2);
\draw[postaction={decorate},decoration={markings,
  mark=at position 0.4 with {\arrow{>}}},very thick,rotate around ={90:(9.5,0)} ] (9,-.25) parabola bend (9.5,0) (10,-.25);
  \draw (2,0) node {$=$};
   \draw (8,0) node {$+$};
\fill (9,0) circle (1.5pt);
\fill (10,0) circle (1.5pt);
\fill (11,0) circle (1.5pt);
\fill[rotate around ={90:(10,0)}] (9.5,.25) circle (1.5pt);
\fill[rotate around ={90:(10,0)}] (10.5,.25) circle (1.5pt);
\draw[rotate around ={90:(10,0)}] (10.8,.25) node[right] {$\eta$};
\draw[rotate around ={90:(10,0)}] (9.2,.25) node[right] {$\overline{\eta}$};

\draw[rotate around ={90:(5,0)}] (5.8,.35) node {$\eta$};
\fill[rotate around ={90:(5,0)}] (5.45,.4) circle (2pt);
\fill[rotate around ={90:(5,0)}] (4.55,.4) circle (2pt);
\draw[rotate around ={90:(5,0)}] (4.55,.35) node[below] {$\overline{\eta}$};
\draw (-1,-1) node {$\Gamma_0$};
\draw (0.5,-1) node {$\Gamma_{1,2}$};
%\draw (9.5,-1.2) node {$C_\eta$};
\end{tikzpicture}
\caption{When deforming the contours $\Gamma_0$ and $\Gamma_{1,2}$ in the double integral in \eqref{eq:defKdoubleint2} we pick up a residue that leads to an additional single integral.} 
\label{fig:deforming}
\end{figure}
The square root $(-F_1''(\Omega_1))^{1/2}$ is chosen such that the orientation of the line $w=\Omega_1+\frac{s}{(-F_1''(\Omega_1))^{1/2})}$ tangent to $\Gamma_0$ at $\Omega_2$, when traversed from $-\infty$ to $\infty$, coincides with the orientation of $\Gamma_{0}$ at $\Omega_1$.  Similarly for $(-F_1''(\overline{\Omega}_1))^{1/2}$.

The square root $(F_2''(\Omega_2))^{1/2}$ is chosen such that the orientation of the line $w=\Omega_1+\frac{t}{(F''_2(\Omega_2))^{1/2})}$ tangent to $\Gamma_{1,2}$, when traversed from $-\infty$ to $\infty$, coincides with the orientation of $\Gamma_{1,2}$ at $\Omega_2$. Similarly for $(
F_2''(\overline{\Omega}_2))^{1/2}$.
\end{lemma}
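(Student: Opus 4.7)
The plan is to execute a classical two-dimensional steepest descent analysis on the double integral representation of $I_2$. The saddle-point equation $F'_j(w)=0$ has the non-real roots $\Omega_j$ and $\overline{\Omega}_j$ (plus a third real root in $(1,2)$ when $\mu_j>\mu_0$, but this one will play no role). I would first deform $\Gamma_0$ to the closed curve formed by joining the two paths of steepest descent for $\Re F_1$ leaving $\Omega_1$ and $\overline{\Omega}_1$ and terminating at $0$; similarly, $\Gamma_{1,2}$ would be deformed to the closed contour made from the paths of steepest descent for $-\Re F_2$ out of $\Omega_2$ and $\overline{\Omega}_2$ terminating at $\{1\}$ or $\{1,2\}$. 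The geometric description of the steepest descent lines given in the preceding subsection ensures the deformed $\Gamma_0$ still encircles only $0$ and the deformed $\Gamma_{1,2}$ still encircles $\{1,2\}$; any crossing at a point $\eta\in\HH$ has already been extracted into the single integral $I_1$, so picking up no further residues.

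Having fixed the contours, I would localize the integral to four small neighborhoods around the four saddle-point pairs $(\Omega_1,\Omega_2)$, $(\Omega_1,\overline{\Omega}_2)$, $(\overline{\Omega}_1,\Omega_2)$, $(\overline{\Omega}_1,\overline{\Omega}_2)$. On a neighborhood of each pair, parametrize
\[
w=\Omega_1+\frac{s}{\sqrt{-F''_1(\Omega_1)}},\qquad z=\Omega_2+\frac{t}{\sqrt{F''_2(\Omega_2)}}
\]
(and conjugate versions), with the sign conventions fixed in the statement so that $w$ traces the tangent to $\Gamma_0$ at $\Omega_1$ with the correct orientation and likewise for $z$. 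The exponent then Taylor-expands to $L(F_1(\Omega_1)-F_2(\Omega_2))-\tfrac{L}{2}(s^2+t^2)+O(L^{-1/2})$ on a neighborhood of size $L^{-1/2+\delta/4}$. Outside these neighborhoods, $\Re(F_1(w)-F_2(z))$ is strictly less than its value at the saddle by an amount bounded below uniformly in $\mathcal{D}_c$, giving exponential decay that absorbs any polynomial factors.

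In the neighborhood of each pair, I would Taylor-expand the slowly varying prefactor $1/(w(w-z))$ around the saddle. Its leading value at $(\Omega_1,\Omega_2)$ is $1/(\Omega_1(\Omega_1-\Omega_2))$, and evaluation of the Gaussian integral yields $(2\pi/L)\cdot 1/\sqrt{-F''_1(\Omega_1)F''_2(\Omega_2)}$; combined with $1/(2\pi i)^2=-1/(4\pi^2)$, this reproduces the first term in the lemma. Summing the three analogous contributions from the other saddle-point pairs yields the claimed leading expression.

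The main technical obstacle is the quantitative error term. The Gaussian fluctuations of $w$ and $z$ are of size $L^{-1/2}$, so cubic Taylor terms of $F_j$ produce a multiplicative error of order $L^{-1/2}$. More delicate is the Taylor expansion of $1/(w-z)$: the denominator $\Omega_1-\Omega_2$ is only known to be bounded below by the scaled separation, which by the bi-Lipschitz property of $\Omega$ on compact subsets of $\mathcal{D}$ is at least of order $L^{-1/2+\delta}$. Hence the expansion produces
\[
\frac{1}{w-z}=\frac{1}{\Omega_1-\Omega_2}\bigl(1+O(L^{-\delta})\bigr),
\]
and this is the dominant correction. Combining with the Gaussian cubic-term error and tracking everything uniformly over $\mathcal{D}_c$ produces the stated $O(L^{-\delta/2})$ remainder. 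The case $\mu_j>\mu_0$ requires only minor bookkeeping since the extra real saddle of $F_j$ in $(1,2)$ lies strictly between the poles of the integrand and does not interfere with the steepest descent contours just described.
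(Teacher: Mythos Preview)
Your proposal is correct and follows essentially the same steepest descent argument as the paper: deform $\Gamma_0$ and $\Gamma_{1,2}$ to steep descent/ascent paths through the saddle points, localize near the four saddle-point pairs, introduce local variables, Taylor expand, and handle the potentially small denominator $w-z$ via the Lipschitz bound $|\Omega_1-\Omega_2|\gtrsim L^{-1/2+\delta}$ coming from the explicit inverse formulas for $\Omega$. One minor bookkeeping point: with your neighborhood of size $L^{-1/2+\delta/4}$ and separation $|\Omega_1-\Omega_2|\gtrsim L^{-1/2+\delta}$, the relative error in $1/(w-z)$ is $O(L^{-3\delta/4})$, not $O(L^{-\delta})$; the paper instead takes the local variable $t\in[-L^{\delta/2},L^{\delta/2}]$ after rescaling by $\sqrt{L}$ (i.e.\ a $w$-neighborhood of size $L^{-1/2+\delta/2}$), which yields exactly the stated $O(L^{-\delta/2})$.
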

\begin{proof}
The point of the steepest descent method is that the main contribution of the integral comes from small neighborhoods around the saddle points. The parts of the contours $\Gamma_{0}$ and $\Gamma_{1,2}$ that are sufficiently far from the saddle points only give a contribution that is exponentially small.  To obtain the leading order terms, we  introduce local variables around the saddle points. Since we have a double integral and each integral gives rise two (conjugate) saddle points, there are four of such leading terms. 

Let us consider one of these four terms, namely the parts of the integrals around the saddle points $\Omega_1$ and $\Omega_2$.  In  small neighborhoods around these saddle points we slightly deform the contours $\Gamma_0$ and $\Gamma_{1,2}$ to their tangent lines and introduce the local variables 
\begin{align}\label{eq:localvar}
w=\Omega_1 + \frac{t}{\sqrt{L} (-F_1''(\Omega_1))^{1/2}} \textrm{   and   } z=\Omega_1 + \frac{s}{\sqrt{L} (F_2''(\Omega_2))^{1/2}},\end{align}
for $t\in [-L^{\delta/2},L^{\delta/2}]$ and 
$s\in [-L^{\delta/2},L^{\delta/2}]$ where $\delta>0$ is the fixed constant in the Lemma. Note that with these new variables we have
\begin{align}\label{eq:Taylor}
\lim_{L\to \infty} L(F_1(w(t))-F_1(\Omega_1))=-\tfrac{t^2}{2} \textrm{   and   }\lim_{L\to \infty} (F_2(z(s))-F_2(\Omega_2))=\tfrac{s^2}{2}, 
\end{align}
uniformly for $t\in [-L^{\delta/2},L^{\delta/2}]$ and by continuity also uniformly for $(\xi_j,\mu_j)\in \mathcal D_c$. Note that at the endpoints $t=\pm L^\delta$ we have that ${\rm e}^{L (F_1(w(t))-F_1(\Omega_1))}$ is exponentially small as $L\to \infty$. Moreover, from these points the contours are continued to paths of steep descent/ascent. This implies that the parts of the double integral that are not in the neighborhood of $\Omega_1$ and $\Omega_2$ (and their conjugates) is exponentially small as $L\to \infty$. It therefore remains to compute the asymptotic behavior of the integral
\begin{align}\label{eq:steepestdouble}
\begin{split}
\frac{1}{
(2\pi i)^2 L (-F''_1(\Omega_1))^{1/2}  (F''_2(\Omega_2))^{1/2}  }\int _{-L^\delta}^{L^\delta}  \int_{-L^\delta}^{L^\delta} {\rm e}^{L(F_1(w(t))-F_2(z(s)))} \frac{{\rm d}s {\rm d}t}{w(t)(w(t)-z(s))}
\end{split}\end{align}
The main issue that remains is to handle the term $1/(w(t)-z(s))$. We claim that since   $\Omega_1$ and $\Omega_2$ are sufficiently far apart by assumption,  we have
\begin{align}\label{eq:wminzfar}
\frac{1}{w(t)-z(s)}=\frac{1}{\Omega_1-\Omega_2}(1+\OO(L^{-\delta/2})), \quad L\to \infty.
\end{align}
in the new  variables. Indeed, from \eqref{eq:inverse1} and \eqref{eq:inverse2} it is not hard to check that the inverse map $\Omega \mapsto (\xi,\mu)$ is Lipschitz and hence there is a constant $A$ such that \begin{align}\label{lipschitz}
|\Omega_1-\Omega_2|\geq A \| (\xi_1,\mu_1)-(\xi_2,\mu_2)\|>AL^{-1/2+\delta}.\end{align} Combining this with the fact that  $F''(\Omega)$ is bounded from below for $(\xi,\mu)\in \mathcal D_c$ we have
\[| \Omega_1-\Omega_2| |F''_{1,2}(\Omega_{1,2})|^{1/2} \sqrt{L} \geq \tilde A L^{\delta}\]
for some constant $\tilde A$. Hence 
\[\frac{t}{| \Omega_1-\Omega_2| |F''_{1,2}(\Omega_{1,2})|^{1/2} \sqrt{L}}\leq \tilde A^{-1} L^{-\delta/2}\]
for $t\in [-L^{\delta/2},L^{\delta/2}]$. This proves \eqref{eq:wminzfar}. Note that by continuity the constant $\tilde A^{-1}$ can be chosen independent of $(\xi_j,\mu_j)\in \mathcal D_c$ and hence depends only on $\mathcal D_c$ but not on $\delta>0$. 

Now substituting \eqref{eq:wminzfar} into \eqref{eq:steepestdouble} leads to the following
\begin{multline*}
\frac{1}{
(2\pi i)^2  L(-F_1''(\Omega_1))^{1/2} (F_2''(\Omega_2))^{1/2}  }\int _{-L^\delta}^{L^\delta}  \int_{-L^\delta}^{L^\delta} {\rm e}^{L(F_1(w(t))-F_2(z(s)))} \frac{{\rm d}s {\rm d}t}{w(t)(w(t)-z(s))}\\
=\frac{{\rm e}^{L(F_1(\Omega_1)-F_2(\Omega_2))} }{
(2\pi i)^2  L(-F_1''(\Omega_1))^{1/2} (F_2''(\Omega_2))^{1/2} (\Omega_1-\Omega_2) \Omega_1} \\ \times \int _{-L^\delta}^{L^\delta} \int_{-L^\delta}^{L^\delta} {\rm e}^{L\left(F_1(w(t))-F_1(\Omega_1))-(F_2(z(s))-F_2(\Omega_2))\right)} {\rm d}s{\rm  d}t (1+\OO(L^{-\delta/2})).
\end{multline*}
By taking $L$ large and computing the Gaussian integrals gives
\[
\begin{split}
\frac{1}{
(2\pi i)^2  L(-F_1''(\Omega_1))^{1/2} (F_2''(\Omega_2))^{1/2}  }\int _{-L^\delta}^{L^\delta}  \int_{-L^\delta}^{L^\delta} {\rm e}^{L(F_1(w(t))-F_2(z(s)))} \frac{{\rm d}s {\rm d}t}{w(t)(w(t)-z(s))}\\
=-\frac{{\rm e}^{L(F_1(\Omega_1)-F_2(\Omega_2))} }{
2\pi   L(-F_1''(\Omega_1))^{1/2} (F_2''(\Omega_2))^{1/2} (\Omega_1-\Omega_2) \Omega_1}  (1+\OO(L^{-\delta/2})).
\end{split}\]
The three other parts coming from the conjugate points $\Omega_1$ and $\Omega_2$ can be dealt with in the same way. This four terms together give the leading asymptotic term as given in the statement.
 \end{proof}

If $\Omega_1$ and $\Omega_2$ are sufficiently close,  we need the  inequality that is formulated in the following lemma. 
\begin{lemma}\label{lem:I22}
Let $\delta>0$ and $\mathcal D_c\subset \mathcal D$ compact. Then there exists a constant such that
\begin{align}
|I_2(x_1,m_1,x_2,m_2)|\leq \frac{C {\rm e}^{L\Re(F_1(\Omega_1)-F_2(\Omega_2))}}{\sqrt{L}}.
\end{align}
 for $(x_j,m_j)\in L\mathcal D_c$ with $\|(x_1,m_1)-(x_2,m_2)\|\leq L^{\tfrac{1}{2}+\delta}$. \end{lemma}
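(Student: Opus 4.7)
The plan is to adapt the steepest descent argument of Lemma~\ref{lem:I21}, while tracking how the singular factor $1/(w-z)$ in \eqref{eq:defI2} behaves when $\Omega_1,\Omega_2$ are close. First I would deform $\Gamma_0$ and $\Gamma_{1,2}$ through the saddle points $\Omega_1,\overline{\Omega_1}$ and $\Omega_2,\overline{\Omega_2}$ along the steepest descent/ascent paths for $\Re F_1$ and $\Re F_2$, exactly as in Lemma~\ref{lem:I21}. The parts of these contours outside $L^{-1/2+\delta'}$-neighborhoods of the saddle points contribute exponentially less than ${\rm e}^{L\Re(F_1(\Omega_1)-F_2(\Omega_2))}$, so it suffices to bound the four local contributions from pairs $(\omega_1,\omega_2)\in\{\Omega_1,\overline{\Omega_1}\}\times\{\Omega_2,\overline{\Omega_2}\}$. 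For the two ``far'' pairs $(\Omega_1,\overline{\Omega_2})$ and $(\overline{\Omega_1},\Omega_2)$, the distance $|\omega_1-\omega_2|$ is bounded below by $2\inf_{\mathcal D_c}\Im \Omega>0$, so the argument of Lemma~\ref{lem:I21} applies verbatim and produces a contribution of order $L^{-1}{\rm e}^{L\Re(F_1(\omega_1)-F_2(\omega_2))}$; Schwarz reflection $F_j(\overline z)=\overline{F_j(z)}$ makes the four real parts equal, so these pairs already satisfy the claim with room to spare.

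The interesting case is the ``close'' pair $(\Omega_1,\Omega_2)$ (the conjugate pair is identical). Introducing the local variables~\eqref{eq:localvar} and writing $\alpha=(-F_1''(\Omega_1))^{-1/2}$, $\beta=(F_2''(\Omega_2))^{-1/2}$, one has $w(t)-z(s)=(\Omega_1-\Omega_2)+L^{-1/2}(\alpha t-\beta s)$. Using the Taylor expansions~\eqref{eq:Taylor} together with the fact that $|1/w(t)|=\OO(1)$ (since $\Omega(\mathcal D_c)$ is bounded away from $0$), the local contribution is bounded above by
\begin{equation*}
\frac{C\,{\rm e}^{L\Re(F_1(\Omega_1)-F_2(\Omega_2))}}{L}\,\sqrt{L}\int_{\R^2}\frac{{\rm e}^{-t^2/2-s^2/2}}{|c+\alpha t-\beta s|}\,dt\,ds,\qquad c=(\Omega_1-\Omega_2)\sqrt{L}\in \C.
\end{equation*}
If this integral is bounded uniformly in $c\in\C$, then the contribution is of order $L^{-1/2}{\rm e}^{L\Re(F_1(\Omega_1)-F_2(\Omega_2))}$, which is exactly the claimed bound. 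To prove such a uniform estimate I would perform the $\R$-linear change of variables $(t,s)\mapsto u=\alpha t-\beta s\in\C\cong\R^2$, of Jacobian $|\Im(\overline\alpha\beta)|^{-1}$, which transforms the Gaussian $e^{-t^2/2-s^2/2}$ into some positive definite quadratic form $Q(u)$ on $\R^2$. The integral then reduces to $|\Im(\overline\alpha\beta)|^{-1}\int_{\C}{\rm e}^{-Q(u)}|c+u|^{-1}|du|$, which is bounded uniformly in $c$ since $1/|u|$ is integrable near the origin in $\R^2$ and the Gaussian controls the tails.

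The main obstacle is ensuring that $|\Im(\overline\alpha\beta)|$ stays bounded away from zero on $\mathcal D_c$ as $\Omega_1\to\Omega_2$, so that the change of variables above is uniformly nondegenerate. In the limiting case $\Omega_1=\Omega_2$, $F_1=F_2$ one computes $\alpha/\beta\to(-F_2''/F_1'')^{1/2}=(-1)^{1/2}=\pm i$, so $\Im(\overline\alpha\beta)\to\mp|\beta|^2\neq 0$; by continuity and compactness of $\mathcal D_c$ together with uniform boundedness of $F_j''$ away from zero, one obtains a positive lower bound uniform in $(\xi_j,\mu_j)\in\mathcal D_c$. Geometrically this reflects the standard fact that at a non-real saddle the steepest descent of $\Re F$ is perpendicular to the steepest ascent, so the tangent lines to $\Gamma_0$ at $\Omega_1$ and to $\Gamma_{1,2}$ at $\Omega_2$ stay close to perpendicular whenever $\Omega_1\approx\Omega_2$ and $F_1\approx F_2$, and hence never become parallel.
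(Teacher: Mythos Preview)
Your proof is correct and follows essentially the same route as the paper: deform to steepest descent/ascent contours, localize near the saddle pairs, and observe that the only dangerous factor $1/(w-z)$ becomes $L^{1/2}$ times a $1/|u|$-type singularity in two real dimensions which is integrable against the Gaussian. Your linear change of variables $(t,s)\mapsto \alpha t-\beta s$ and the verification that its Jacobian $|\Im(\overline\alpha\beta)|$ stays bounded away from zero (equivalently, that the two tangent lines are nearly perpendicular when $\Omega_1\approx\Omega_2$) make explicit exactly what the paper sketches when it writes $w-z\approx L^{-1/2}(-F_1''(\Omega_1))^{-1/2}(c+t\pm is)$ and declares the resulting singularity integrable.
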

\begin{proof}
The proof follows by the same approach as in Lemma \ref{lem:I21}. However, in this case \eqref{eq:wminzfar} is no longer valid and we need to estimate this term in a different way. Note that since $(\xi_j,\mu_j)$ are close, we also have that $F_1''(\Omega_1)=F''_2(\Omega_2)(1+\OO(L^{-1/2+\delta}))$, where the constant is independent of $(\xi_j,\mu_j)\in \mathcal D_c$ as long as they are close. 
But then $(-F_1(\Omega_1))^{1/2}=\pm {\rm i} (F_2(\Omega_2))^{1/2}$ where the $\pm$-sign depends on the choice of the square roots.  Hence we have 
\begin{align}
\frac{1}{w(t)-z(s)}= \frac{1}{\sqrt{L} (-F_1(\Omega_1))^{1/2}} \frac{1}{\Omega_1-\Omega_2+t\pm {\rm  i}s +\OO(L^{-1/2+\delta})}.\end{align}
This means the double integral has a possible singularity (note if $(\xi_1,\mu_1)=(\xi_2,\mu_2)$ it certainly does), however this singularity is integrable.  By proceeding as in Lemma \ref{lem:I21},  computing the Gaussian integrals and using  the fact that $ F''(\Omega(\xi,\mu))$ is bounded from below for $(\xi,\mu)\in\mathcal D_c$, we arrive at the statement.
\end{proof}
\subsection{Analysis of the single integral $I_1$}

Next we give a bound for the single integral.  

\begin{lemma} \label{lem:I1} Let $I_1$ as in \eqref{eq:defI1} and $\mathcal D_c\subset \mathcal D$ compact.  Then there exists a constant $C>0$ such that 
\begin{align} |I_1(x_1,m_1,x_2,m_2)|\leq \frac{C{\rm e}^{L \Re\left(F_1(\eta)-F_2(\eta)\right)}}{1+R},
\end{align}
for $(x_j,m_j)\in L\mathcal D_c$, where $R=\sqrt{(x_1-x_2)^2+(m_1-m_2)^2}$.
\end{lemma}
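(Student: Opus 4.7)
The plan is to deform $C_\eta$ to a steepest descent contour for $\Re(F_1-F_2)$ so that the exponential maximum is attained at the endpoints $\eta$ and $\bar\eta$, and then extract the additional $1/(1+R)$ factor by integration by parts.

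The starting observation is that the $\tau w$ terms in $F_1$ and $F_2$ cancel, so $G:=F_1-F_2$ is a linear combination of the logarithms $\log w$, $\log(w-1)$ and (in the relevant regime) $\log(w-2)$ with real coefficients determined by $(\xi_1-\xi_2,\mu_1-\mu_2)=(x_1-x_2,m_1-m_2)/L$. In particular every coefficient of $G$ is $O(R/L)$, and on any compact set avoiding $\{0,1,2\}$ one has $|G'(w)|,|G''(w)|=O(R/L)$.

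First I would deform $C_\eta$ to the steepest descent contour $\tilde C_\eta$ for $\Re G$ running from $\bar\eta$ to $\eta$. Because the coefficients of $G$ are real, the zeros of $G'$ are all real; the steepest descent directions at $\eta$ and $\bar\eta$ are well-defined, and $\tilde C_\eta$ stays in $\mathbb H\cup\overline{\mathbb H}$ except for a single crossing of $\mathbb R$ at a regular point, which can be placed on the correct side (positive or negative) to preserve the homotopy class of $C_\eta$. Along $\tilde C_\eta$ one has $\Re G(w)\le \Re G(\eta)$ with equality only at the endpoints, supplying the factor $e^{L\Re(F_1(\eta)-F_2(\eta))}$.

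For the regime of bounded $R$ (say $R\le R_0$ for a fixed $R_0$), the trivial estimate $|I_1|\le (|\tilde C_\eta|/2\pi)\sup|e^{LG(w)}/w|\le C e^{L\Re G(\eta)}$ already delivers the claim since $1/(1+R)\ge 1/(1+R_0)$. For $R\ge R_0$ I would perform a single integration by parts using $e^{LG}\,dw=d(e^{LG})/(LG')$:
\[
I_1=\frac{1}{2\pi iL}\left[\frac{e^{LG(w)}}{wG'(w)}\right]_{\bar\eta}^{\eta}-\frac{1}{2\pi iL}\int_{\tilde C_\eta}e^{LG(w)}\frac{d}{dw}\!\left(\frac{1}{wG'(w)}\right)dw.
\]
The boundary contribution is $O\bigl(e^{L\Re G(\eta)}/(L|G'(\eta)|)\bigr)$; using $|G''|=O(R/L)$ together with $|G'|\ge cR/L$ on $\tilde C_\eta$ (so $L|G'|\ge cR$), the remaining integral is likewise $O(e^{L\Re G(\eta)}/R)$, which completes the bound.

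The main obstacle is the uniform lower bound $|G'(w)|\ge cR/L$ along $\tilde C_\eta$. This amounts to showing that the intersection point $\eta\in\mathbb H$ of $\Gamma_0$ and $\Gamma_{1,2}$, as well as the entire steepest descent contour $\tilde C_\eta$, stays at a definite distance from the real zeros of $G'$, uniformly for $(x_j,m_j)\in L\mathcal D_c$. I would prove this by continuity and compactness: after normalizing the coefficients of $G$ (which are $O(R/L)$) by their norm, $G'$ becomes a rational function whose real zeros depend continuously on the normalized direction, while $\eta$ remains bounded away from $\mathbb R$, so the required separation holds uniformly on $\mathcal D_c$.
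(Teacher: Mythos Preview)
Your approach via steepest descent plus integration by parts is genuinely different from the paper's, and it has a real gap.

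The assertion that the zeros of $G'$ are all real is not justified and is in general false. When $\mu_1$ and $\mu_2$ lie on opposite sides of $\mu_0$ (say $\mu_2\le\mu_0<\mu_1$), the difference $G=F_1-F_2$ involves all three logarithms $\log w$, $\log(w-1)$, $\log(w-2)$ with nonzero real coefficients, so $G'(w)=0$ is equivalent to a quadratic in $w$ with real coefficients, whose roots may well be a complex conjugate pair in $\HH\cup\overline{\HH}$. In that event your steepest descent contour $\tilde C_\eta$ can run into a saddle of $\Re G$ in $\HH$, and the lower bound $|G'|\ge cR/L$ fails exactly there, so neither the boundary term nor the remaining integral in your integration by parts is controlled. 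Your closing compactness argument only addresses separation of $\eta$ and $\tilde C_\eta$ from \emph{real} zeros of $G'$ and therefore does not repair this.

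The paper sidesteps the critical-point structure of $G$ entirely by taking an explicit contour: the circular arc $\{\eta e^{\mp it}:0\le t\le\arg\eta\}$ centered at the origin (crossing the positive real axis when $m_1\ge m_2$, the negative axis when $m_1<m_2$). On this arc $\Re\log w$ is constant, while the coefficients of $\log(w-1)$ and $\log(w-2)$ in $F_1-F_2$ have a definite sign fixed by the sign of $m_1-m_2$; since $|w-1|$ and $|w-2|$ are monotone along the arc, one gets $\frac{d}{dt}\Re(F_1-F_2)(\eta e^{\mp it})<0$ directly. Writing $L(F_1-F_2)=R\,G_{1,2}$ and using compactness of the direction $(\cos\phi,\sin\phi)$ and of $\eta$ over $\mathcal D_c$, this derivative is $\le -A<0$ uniformly, and the factor $1/(1+R)$ drops out of the elementary estimate $\int_0^r e^{-ARt}\,dt\le 1/(AR)$ for the pieces of the arc near $\eta,\bar\eta$, with the middle piece exponentially small. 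No integration by parts and no information about the zeros of $G'$ are needed. If you want to keep your IBP argument, running it on this circular arc rather than on a steepest descent curve would work: the uniform bound on the tangential derivative of $\Re G_{1,2}$ gives $|G_{1,2}'|\ge A/|\eta|$ along the arc, hence $L|G'|\ge cR$ there.
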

\begin{proof}
Let us first consider the case $m_1\geq m_2$.  In that case, deform the contour to an arc from $\overline{\eta}$ and $\eta$ that is a part of the circle centered around the origin with radius $|\eta|$.  A short calculation using \eqref{eq:defF} and the fact that $m_1\geq m_2$ shows that 
\begin{align} \label{eq:I1est0}
\frac{{\rm d}}{{\rm d}t} \Re \left(F_1(\eta {\rm e}^{-{\rm i}t})-F_2(\eta {\rm e}^{-{\rm i}t})\right)<0,\qquad 0\leq t<\arg \eta.
\end{align}
It also follows by the geometric principle that the distance of the point  $\eta {\rm e}^{-{\rm i}t}$ to $w=1$ and $w=2$ decreases when $t$ increases from $0$ to $\arg \eta$ and attains its minimum at $t=\arg \eta$. It implies that 
\begin{align} \label{eq:I1est1}
\frac{1}{2\pi}\left|\int_{C_\eta} {\rm e}^{L(F_1(w)-F_2(w))} \frac{{\rm d}w}{w} \right|\leq {\rm e}^{L \Re (F_1(\eta)-F_2(\eta))}.
\end{align}
We will refine the inequality for points $(x_j,m_j)$ that are far apart using standard  steepest descent arguments.  Set 
\[(x_1-x_2,m_1-m_2)= R (\cos \phi,\sin \phi)\]
and assume $R>0$. Then write
\begin{align}\label{eq:relGandF} L(F_1(w)-F_2(w))=R G_{1,2}(w,\phi).\end{align}
so that 
\begin{align} \label{eq:ftog} \frac{1}{2\pi}\int_{C_\eta} {\rm e}^{L(F_1(w)-F_2(w))}  \frac{{\rm d}w}{w} =\frac{1}{2\pi} 
\int_{C_\eta} {\rm e}^{R G_{1,2}(w,\phi)} \frac{dw}{w}.
\end{align}

Note that $\eta$ depends continuously on $(\xi_j,\mu_j)\in \mathcal D_c$. By compactness of $\mathcal D_c$, there exists an $r>0$ be sufficiently small so that $\Im (\eta {\rm e}^{-{\rm i}r})>0$ for all $(x_j,m_j)\in L \mathcal D_c$. By \eqref{eq:I1est0}   there is a constant $A>0$ such that 
\begin{align}\label{eq:estimateG}
\Re\left( G_{1,2}(\eta {\rm e}^{-{\rm i}t},\phi)-G_{1,2}(\eta)\right)\leq-At, \qquad 0\leq t\leq r.
\end{align}
Again by continuity and compactness, the constant $A$ can be chosen independent of $\phi$ and $\eta$,  depending on $\mathcal D_c$ only.

Finally, we split the arc $C_\eta$ in three pieces, two small parts near $\eta$ and $\overline{\eta}$ and a third remaining part. A simple estimate  on \eqref{eq:ftog} gives
\begin{multline}\label{eq:I1est2} \frac{1}{2\pi} \left| 
\int_{C_\eta} 
{\rm e}^{R\Re G_{1,2}(w,\phi)} \frac{dw}{w}\right|\\
\leq \frac{1}{2\pi} \int_0^r \left({\rm e}^{R \Re G_{1,2}(\eta {\rm e}^{-{\rm i} t} ,\phi)}+{\rm e}^{R G_{1,2}(\overline{\eta} {\rm e}^{{\rm i} t},\phi )} \right){\rm d}t+
{\rm e}^{-Ar R} {\rm e}^{L\Re G_{1,2}(\eta,\phi)}. 
\end{multline}
For the second term in the second term at the righthand side we used  \eqref{eq:I1est0},  \eqref{eq:relGandF}  and \eqref{eq:estimateG}.
To analyze the two parts near $\eta$ and $\overline{\eta}$ we use \eqref{eq:estimateG} and obtain
\begin{align}\label{eq:I1est3} \int_0^r {\rm e}^{R \Re G(\eta {\rm e}^{-{\rm i} t} )}{\rm d}t\leq {\rm e}^{R \Re G(\eta)} \int_0^r {\rm e}^{-R At} {\rm d}t\leq  \frac{{\rm e}^{R \Re G_{1,2}(\eta,\phi)}}{AR}.  \end{align}
Substituting  \eqref{eq:I1est3} in the right-hand side  of   \eqref{eq:I1est2}, inserting  \eqref{eq:relGandF}   and combing the result with \eqref{eq:I1est1} gives that statement for the case $m_1\geq  m_2$.

If  $m_1 <m_2$, then  the statement follows by a similar argument. The main difference is that we deform  $C_\eta$ to the arc of  the circle with radius $|\eta|$ that lies at the left of the origin.
\end{proof}

\subsection{Proof of Lemma \ref{lem:conj}}

\begin{proof}
The function $G$ that we use to define $K_G$ as in \eqref{eq:defG} is given by 
\begin{align}
G(x,m)={\rm e}^{L\Re F(\Omega)}.
\end{align}
Let us first  give some estimates on $K_G$ that can be deduced from the estimates on $K$ that we have obtained so far. To this end we split 
\begin{align}\label{splitKG} 
K_G=I_{1,G}+I_{2,G}
\end{align} where 
\begin{align}
I_{j,G}(x_1,m_1,x_2,m_2)=\frac{G(x_2,m_2)}{G(x_1,m_1)} I_j(x_1,m_1,x_2,m_2),
\end{align}
where $I_j$ are as defined in \eqref{eq:defI1} and \eqref{eq:defI2}. 

From Lemma \ref{lem:I1} we deduce
 \begin{align}\label{eq:IG1}
|I_{1,G}|\leq \frac{A}{1+R} 
 {\rm e}^{L \Re\left(F_1(\eta)-F_1(\Omega_1)\right)-L\Re \left(F_2(\eta)-F_2(\Omega_2)\right)},
\end{align}
where $R=\|(x_1,m_1)-(x_2,m_2)\|$. Note that since $\eta$ is an intersection point of $\Gamma_0$ and $\Gamma_{1,2}$ which are paths of steepest descent/ascent we have that 
\[ \Re\left(F_1(\eta)-F_1(\Omega_1)\right)-\Re \left(F_2(\eta)-F_2(\Omega_2)\right)\leq 0.\]
Hence, from \eqref{eq:IG1} we in have in particular
 \begin{align}\label{eq:IG1a}
|I_{1,G}|\leq \frac{A}{1+R}.
\end{align}
Suppose that  $R>L^{1/2+\delta}$. Then also $\Omega_1$ and $\Omega_2$ are far apart and hence $\eta$ is far from either $\Omega_1$ or $\Omega_2$. More precisely, there exists a constant $\tilde B$ such that  $|\eta-\Omega_j|\geq \tilde B L^{-1/2+\delta}$ for $j=1$ or $j=2$ (see also \eqref{lipschitz}). From \eqref{eq:IG1} and the quadratic behavior of $F$ near the saddle point,  we obtain  that there exists a constant $B>0$ such that 
 \begin{align}\label{eq:IG1b}
|I_{1,G}|\leq \frac{A}{1+R} 
 {\rm e}^{-BL^{2 \delta}},
\end{align}
for $(x_j,m_j)\in L\mathcal D_c$.

As for $I_{2,G}$ we deduce from Lemma's \ref{lem:I21} and \ref{lem:I22} that
 \begin{align}\label{eq:IG21}
\left|I_{2,G}(x_1,m_1,x_2,m_2)\right|\leq
 \frac{A_1}{L  |\Omega_1-\Omega_2|},
 \end{align}
for $(x_j,m_j)\in L\mathcal D_c$ with $\|(x_1,m_1)-(x_2,m_2)\|> L^{1/2+\delta}$.  
Moreover,
\begin{align}\label{eq:IG22}
\left|I_{2,G}(x_1,m_1,x_2,m_2)\right|\leq
 \frac{A_2}{\sqrt{L}}, 
 \end{align}
 for $(x_j,m_j)\in L\mathcal D_c$ with $\|(x_1,m_1)-(x_2,m_2)\|\leq  L^{1/2+\delta}$.

\vskip 0.1in

Now we come to the proofs of (i), (ii), (iii) and (iv). 

\vskip 0.1in

(i). We  split the Hilbert-Schmidt norm into two parts using the triangular inequality $\|K_G\|_2\leq \|I_{1,G}\|_2+\|I_{2,G}\|_2$.

Let us start with the  term $\|I_{1,G}\|_2$. Then by \eqref{eq:IG1a} and   a short calculation  we obtain \begin{align}\label{eq:HS1}
\sum_{(x_1,m_1)\in L\mathcal D_c}  \sum_{(x_2,m_2)\in L\mathcal D_c} \left| I_{1,G}(x_1,m_1,x_2,m_2) \right|^2=\OO( L^{2}\log L),
\end{align}
as $ L\to \infty.$

Next we analyze $\|I_{2,G}\|_2$. If $(x_1,m_1)$ and $(x_2,m_2)$ are close, then we have by \eqref{eq:IG22}
\begin{align}\label{eq:HS2}
\sum_{(x_1,m_1)\in L\mathcal D_c } \sum_{\overset{(x_2,m_2) \in L \mathcal D_c}{\|(x_1,m_1)-(x_2,m_2)\|\leq L^{1/2+\delta} }} |I_{2,G}(x_1,m_1,x_2)|^2
= \OO(L^{2+\delta}),
\end{align}
as $L\to \infty$. 

Finally, we consider points $(x_1,m_1)$ and $(x_2,m_2)$ that are sufficiently far from each other. Then we use \eqref{eq:IG21} to obtain
\begin{multline}
\sum_{(x_1,m_1)\in L\mathcal D_c } \sum_{\overset{(x_2,m_2) \in L \mathcal D_c}{\|(x_1,m_1)-(x_2,m_2)\|> L^{1/2+\delta} }}\left|I_{2,G}(x_1,m_1,x_2,m_2)\right|^2\\
\leq
\sum_{(x_1,m_1)\in L\mathcal D_c } \sum_{\overset{(x_2,m_2) \in L \mathcal D_c}{\|(x_1,m_1)-(x_2,m_2)\|> L^{1/2+\delta} }} \frac{1}{L^2 |\Omega_1-\Omega_2|^2}
\end{multline}
We interpret the right-hand side as a Riemann sum  and  estimate it by the integral 
\begin{multline}
\sum_{(x_1,m_1)\in L\mathcal D_c } \sum_{\overset{(x_2,m_2) \in L \mathcal D_c}{\|(x_1,m_1)-(x_2,m_2)\|> L^{1/2+\delta} }}\left|I_{2,G}(x_1,m_1,x_2,m_2)\right|^2\\
\leq L^{2} A \iiiint_{\Omega(\mathcal D_c)\times (\Omega(\mathcal D_c) \setminus B_{\Omega_1}) }\frac{1}{|\Omega_1-\Omega_2|^2}{\rm d}m(\Omega_1) {\rm d} m(\Omega_2),
\end{multline}
where $B_{\Omega_1}$ is a ball around $\Omega_1$ with a radius that is of order $L^{1/2+\delta}$,  $A>0$ is some constant and ${\rm d}m$ stands for the two dimensional Lebesgue measure. Note that we also used that, by continuity and the fact that the Jacobian does not vanish,  the Jacobian in  \eqref{eq:Jacobian}  is bounded from below and above on compact subsets. 
As $L\to \infty$ the integral at the right-hand side grows logarithmically, so in particular we have
\begin{align}\label{eq:HS3}
\sum_{(x_1,m_1)\in L\mathcal D_c } \sum_{(x_2,m_2) \in L( \mathcal D_c\setminus B_{\Omega_1})} \left|I_{2,G}(x_1,m_1,x_2,m_2)\right|^2
=\OO(L^{2}\log L),
\end{align}
as $L\to \infty$.  The statement now follows by combining \eqref{eq:HS1}, \eqref{eq:HS2} and \eqref{eq:HS3}.

\vskip 0.1in

(ii) The proof is  similar to the proof of (i).  Write
\begin{multline}
\|K_G\|_4^4 =\Tr K_G^* K_G K_G^*K_G\\ \leq 
\sum \cdots \sum \big|K_G(x_2,m_2,x_1,m_1)
K_G(x_2,m_2,x_3,m_3) \\ \times
K_G(x_4,m_4,x_3,m_3)
K_G(x_4,m_4,x_1,m_1)\big|,
 \end{multline}
 and insert \eqref{eq:IG1a}--\eqref{eq:IG22}. A short computation shows that the contribution of points that are close are negligible (in contrast to the double sum in the Hilbert-Schmidt norm) and the leading term comes from points $(x_j,m_j)$ that are far apart. For such points we use \eqref{eq:IG1b} to deduce that the  term $I_{1,G}$ only leads to an exponentially small contribution. Hence by \eqref{eq:IG21} and arguing as in (i) we see that there exists a constant $A>0$ such that
 \[\|K_G\|_4^4 \leq A L^4 \iiiint \frac{{\rm d} m(\Omega_1) {\rm d} m(\Omega_2) {\rm d} m(\Omega_3) {\rm d} m(\Omega_4)}{|\Omega_1-\Omega_2| |\Omega_2-\Omega_3| |\Omega_3-\Omega_4| |\Omega_4-\Omega_1|}. \]
Despite the singularities in the integrand, the latter integral is bounded and we arrive at the statement. Note the difference with (i) where the integral, due to the singularity,   grows logarithmically with $L$. 
 
 \vskip 0.1 in
 
 (iii) This follows from (ii) and the fact that $\|\cdot\|_p\geq \|\cdot\|_\infty$ for $p\geq 1$. 
 
\vskip 0.1 in

(iv) In  Lemma \ref{lem:I21} we proved the statement for $I_{2}$ instead of $K$. However, by  adding the term $I_1$ into the right-hand side of \eqref{eq:asymptoticsI24terms} and observing that $I_{1,G}$ is exponentially small for points far apart,  we see that the statement is also true for $K$. 
\end{proof}

 \subsection{Proof of Theorem \ref{th:macro}}
 \begin{proof}[Proof of Theorem \ref{th:macro}]
 The proof is fairly standard for determinantal process with a kernel that is represented by a double contour integral. We will therefore allow ourselves to be brief.
 
 Write $K=I_1+I_2$ with $I_j$ as in \eqref{eq:defI1} and \eqref{eq:defI2}. On the diagonal the main contribution comes from $I_1$ instead of $I_2$. Indeed, if $(\xi,\mu)\in \mathcal D$, then by \eqref{eq:IG22} and the fact the $I_{2,G}=I_2$ on the diagonal, we see that $I_2(\xi,\mu,\xi,\mu)$ tends to $0$ as $L\to \infty$. On the other hand, from \eqref{eq:defI1} and the fact that  $\eta=\Omega(\xi,\mu)$ we have
 \begin{align}\label{eq:finallyyy}I_1(\xi,\mu,\xi,\mu)=\frac{1}{2\pi{\rm i}} \int_{\overline{\Omega}}^{\Omega} \frac{{\rm d}w}{w}=\frac{\arg\Omega(\xi,\mu)}{\pi}.\end{align}
 Hence we proved, with $(x,m)$ as in \eqref{eq:thmacroscale}, that
 \begin{align}\label{eq:limdensK}
 \lim_{L\to \infty} K(x,m,x,m)=\frac{\arg \Omega(\xi,\mu) }{\pi},
 \end{align}
for  $(x,m)$ such that $(\xi,\mu)\in \mathcal D$.

From the proofs as presented in this paper, we only have \eqref{eq:finallyyy} uniformly for $(\xi,\mu)$ in compact subsets of $\mathcal D$. However, by a steepest descent analysis for the points that are near the boundary $\partial \mathcal D$ or outside $\mathcal D$ it  can be shown that \eqref{eq:finallyyy} holds on $\overline{\mathcal D}$ and, if $(\xi,\mu)$ is point in the upper half plane to the right of $\mathcal D$, then $K(x,m,x,m)\to 0$ uniformly as $L\to \infty$.

Since \[\frac{1}{L} \EE h(y,m)=\frac{1}{L}\sum_{ x\geq y} K(x,m,x,m),\]
we obtain from \eqref{eq:limdensK}
\[\lim_{L\to \infty} \frac{1}{L} \EE h(y,m) =\int_\xi^\infty \arg (\Omega(\xi',\mu)){\rm d}\xi'\]
By using \eqref{eq:partialxi} this integral is easily computed and we obtain the statement.
   \end{proof}

 \section{Asymptotic analysis of $R$} \label{sec:asymptoticsR}

In this section we prove  Proposition \ref{prop:asymptR1}.  First we show that the kernel $R$ can be expressed as a quadruple integral. The asymptotics for $R$ then follows from steepest descent arguments on this quadruple integral representation.
\begin{figure}[t]
\centering{
\begin{tikzpicture}[scale=1.1]
\fill (0,0) circle (1pt);
\fill (1,0) circle(1pt);
\fill (2,0) circle (1pt);
\draw[postaction={decorate},decoration={
  markings,
  mark=at position 0.2 with {\arrow{>}}},very thick,dashed] (1,0) ellipse (1.5 and 0.65);
  \draw[postaction={decorate},decoration={
  markings,
  mark=at position 0.25 with {\arrow{>}}},very thick,dashed] (1,0) ellipse (1.7 and 0.8);
   \draw[postaction={decorate},decoration={
  markings,
  mark=at position 0.25 with {\arrow{>}}},very thick] (1,0) ellipse (2 and 1.1);

\draw[postaction={decorate},decoration={
  markings,
  mark=at position 0.25 with {\arrow{>}}},very thick] (0,0) ellipse (0.2 and 0.2);
\draw (0.5,0) node {$\Gamma_0$};
\draw (-0.9,0.8) node {$\Gamma_0'$};
\end{tikzpicture}\hspace{0.5cm}
%\begin{tikzpicture}[scale=1.1]
%\fill (0,0) circle (1pt);
%\fill (1,0) circle(1pt);
%\fill (2,0) circle (1pt);
%\draw[postaction={decorate},decoration={
%  markings,
%  mark=at position 0.2 with {\arrow{>}}},very thick] (1,0) ellipse (1.5 and 0.65);
 % \draw[postaction={decorate},decoration={
 % markings,
%  mark=at position 0.25 with {\arrow{>}}},very thick] (1,0) ellipse (1.7 and 0.8);
 %  \draw[postaction={decorate},decoration={
 % markings,
 % mark=at position 0.25 with {\arrow{>}}},very thick,dashed] (1,0) ellipse (2 and 1.1);
%
%\draw[postaction={decorate},decoration={
 %% markings,
 % mark=at position 0.25 with {\arrow{>}}},very thick,dashed] (1,0) ellipse (1.2 and 0.4);
%\draw (0.5,0) node {$\Gamma_{1,2'}$};
%\draw (-1,0.8) node {$\Gamma_{1,2}$};
%\end{tikzpicture}
}
\caption{Configuration of contours in the quadruple integral representation \eqref{eq:quadR} for $R$ in case $m_1<m_2$.}
\label{fig:quadR}
\end{figure}
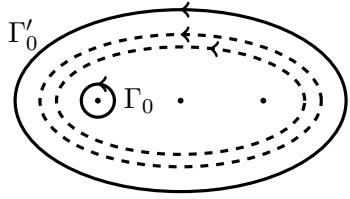

\begin{proposition}\label{prop:quadR}
With $R$  as defined in \eqref{eq:defR} and $p_m$ as defined in \eqref{eq:defpm}, we have that 
\begin{multline}\label{eq:quadR}
R(y_1,m_1,y_2,m_2)=-\frac{1}{(2\pi{\rm i})^4} \oint_{\Gamma_0} \oint_{\Gamma_{1,2}}  \oint_{\Gamma_0'}  \oint_{\Gamma_{1,2}' }\frac{{\rm e}^{t(w+w')}}{{\rm e}^{t(z+z')}} 
\\
 \times\frac{p_{m_1}(w)p_{m_2}(w')}{p_{m_2}(z)p_{m_1}(z')}
 \frac{z^{y_2+m_2}z'^{y_1+m_1}}{w'^{y_2+m_2} w^{y_1+m_1}}\frac{{\rm d}z' {\rm d}w' {\rm d}z {\rm d}w}{(w-z)(w-z')(w'-z)(w'-z')}.
\end{multline}

If $m_1<m_2$, then the contour $\Gamma_0$ goes around $w=0$, the contours $\Gamma_{1,2}$ and $\Gamma_{1,2}'$ go around $z,z'=1,2$ and $\Gamma_0$,  and finally,  the contour $\Gamma_0'$ goes around $w'=0$ and $\Gamma_{1,2}'$ and $\Gamma_{1,2}$. 

If $m_1>m_2$, then the contour $\Gamma_0'$ goes around $w'=0$, the contours $\Gamma_{1,2}$ and $\Gamma_{1,2}'$ go around $z,z'=1,2$ and $\Gamma_0'$,  and finally, the contour $\Gamma_0$ goes around $w=0$ and $\Gamma_{1,2}'$ and $\Gamma_{1,2}$.

If $m_1=m_2$, then the contours  $\Gamma_{1,2}$ and $\Gamma_{1,2}'$ go around $z,z'=1,2$, the contours  $\Gamma_0$ and $\Gamma_0'$ go  around the origin and  $\Gamma_{1,2}$ and $\Gamma_{1,2}'$ respectively. In addition, if $y_1<y_2$, then $\Gamma_{1,2}'$ goes around $\Gamma_0$ and if, on the other hand, $y_1\geq y_2$, then $\Gamma_{1,2}$ goes around $\Gamma_0'$.

All contours have counterclockwise orientation. 
\end{proposition}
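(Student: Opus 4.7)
The plan is to express the product $K(x_1,m_1,x_2,m_2)\,K(x_2,m_2,x_1,m_1)$ as a quadruple contour integral by substituting the integral representation \eqref{eq:defKdoubleint} into each factor, then to carry out the double sum defining $R$ under the integral by evaluating geometric series. I would use independent dummy variables $(w,z)$ for the first kernel and $(w',z')$ for the second. Since the single-integral correction in \eqref{eq:defKdoubleint} is present only when the first height index is strictly less than the second, at most one of the two single-integral pieces survives in each case $m_1<m_2$, $m_1>m_2$, or $m_1=m_2$; accordingly, the product expands into a principal quadruple integral plus at most one ``cross'' piece that is a triple integral.

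Next I would interchange the sum and the integrals (justified by choosing contour radii so that the relevant geometric series converge absolutely) and evaluate
\[
\sum_{x_1}\Bigl(\tfrac{z'}{w}\Bigr)^{x_1+m_1}\qquad\text{and}\qquad\sum_{x_2}\Bigl(\tfrac{z}{w'}\Bigr)^{x_2+m_2}
\]
over the range dictated by \eqref{eq:defR}. For the case $y_1<y_2$ one sums $x_1<y_1$ and $x_2\ge y_2$, requiring $|z'|>|w|$ and $|w'|>|z|$; for $y_1\ge y_2$ one sums $x_1\ge y_1$ and $x_2<y_2$, requiring the opposite inequalities. Either way the sums collapse to factors $1/(w-z')$ and $1/(w'-z)$ (with signs depending on the direction of summation), which combine with the $1/(w-z)$ and $1/(w'-z')$ factors already present in the integrand to produce exactly the four-factor denominator $(w-z)(w-z')(w'-z)(w'-z')$ appearing in \eqref{eq:quadR}.

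Finally I would verify that the contour arrangement prescribed in the proposition is the one compatible with the starting contours of \eqref{eq:defKdoubleint} together with the geometric-series convergence inequalities above, possibly after a deformation of one contour past another. Such a deformation picks up residues at $w=z$, $w=z'$, $w'=z$, or $w'=z'$; the residue at $w=z$ in the double integral of $K$ is precisely the single-integral correction in \eqref{eq:defKdoubleint}, so the residues produced by the deformations cancel exactly against the cross terms in the expansion of $K\cdot K$, collapsing everything to the single quadruple integral of the statement. The case $m_1=m_2$ is slightly different because both single-integral pieces are automatically absent, but the subcase split on $y_1\gtrless y_2$ in the definition of $R$ matches the subcase split in the contour description.

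The main obstacle is the bookkeeping of the residue cancellations across the three cases $m_1<m_2$, $m_1>m_2$, $m_1=m_2$, i.e.\ confirming that the residues picked up at $w=z,\ w=z',\ w'=z,\ w'=z'$ when moving to the claimed contour configuration exactly account for the single-integral cross terms. By the symmetry in $(w,z)\leftrightarrow(w',z')$ exchange, plus the symmetry of Lemma \ref{prop:symR} for $m_1\neq m_2$, one should only need to carry out the verification in one representative subcase (say $m_1<m_2$, $y_1<y_2$); the others follow by relabeling.
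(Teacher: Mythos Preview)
Your plan is sound and would produce a correct proof, but it takes a somewhat different route from the paper. The paper avoids the ``cross terms'' and the residue bookkeeping altogether by first rewriting each factor $K$ as a \emph{single} double integral: as in the proof of Lemma~\ref{lem:KsquareK}, one deforms $\Gamma_{1,2}$ around $\Gamma_0$ (when $m_1<m_2$) or $\Gamma_0$ around $\Gamma_{1,2}$ (when $m_1\ge m_2$), and the single-integral correction in \eqref{eq:defKdoubleint} is absorbed into the double integral. The product $K(x_1,m_1,x_2,m_2)K(x_2,m_2,x_1,m_1)$ is then immediately a quadruple integral with no additional pieces. For $m_1\neq m_2$ the paper then invokes Lemma~\ref{prop:symR} to \emph{choose} the direction of summation (e.g.\ $\sum_{x_1<y_1}\sum_{x_2\ge y_2}$ when $m_1<m_2$) so that the geometric-series conditions $|w|<|z'|$ and $|w'|>|z|$ are compatible with the contour nesting already in place, and the quadruple integral \eqref{eq:quadR} drops out directly with the stated contours. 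Only in the case $m_1=m_2$, where Lemma~\ref{prop:symR} is unavailable, does the summation direction (and hence the contour nesting) depend on the sign of $y_1-y_2$.

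Your approach instead keeps the single-integral pieces as cross terms and defers the contour deformation to the end, recovering the cross terms as residues. This is equivalent but shifts the work: you trade a one-line appeal to the single-double-integral form of $K$ for a case-by-case residue cancellation. The paper's ordering is cleaner precisely because the residue identity you plan to verify is nothing other than the statement that $K$ can be written as one double integral, so doing it first eliminates the cross terms before they appear.
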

\begin{proof} 
First we recall that, by deforming the contours $\Gamma_0$ or $\Gamma_{1,2}$, we can write the kernel $K$ as one double integral (and no single integral). Indeed, by arguing as in Lemma \ref{lem:KsquareK}, we deform $\Gamma_{1,2}$ such that it also encircles $\Gamma_{0}$ in case $m_1<m_2$ and, in case $m_1\geq m_2$, we deform $\Gamma_{0}$ such that it encircles $\Gamma_{1,2}$. See also Figure~\ref{fig:m1m2}. 
This means that we can write the product of the kernels as a quadruple integral
 \begin{multline} \label{eq:productofkernels}
K(x_1,m_1,x_2,m_2)K(x_2,m_2,x_1,x_2)=\frac{1}{(2\pi{\rm i})^4} \oint_{\Gamma_0} \oint_{\Gamma_{1,2}}  \oint_{\Gamma_0'}  \oint_{\Gamma_{1,2}'} \\
 \frac{{\rm e}^{t(w+w')}}{{\rm e}^{t(z+z')}} 
\frac{p_{m_1}(w)p_{m_2}(w')}{p_{m_2}(z)p_{m_1}(z')}\frac{z^{x_2+m_2}z'^{x_1+m_1}}{w'^{x_2+m_2+1} w^{x_1+m_1+1}}\frac{{\rm d}z'{\rm d}w'{\rm d}z{\rm d}w}{(w-z)(w'-z')}.
\end{multline}
The location of the contours depends on wether $m_1<m_2$, $m_1=m_2$ or $m_1>m_2$. Let us first assume that $m_1<m_2$.  Then the contour $\Gamma_0$ goes around the pole $w=0$,  the contour $\Gamma_{1,2}$ goes around $z=1$ and $z=2$ and $\Gamma_0$,  the contour $\Gamma_{1,2}'$ goes around  $z'=1$ and $z'=2$ and $\Gamma_0'$  goes around  $w'=0$ and $\Gamma_{1,2}'$. 

There are two ways to compute $R$. By \eqref{eq:defR}, \eqref{eq:symR} and the fact that $m_1\neq m_2$, we can either compute $\sum_{x_1\geq y_1}\sum_{x_2<y_2}$ or $\sum_{x_1< y_1} \sum_{x_2\geq y_2}$ over the terms in \eqref{eq:productofkernels}. We choose to use the second way of summation. In that case,  we make sure that   $|w'|>|z|$ for $w\in \Gamma_0'$ and $z\in \Gamma_{1,2}$ and  $|w|<|z'|$ for $w\in \Gamma_0$ and $z'\in \Gamma_{1,2}'$.  Inserting \eqref{eq:productofkernels} into \eqref{eq:defR},  changing the order of summation and integration, and using
 \begin{align}\label{eq:lemq2}
 \frac{1}{w}\sum_{x_1< y_1} \frac{z'^{x_1}}{w^{x_1}} =\frac{1}{z'-w}, \ \text{and} \   \frac{1}{w}\sum_{x_2\geq y_2} \frac{z^{x_2}}{w'^{x_2}} =\frac{1}{w'-z},
 \end{align} 
gives  the statement in case $m_1<m_2$, with the location of  the contours as in Figure~\ref{fig:quadR}.  

The situation $m_1>m_2$ follows by the same arguments or by the symmetry in \eqref{eq:symR}.  In case $m_1=m_2$, we do not have \eqref{eq:symR}. As a result we can not choose between the two different ways of summing \eqref{eq:productofkernels}. The precise way of summing is given \eqref{eq:defR} and depends on wether $y_1<y_2$ or $y_1\geq y_2$. The statement then follows by the same  reasoning as above for the case $m_1<m_2$.
\end{proof}
\begin{remark}
The locations of the contours in Proposition are chosen for convenience in the proof of Proposition \ref{prop:asymptR1}. There are other deformations of the contours possible  such that \eqref{eq:quadR} holds. 
\end{remark}

\begin{figure}[t]
\centering
\includegraphics[width=0.4\textwidth,height=0.18\textheight]{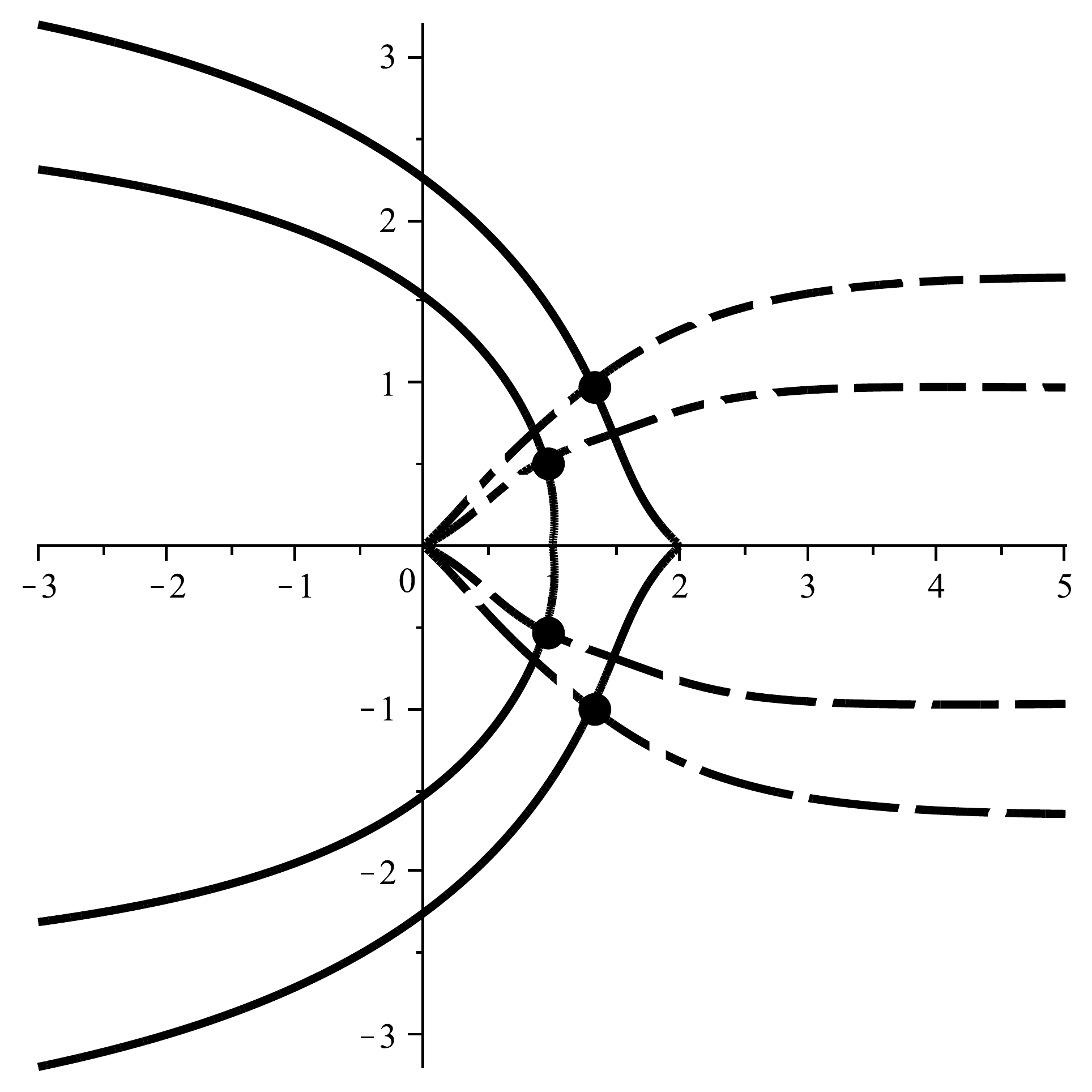}
\includegraphics[width=0.4\textwidth,height=0.18\textheight]{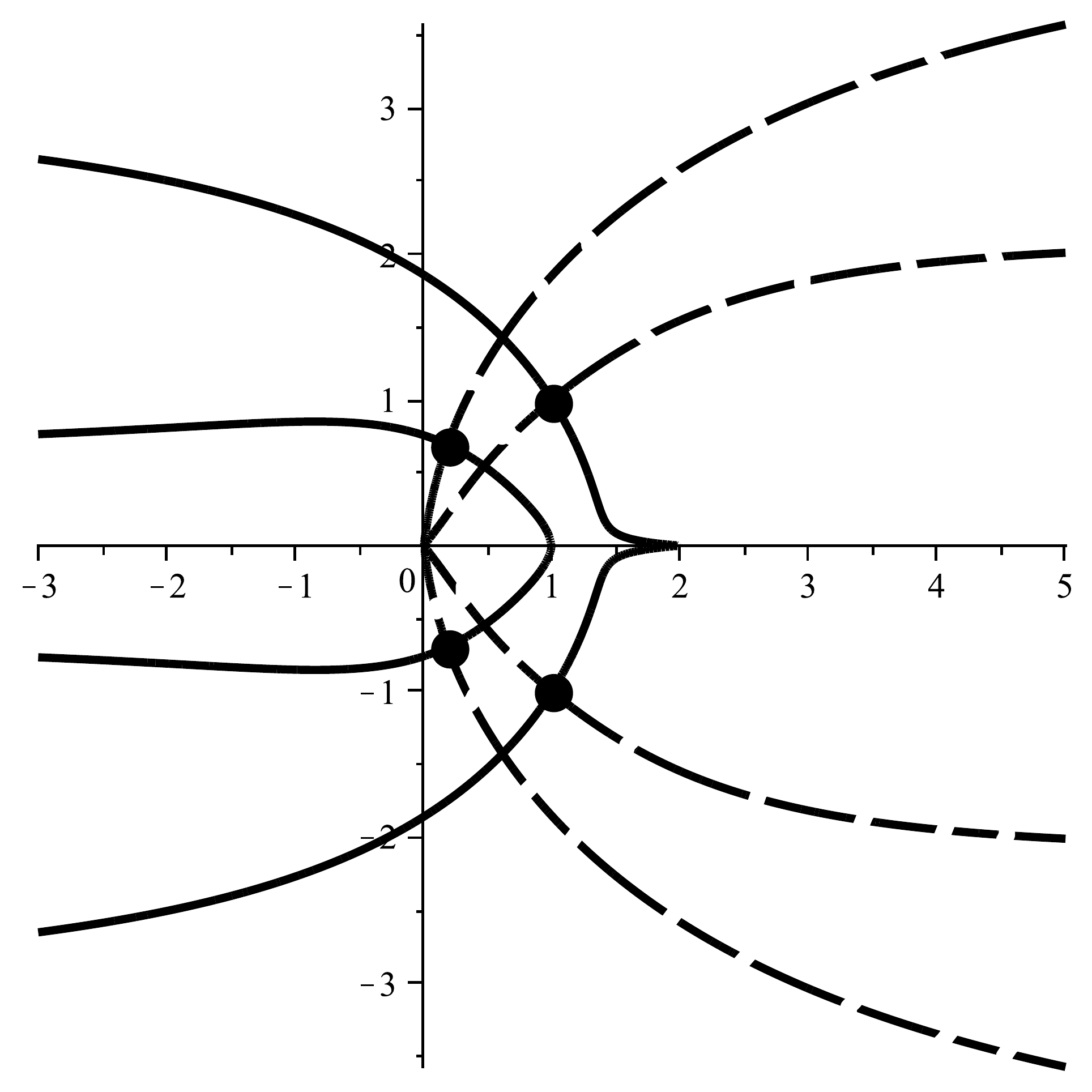}
\caption{Possible configuration of steepest descent/ascent paths for $\Re F_j$ leaving from the saddle points $\Omega_j$ (dots).}
\label{fig:quadRsteep}
 \end{figure}

We are now ready to prove Proposition \ref{prop:asymptR1}.
\begin{proof}[Proof of Proposition \ref{prop:asymptR1}]
The proof of \eqref{eq:limR1} is an elaborate exercise in steepest descent arguments on the quadruple integral representation in \eqref{eq:quadR}, using the same type of arguments as in Section \ref{sec:asymptoticsK}.  

Let us first present the main strategy.  As in Section \ref{sec:asymptoticsK},  we deform the contours  of integration  to paths of steep(est) descent/ascent for $\Re F_j$ leaving from $\Omega_j$ and $\overline \Omega_j$ (for a discussion on these paths see Section \ref{sec:asymptoticsK}.1). Each time we deform one of the contours, we possibly pick up a residue due to the intersection with another contour. By repeating this a number of times we are left with several integrals, each of which is an integral over paths of steep(est) descent/ascent of the integrands. Whenever the integrand contains an exponential, the integral tends to $0$ as $L\to \infty$. There is only one  integral that does not contain an exponential and  this gives the main contribution and the right hand side of \eqref{eq:limR1}. 

Now let us discuss this procedure in more detail for  the case $m_1<m_2$. The case $m_1>m_2$ follows by precisely the same arguments but with $\Gamma_*$ replaced by $\Gamma_*'$ and vice versa. In case $m_1=m_2$, the initial contours for $R$ are slightly different, but this does not lead to essentially new complications. 

We set $m_1<m_2$ and start with the contours as in Figure~\ref{fig:quadR}. Let $\Gamma_0$  first be a very small contour around the origin and $\Gamma_0'$ a very large contour close to infinity far away from the other contours. Deform the contours $\Gamma_{1,2}'$ and $\Gamma_{1,2}$ to paths of steep (not necessarily steepest) ascent leaving from $\Omega_1$ and $\Omega_2$ and their conjugates, such that they do not intersect with $\Gamma_0$ and $\Gamma_0'$ (yet). Then deform the contours $\Gamma_0$ and  $\Gamma_0'$ to paths of steep descent leaving from $\Omega_1$ and $\Omega_2$ and their conjugates. By doing so, we pick up several residues, and are left with quadruple, triple and double integrals. In some of these integrals, we need additional deformations, but first we will collect the various terms that we obtained.

For illustration purposes, let us also assume that for the deformed contours we still have that $\Gamma_0'$ goes around $\Gamma_0$ and that $\Gamma_{1,2}$ goes around $\Gamma_{1,2}'$ (the other situations follow by  similar arguments). See also the left picture in Figure \ref{fig:quadRsteep}, where $\Omega_2$ is the top saddle point. In this case we  are left   with  seven integrals: one quadruple integral (the same as in \eqref{eq:quadR} but now over paths of steep ascent/descent), four triple integrals over contours that are indicated in Figure~\ref{fig:finaltriple} and two double integrals given in Figure~\ref{fig:finaldouble}. 
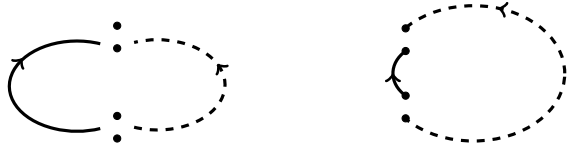
\begin{figure}[t]
\centering{
\begin{tikzpicture}[xscale=0.45,yscale=0.3]
\draw[postaction={decorate},decoration={
  markings,
  mark=at position 0.6 with {\arrow{>}}},very thick,dashed] (0,0) arc (-110:110:2);
\draw[postaction={decorate},decoration={
  markings,
  mark=at position 0.4 with {\arrow{<}}},very thick,rotate around = {180:(-1,3.7)}]  (-1,3.7) arc (-110:110:2);
\filldraw (-0.5,4.5) ellipse (3pt and 4.5pt);
\filldraw (-0.5,3.5)  ellipse (3pt and 4.5pt);
\filldraw (-0.5,0.5)  ellipse (3pt and 4.5pt);
\filldraw (-0.5,-0.5) ellipse (3pt and 4.5pt);
\end{tikzpicture} 
\begin{tikzpicture}[scale=0.45]
\draw[postaction={decorate},decoration={
  markings,
  mark=at position 0.2 with {\arrow{>}}},very thick,dashed]  (2.5,0) ellipse (2.7 and 2);
\filldraw[color=white,very thick]  (-1.5,0) ellipse (2.7 and 2);
 \draw[postaction={decorate},decoration={
  markings,
  mark=at position 0.5 with {\arrow{>}}},very thick] (0.5,-0.66) .. controls (0,-.3) and (0,.3) .. (0.5,0.66);
\filldraw (0.5,1.33) circle (3pt);
\filldraw (0.5,0.66) circle (3pt);
\filldraw (0.5,-1.33) circle (3pt);
\filldraw (0.5,-0.66) circle (3pt);
\end{tikzpicture}
\caption{The paths of integration for the two double integrals after deforming the contours.}
\label{fig:finaldouble}
}
\end{figure}

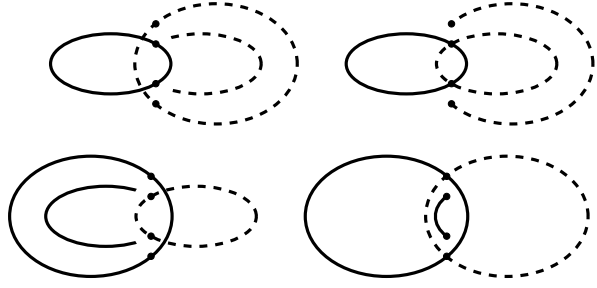
\begin{figure}[t]
\centering{
%\begin{tikzpicture}[scale=0.6]
%\draw[postaction={decorate},decoration={
%  markings,
%  mark=at position 0.4 with {\arrow{>}}},very thick] (-1,0) ellipse (2 and 1);
%\draw[postaction={decorate},decoration={
%  markings,
%  mark=at position 0.2 with {\arrow{>}}},very thick,dashed]  (2,0) ellipse (2 and 1);
%\draw[postaction={decorate},decoration={
%  markings,
%  mark=at position 0.4 with {\arrow{>}}},very thick]  (-1.5,0) ellipse (2.7 and 2);
%\draw[postaction={decorate},decoration={
%  markings,
%  mark=at position 0.2 with {\arrow{>}}},very thick,dashed]  (2.5,0) ellipse (2.7 and 2);

%\filldraw (0.5,1.33) circle (3pt);
%\filldraw (0.5,0.66) circle (3pt);
%\filldraw (0.5,-1.33) circle (3pt);
%\filldraw (0.5,-0.66) circle (3pt);
%\end{tikzpicture}

\begin{tikzpicture}[scale=0.4]
\draw[very thick,dashed]  (2,0) ellipse (2 and 1);
\filldraw[color=white,very thick]  (-1.5,0) ellipse (2.7 and 2);
\draw[very thick,dashed]  (2.5,0) ellipse (2.7 and 2);
  \draw[very thick] (-1,0) ellipse (2 and 1);

\filldraw (0.5,1.33) circle (3pt);
\filldraw (0.5,0.66) circle (3pt);
\filldraw (0.5,-1.33) circle (3pt);
\filldraw (0.5,-0.66) circle (3pt);

\end{tikzpicture}
\begin{tikzpicture}[scale=0.4]
\draw[very thick,dashed]  (2.5,0) ellipse (2.7 and 2);

\filldraw[color=white,very thick]  (-1.5,0) ellipse (2.7 and 2);
  \draw[very thick] (-1,0) ellipse (2 and 1);
  \draw[very thick,dashed]  (2,0) ellipse (2 and 1);

\filldraw (0.5,1.33) circle (3pt);
\filldraw (0.5,0.66) circle (3pt);
\filldraw (0.5,-1.33) circle (3pt);
\filldraw (0.5,-0.66) circle (3pt);
\end{tikzpicture}
\vskip 10pt
\begin{tikzpicture}[scale=0.4]
 \draw[very thick] (-1,0) ellipse (2 and 1);
\filldraw[color=white,very thick]   (2.5,0) ellipse (2.7 and 2);
\draw[very thick] (-1.5,0) ellipse (2.7 and 2);
   \draw[very thick,dashed]  (2,0) ellipse (2 and 1);
\filldraw (0.5,1.33) circle (3pt);
\filldraw (0.5,0.66) circle (3pt);
\filldraw (0.5,-1.33) circle (3pt);
\filldraw (0.5,-0.66) circle (3pt);
\end{tikzpicture}
\begin{tikzpicture}[scale=0.4]
\draw[very thick,dashed]   (2.5,0) ellipse (2.7 and 2);
\draw[very thick] (-1.5,0) ellipse (2.7 and 2);
  \draw[very thick] (0.5,-0.66) .. controls (0,-.3) and (0,.3) .. (0.5,0.66);
\filldraw (0.5,1.33) circle (3pt);
\filldraw (0.5,0.66) circle (3pt);
\filldraw (0.5,-1.33) circle (3pt);
\filldraw (0.5,-0.66) circle (3pt);
\end{tikzpicture}
}
\caption{The contours for the four triple integrals after deforming the contours.}
\label{fig:finaltriple}
\end{figure}
 
 We discuss the seven integrals that are obtained in this way, starting  with the double integrals in Figure~\ref{fig:finaldouble}. The right picture represents the integral 
\begin{align}\label{eq:finaldouble}
\frac{1}{(2\pi {\rm i})^2}\int_{\overline{\Omega}_1}^{\Omega_1} \int_{\overline{\Omega}_2}^{\Omega_2} \frac{1}{(z-z')^2} {\rm d}z {\rm d}z'.
\end{align}
The integral is easily computed and gives the Green's function at the right-hand side of \eqref{eq:limR1}.
 The double integral over the contours in left picture of Figure~\ref{fig:finaldouble} is given by
 \begin{align}\label{eq:finaldouble2}
\frac{1}{(2\pi {\rm i})^2}\int_{\overline{\eta}_1}^{\eta_1} \int_{\overline{\eta}_2}^{\eta_2} \frac{{\rm e}^{L (F_1(z)+F_2(z'))}}{{\rm e}^{L(F_2(z)+F_1(z'))}} \frac{1}{(z-z')^2} {\rm d}z' {\rm d}z,
\end{align}
where $\eta_1$ and $\eta_2$ are  the point of intersection in the upper half plane of $\Gamma_0$ and $\Gamma_{1,2}$, and the contours $\Gamma_0'$ and $\Gamma_{1,2}'$ respectively. Deform the paths of integration to arcs of circles centered at the origin with radius $|\eta_j|$. As shown in the proof of Lemma \ref{lem:I1}, these arcs are paths of steep descent and ascent for $\Re (F_1-F_2)$ leaving from $\eta_1$ and $\eta_2$  (here we use the fact that $m_1<m_2$). Moreover, since the $\eta_j$ are intersection points of contours of steep descent/ascent for $\Re F_j$, we also have
\begin{align}
\Re\left( F_1(\eta_1)-F_2(\eta_1)\right)< \Re\left( F_1(\Omega_1)-F_2(\Omega_2)\right)<\Re\left( F_1(\eta_2)-F_2(\eta_2)\right).\end{align} As a result, we see that the integral in \eqref{eq:finaldouble2}
 is exponentially small as $L\to \infty$, when $\Omega_1$ and $\Omega_2$ are sufficiently far.

 In the next step, we analyze the quadruple integral 
 \begin{align}\label{eq:finalqdruple}
 \frac{-1}{(2\pi{\rm i})^4} \oint_{\Gamma_0} \oint_{\Gamma_{1,2}}  \oint_{\Gamma_0'}  \oint_{\Gamma_{1,2}' }\frac{{\rm e}^{L(F_1(w)+F_2(w'))}}{{\rm e}^{L(F_2(z)+F_1(z'))}} 
\frac{{\rm d}z' {\rm d}w' {\rm d}z {\rm d}w}{(w-z)(w-z')(w'-z)(w'-z')},
\end{align}
which is the same integral as in  \eqref{eq:quadR} but now with deformed contours.  Since the deformed contours are of steep descent/ascent and the functions $F_j$ appear both the numerator and denominator, the integrand is bounded. In fact, the main contribution, comes from small neighborhoods around the saddle points.  As in  Section 6.3, we introduce the local variables as in \eqref{eq:localvar} around the saddle points. For example, near $\Omega_1$ and $\Omega_2$ we introduce
\begin{align}
\begin{split}
w=\Omega_1+ \frac{s}{L^{1/2} (-F_1''(\Omega_1))^{1/2}},\qquad  z=\Omega_2+ \frac{t}{L^{1/2}(F_2''(\Omega_2))^{1/2}},\\
 w'=\Omega_2+ \frac{s'}{L^{1/2}(-F_2''(\Omega_2))^{1/2}},\qquad  z'=\Omega_1+ \frac{t'}{L^{1/2}(F_1''(\Omega_1))^{1/2}}. 
\end{split}\end{align}    The scaling by $L^{1/2}$ in the local variable, implies that the integrals  tend to zero as  $L\to \infty$ (as was the case for the double integral for $K_G$ in \eqref{eq:IG21}). To this end, note that
\begin{multline}\label{last} \frac{{\rm d}w {\rm d}z {\rm d}w' {\rm d}z'}{(w-z)(w-z')(w'-z)(w'-z')}
=\frac{\pm 1 }{L F_1''(\Omega) F''(\Omega_2)} \frac{1}{(s\pm {\rm i} t')(s'\pm{\rm i}t)}\\
\times \frac{1}{ \Omega_1-\Omega_2+\frac{s}{L^{1/2} (-F_1''(\Omega_1))^{1/2}}-\frac{t}{L^{1/2} (F_2''(\Omega_2))^{1/2}}}\\
\times\frac{1}{\Omega_2-\Omega_1+\frac{s'}{L^{1/2} (F_2''(\Omega_2))^{1/2}}-\frac{t'}{L^{1/2} (F_1''(\Omega_1))^{1/2}}}.\end{multline}
%\begin{multline}\frac{{\rm d}w {\rm d}z {\rm d}w' {\rm d}z'}{(w-z)(w-z')(w'-z)(w'-z')}
%=\frac{1}{L F_1''(\Omega) F''(\Omega_2)} \frac{1}{(s- {\rm i} t')(s'-{\rm i}t)}\\
%\times\frac{1}{ (\Omega_1-\Omega_2+s\OO(L^{-1/2})-t\OO(L^{-1/2})}\\
%\times \frac{1}{(\Omega_2-\Omega_1+s'\OO(L^{-1/2})-t'\OO(L^{-1/2}))}.\end{multline}
The sign depends on the choice of the square roots, but is irrelevant in this discussion. Since by assumption the points $\Omega_1$ and $\Omega_2$ are sufficiently far apart and the fact that the right-hand side is integrable (despite the singularities), we indeed see that the quadruple integral is of order $\OO(L^{-1})$ as $L\to \infty$.  Moreover, the convergence is uniform on compact subsets of $\mathcal D$.

We now come to the triple integrals presented in Figure~\ref{fig:finaltriple}.  Let us start with the top right case. In that situation the triple integral is given by 
\begin{multline}\label{localvarR}
\oint_{\Gamma_0} \oint_{\Gamma_{1,2}'} \int_{\overline{\Omega}_2}^{\Omega_2}
\frac{{\rm e}^{L F_1(w)}}{{\rm e}^{{LF_1(z')}}} \frac{ {\rm d}z {\rm d} z'{\rm d}w}{(w-z)(w-z')(z-z')} 
%=\oint_{\Gamma_0} \oint_{\Gamma_{1,2}'}
%\frac{{\rm e}^{L F_1(w)}}{{\rm e}^{{LF_1(z')}} }\log\left(\frac{(z'-\Omega_2)(w-\overline{\Omega}_2)}{(w-\Omega_2)(z'-\overline{\Omega_2})}\right) \frac{  {\rm d} z'{\rm d}w}{(w-z')^2} 
\end{multline}
 The integral over the path from $\overline{\Omega}_2$ to $\Omega_2$ can be explicitly computed.  As for   the integrals over $\Gamma_0$ and $\Gamma_{1,2}'$, the main contribution comes again from small neighborhoods near the saddle points.  By introducing local variables as in \eqref{localvarR} and arguing as before,   it is apparent that the integral vanishes as $L\to \infty$ uniformly on compact subsets of $\mathcal D$. The same can be done for the situation in the lower right picture. 
 
  For the triple integral over contours as in the top left picture of Figure~\ref{fig:finaltriple} we have
 \begin{align}
 \oint_{\Gamma_0} \oint_{\Gamma_{1,2}} \int_{\overline{\eta}_2}^{\eta_2}
\frac{{\rm e}^{L (F_1(w)+F_2(z'))}}{{\rm e}^{{L(F_1(z')+F_2(z))}}} \frac{ {\rm d}z' {\rm d} z{\rm d}w}{(w-z)(w-z')(z-z')} 
 \end{align}
 In this triple integral we deform the path from $\overline{\eta}_2$ to $\eta_2$ to an arc of circle centered at the origin which is path of steep descent for $\Re (F_2-F_1)$ for the case $m_1<m_2$ (as mentioned earlier in the treatment of \eqref{eq:finaldouble2}. Hence, it is exponentially small as $L\to \infty$. Note that in the latter deformation we possibly pick up a residue again, which results in an additional  double integral. However, since the contours are paths of steep descent/ascent the contribution of this term, possibly after an additional deformation, can be shown to be negligible.

Concluding, from the seven multiple integrals obtained after deforming the contours, the leading term in the asymptotic expansion for $R$ comes from the  double integral \eqref{eq:finaldouble}. This integral equals the right-hand side of \eqref{eq:limR1} and we proved the statement. \\

 Finally, we come to \eqref{eq:boundR}. If $(x_j,m_j)$ are sufficiently far apart, then this bound follows from \eqref{eq:limR1}. If on the other hand the points are close, then we need to deal with singularities in the integrals above. For example, the double integral that is over the contours in the left picture of Figure~\ref{fig:finaldouble}, is no longer exponentially small. In fact, both double integrals \eqref{eq:finaldouble} and \eqref{eq:finaldouble2} grow logarithmically with $L$ as $L\to \infty$,  in case the points $\Omega_j$ come closer and closer. The fact of the matter is that  in all integrals obtained after deformation, the logarithmic behavior is however the worst that can happen.  This can be  checked by introducing local variables \eqref{localvarR} in the triple and quadruple integrals.  For example, in the quadruple integral \eqref{eq:finalqdruple} the integrand has a singular term \eqref{last}. As long as $\Omega_1\neq \Omega_2$, then this is still integrable and the result is of order $\OO(\log(\Omega_1-\Omega_2))$ as $\Omega_1-\Omega_2\to 0$.  Hence for $m_1<m_2$ we obtain \eqref{eq:boundR}.
 
The other situations $m_1=m_2$ and $m_1>m_2$ can be dealt with in a similar way. However, care should be taken in  case $(x_1,m_1)=(x_2,m_2)$ (and hence $\Omega_1=\Omega_2$). In that case, deforming the contours leads to divergent  integrals. The way around this is to perturb one of the saddle points  (and the paths of steep descent/ascent) with a term of $\OO(L^{-1})$ and argue as above.  Then we again obtain the logarithmic growth as $L\to \infty$ and this shows that \eqref{eq:boundR} holds for all points $(x_j,m_j)$ in the bulk of $ L \mathcal D$. \end{proof}

\end{document}